\def \cC{\mathcal{C}}
\def \cB{\mathcal{B}}
\def \cE{\mathcal{E}}
\def \cS{\mathcal{S}}
\def \cF{\mathcal{F}}
\def \cG{\mathcal{G}}
\def \cL{\mathcal{L}}
\def \cM{\mathcal{M}}
\def \cT{\mathcal{T}}
\def \T{T}
\def \prob{\mathsf P}
\def \Q{\mathsf Q}
\def \pprob{\widetilde{\mathsf{P}}}
\def \media{\mathsf E}
\def \mmedia{\widetilde{\mathsf{E}}}
\def \realip{\mathbb{R}_+}
\def \reali{\mathbb{R}}
\def \UU{\widehat{U}}
\def \VV{\widehat{V}}
\newcommand{\vs}{\vspace}
\newcommand{\eps}{\varepsilon}
\newtheorem{theorem}{Theorem}[section]
\newtheorem{lemma}[theorem]{Lemma}
\newtheorem{corollary}[theorem]{Corollary}
\newtheorem{proposition}[theorem]{Proposition}
\newtheorem{remark}[theorem]{Remark}
\newtheorem{Assumptions}[theorem]{Assumption}
\DeclareMathOperator*{\esssup}{ess\,sup}
\title[On the free boundary of an annuity purchase]{On the free boundary of an annuity purchase}
\author[T.~De Angelis and G.~Stabile]{Tiziano De Angelis \and Gabriele Stabile}
\subjclass[2010]{91G80, 62P05, 60G40, 35R35}
\keywords{annuity, mortality force, optimal stopping, free boundary problems}
\address{T.~De Angelis: School of Mathematics, University of Leeds, Woodhouse Lane, LS2 9JT Leeds, UK.}
\email{\href{mailto:t.deangelis@leeds.ac.uk}{t.deangelis@leeds.ac.uk}}
\address{G.~Stabile: Dipartimento di Metodi e Modelli per l'Economia, il Territorio e la Finanza, Sapienza-Universit\`{a} di Roma, Roma, Italy}
\email{\href{mailto:gabriele.stabile@uniroma1.it}{gabriele.stabile@uniroma1.it}}
\date{\today}
\numberwithin{equation}{section}
\begin{document}

\begin{abstract}
It is known that the decision to purchase an annuity may be associated to an optimal stopping problem. However, little is known about optimal strategies, if the mortality force is a generic function of time and if the \emph{subjective} life expectancy of the investor differs from the \emph{objective} one adopted by insurance companies to price annuities. In this paper we address this problem considering an individual who invests in a fund and has the option to convert the fund's value into an annuity at any time. We formulate the problem as a real option and perform a detailed probabilistic study of the optimal stopping boundary. Due to the generic time-dependence of the mortality force, our optimal stopping problem requires new solution methods to deal with non-monotonic optimal boundaries.
\end{abstract}

\maketitle

\section{Introduction}
In an ageing world an accurate management of retirement wealth is crucial for financial well being.
It is important for working individuals to carefully consider the existing offer of financial and insurance products designed for retirement, beyond the state pension. This offer includes for example occupational pension funds and tax-advantaged retirement accounts (e.g.~Individual Retirement Account (US)).
Most of these products rely on annuities to turn retirement wealth into guaranteed lifetime retirement income.
Life annuities provide a lifelong stream of guaranteed income in exchange for a (single or periodic) premium.
The purchase of an annuity helps individuals to manage the longevity risk, i.e.~the risk of outliving their financial wealth, but it is usually an irreversible transaction.
In fact, most annuity contracts impose steep penalties for partial or complete cancellation by the policyholder, especially in the early years of the contract.

Timing an annuity purchase (so-called annuitization) is a complex financial decision that depends on several risk factors as, e.g., market risk, longevity risk, potential future need of liquid funds and bequest motive.
The study of this topic has motivated a whole research field since the seminal contribution of Yaari \cite{Yaari}, who showed that individuals with no bequest motive should convert all their retirement wealth into annuities.

After Yaari, several authors have analysed the annuitization decision under the so-called \textit{all or nothing} institutional arrangement, where a lifetime annuity is purchased in a single transaction (as opposed to gradual annuitization). Initially an individual's wealth is invested in the financial market and, at the time of an annuity purchase, it is converted into a lifetime annuity. The central idea in this literature is to compare the value deriving from an immediate annuitization with the value of deferring it, while investing in the financial market. Therefore, a strict analogy holds with the problem of exercising an American option, and the annuitization decision can be considered as the exercise of a real option.

Milevsky \cite{M98} proposed a model where an individual defers annuitization for as long as the financial investment's returns guarantee a consumption flow which is at least equal to the one provided by the annuity payments. In particular \cite{M98} adopts a criterion based on controlling the probability of a consumption shortfall.

Other papers study the optimal annuitization time in the context of utility maximisation, and formulate the problem as one of optimal stopping and control. The investor aims at maximising the expected utility of consumption (pre-retirement) and of annuity payments (post-retirement).

Assuming \emph{constant force of mortality} and CRRA utility, Stabile \cite{Stabile} analytically solves a time homogeneous optimal stopping problem. He proves that if the individual has the same degree of risk aversion before and after the annuitization, then an annuity is purchased either immediately or never (the so-called \textit{now or never} policy). Instead, in case the individual is more risk averse during the annuity payout phase, the annuity is purchased as soon as the wealth falls below a constant threshold (the \emph{optimal stopping boundary}).

Constant force of mortality is also assumed in Gerrard et al.~\cite{GHV} and Liang et al.~\cite{LPG}. The model in \cite{GHV} is analogous to the one studied in \cite{Stabile}, but with quadratic utility functions, and authors find a closed-form solution: if $(X_t)_{t\ge0}$ represents the individual's wealth process, then it is optimal to stop when $X$ leaves a specific interval (hence the optimal stopping boundary is formed by the endpoints of such interval). In \cite{LPG}, in contrast to the previous papers, the authors assume that the individual may continue to invest and consume after annuitization. By using martingale methods, explicit solutions are provided in the case of CRRA utility functions. Contrarily to \cite{GHV}, in \cite{LPG} the optimal annuitization occurs when the wealth process \emph{enters} a specific interval, whose endpoints form the optimal stopping boundary.

Assuming \emph{time-dependent force of mortality}, Milevsky and Young \cite{MY} analyze both the \textit{all or nothing} market and the more general \textit{anything anytime} market, where gradual annuitization strategies are allowed. For the \textit{all or nothing} market they find that the optimal annuitization time is deterministic as an artifact of CRRA utility. Thus, the annuitization decision is independent from the individual's wealth.

Our work is more closely related to work by Hainaut and Deelstra \cite{HD}. They consider an individual whose retirement wealth is invested in a financial fund which eventually must be converted into an annuity. The fund is modelled by a jump diffusion process and pays dividends at a constant rate. The mortality force is a time-dependent, deterministic function and the individual aims at maximising the market value of future cashflows before and after annuitization. According to the insurance practice, it is assumed that the individual can only purchase the annuity by a given maximal age. Authors in \cite{HD} cast the problem as an optimal stopping one and write a variational inequality for the value function. They then use Wiener-Hopf factorisation and a time stepping method to solve the variational problem numerically. Hainaut and Deelstra argue that the decision to purchase the annuity should be triggered by either an upper or a lower, time-dependent threshold in the time-wealth plane. Here and in what follows $t$ is the state variable associated to time and $x$ is the one associated to wealth, so that the time-wealth plane is simply referred to as the $(t,x)$-plane. The threshold discussed in \cite{HD} is the optimal stopping boundary in their setting and we denote it by a mapping $t\mapsto b(t)$. Numerical examples are provided in \cite{HD} where the annuitization occurs when the value of the financial fund is high enough or, in alternative, low enough.

In this paper we perform a detailed mathematical study of the optimal stopping problem associated to an annuitization decision similar to that considered in \cite{HD}. In the interest of a rigorous analysis of the optimal stopping boundary we simplify the dynamic of the financial fund by considering a geometric Brownian motion with no jumps. As in \cite{HD} we look at the maximisation of future expected cashflows for an individual who joins the fund and has the opportunity to purchase an annuity on a time horizon $[0,T]$. Time $0$ is the time when the individual joins the fund and time $T$ is the time by which the individual reaches the maximal age for an annuity purchase. The present value of future expected cashflows, evaluated at the optimum, gives us the so-called value function $V$.

Notice that, a closer inspection of the problem formulation in \eqref{value1b} below, shows that at time $T$ the fund is converted into an annuity (the same occurs in \cite{HD}). This means that the individual will eventually purchase the annuity at time $T$, but she also has an option to buy it earlier. One could think of this feature as part of the fund's contract specifications or as commitment of the investor at time $0$. It is however important to remark that the methods developed in this paper apply also to the case $T=+\infty$, up to some minor changes (see also Remark \ref{rem:T} for further details).

One of the key features of the model presented here is the use of a rather general time-dependent, deterministic mortality force. This is a realistic assumption commonly made in the actuarial profession. As in \cite{MY}, we consider two different mortality forces: a \emph{subjective} one, used by the individual to weigh the future cashflows (denoted $\mu^S$), and an \emph{objective} one, used by the insurance company to price the annuity (denoted $\mu^O$). The interplay between these two different mortality forces contributes to some key qualitative aspects of the optimal annuitization decision (see Section \ref{numerics} for more details).
Interestingly the generic time-dependent structure of the mortality force constitutes also the major technical challenge in the mathematical study of the problem.

On the one hand, standard optimal stopping results ensure that the time-wealth plane splits into a continuation region $\cC$, where the option to wait has strictly positive value, and a stopping region $\cS$, where the annuity should be immediately purchased. Denoting by $(X_t)_{t\ge0}$ the process that represents the fund's value (or equally the individual's retirement wealth), an optimal stopping rule is given by stopping at the first time the two-dimensional process $(t,X_t)_{t\ge0}$ enters the set $\cS$. Moreover, under some mild technical assumptions these two sets are split by an optimal boundary (free boundary, in the language of PDE), which only depends on time, i.e.~$t\mapsto b(t)$.

On the other hand, technical difficulties arise when trying to infer properties of the boundary $b$. In fact, due to the generic time dependence of the mortality force, it is not possible to establish any monotonicity of the mapping $t\mapsto b(t)$. It is well known in optimal stopping and free boundary theory that monotonicity of $b$ is the key to a rigorous study of the regularity of the boundary (e.g.~continuity) and of the value function (e.g.~continuous differentiability). The interested reader may consult \cite{PS}, for a collection of examples, and the introduction in \cite{DAS}, for a deeper discussion.

We overcome this major technical hurdle by proving that the optimal boundary is in fact a locally Lipschitz continuous function of time. In order to achieve this goal we rely only on probabilistic methods which are new and specifically designed to tackle our problem. This approach draws from similar ideas in \cite{DAS}, but we emphasise that our problem falls outside the class of problems addressed in that paper (see the discussion prior to Theorem \ref{th1} below).

Once Lipschitz regularity is proven we then obtain also that the value function $V$ is continuously differentiable in $t$ and $x$, at all points of the $(t,x)$-plane and in particular across the boundary of $\cC$. This is a stronger result than the more usual smooth-fit condition, which states that $z\mapsto V_{x}(t,z)$ is continuous across the optimal boundary. Finally, we find non-linear integral equations that characterize uniquely the free boundary and the value function.

The analysis in this paper is completed by solving numerically the integral equation for some specific examples and studying their sensitivity to variations in model's parameters. It is important to remark that the optimal boundary turns out to be non monotonic in some of our examples, under natural assumptions on the parameters. This fact shows that the new approach developed in this paper is indeed really necessary to study the annuitization problem.

In summary our contribution is at least twofold. On the one hand, we add to the literature concerning annuitization problems, in the \emph{all-or-nothing} framework, by addressing models with time-dependent mortality force. As we have discussed above, and to the best of our knowledge, such models were only considered in \cite{MY} (which produces only deterministic optimal strategies) and in \cite{HD} (mostly in a numerical way). We provide a rigorous theoretical analysis of the optimal annuitization strategy, in terms of the optimal boundary $b$. Our study also reveals behaviours not captured by \cite{HD} as, e.g.,~lack of monotonicity of $b$. The latter may reflect the change over time in the investor's priorities, due to (deterministic) variations in the mortality force.
On the other hand, it is rather remarkable that we started by considering an applied problem, with a somewhat canonical and seemingly innocuous formulation, but we soon realised that its rigorous analysis is far from being trivial. Therefore we developed methods which are new in the probabilistic literature on optimal stopping and of independent interest.

Finally, in order to relate our work to the PDE literature in this area, it may be worth noticing that \cite{FS02} (and later \cite{C06}) studies a free boundary problem motivated by optimal retirement. In that paper an investor can decide to retire earlier than a given terminal time $T$. Early retirement benefits are defined by a function $\Psi(t,s)$ of time and of the current salary $s$. The problem is addressed exclusively with variational inequalities and the free boundary depends on time since $t\mapsto \Psi(t,s)$ increases linearly. However, contrarily to our model, in \cite{FS02} and \cite{C06} the mortality force is assumed constant.

The rest of the paper is organized as follows. In Section 2 we introduce the financial and actuarial assumptions and then the optimal annuitization problem. In Section 3 we provide some continuity properties of the value function and useful probabilistic bounds on its gradient. In Section 4 we illustrate sufficient conditions under which the shape of the continuation and stopping regions can be established, and we study the regularity of the optimal boundary. Moreover, we find non-linear integral equations that characterise uniquely the free boundary and the value function. In Section 5 we present some numerical examples to illustrate the range of applicability of our assumptions. In Section 6 we provide some final remarks and extensions.

\section{Problem formulation}

In our model we consider an individual (or investor) and an insurance company who are faced with two sources of randomness: a \emph{financial market} and the \emph{survival probability} of the individual. We assume that the two sources are independent and we also assume that the individual and the insurance company have different beliefs about the demographic risk, while they share the same views on the financial market. It is therefore convenient to construct initially two probability spaces: one that models the financial market and another one that models the demographic component. The time horizon of the problem is fixed and it is denoted by $T<+\infty$.

\subsection{Financial and demographic models} For the financial market we consider a complete probability space $(\Omega,\cF,\prob)$ carrying a 1-dimensional Brownian motion $(B_t)_{t\ge0}$. The filtration generated by $B$ is denoted by $(\cF_t)_{t\ge0}$ and it is augmented with $\prob$-null sets. The portion of the individual's wealth allocated for an annuity purchase, and invested in a financial fund\footnote{In what follows we make no distinction between the fund's dynamics and the wealth's dynamics.}, prior to the annuitization is modelled by a stochastic process $(X_t)_{t\ge0}$. Its dynamic reads
\begin{align}
dX_t^{x} = (\theta-\alpha)X_t^{x} dt+\sigma X_t^{x} dB_t, \qquad X_0^{x} = x> 0,
\label{wealth}
\end{align}
where $\theta$ is the average continuous return of the financial investment, $\alpha$ is the constant dividend rate and $\sigma>0$ is the volatility coefficient. 

For the demographic risk we consider another probability space: given a measurable space $(\Omega',\cF')$ we let $\Q^S$ and $\Q^O$ denote two probability measures on $(\Omega',\cF')$ and assume that $(\Omega',\cF',\Q^i)$, $i=S,O$ are both complete. The measure $\Q^S$ is associated with the \emph{subjective} survival probability of the individual. On the contrary, $\Q^O$ refers to the \emph{objective} survival probability used by the insurance company to price annuities and it is public information.

The individual is aged $\eta>0$ at time zero in our problem and this value is given and fixed throughout the paper. We define a random variable $\Gamma_D:(\Omega',\cF')\to(\mathbb{R}_+,\cB(\mathbb{R}_+))$ that represents the time of death of the individual and, for $i=S,O$ we define the hazard functions
\[
_s p^i_{\eta+t}:=\Q^i(\Gamma_D>\eta+t+s\,|\,\Gamma_D>\eta+t)
\]
with $s,t\ge 0$. These represent the subjective/objective probability that an individual who is alive at the age $\eta+t$ will survive to the age $\eta+t+s$ (we follow the standard actuarial notation for $_s p^i_{\eta+t}$).
Let $\mu^S:[0,+\infty)\to[0,+\infty)$ and $\mu^O:[0,+\infty)\to[0,+\infty)$ be deterministic functions, representing the \emph{subjective} and \emph{objective} mortality force, respectively. Then, for $i=S,O$ we have
\begin{align}\label{pS}
_s p^i_{\eta+t}=\exp \left( -\int_{0}^{s}\mu^i(\eta+t+u)du\right),\quad\text{for $t,s\ge0$}.
\end{align}
The different survival probability functions adopted by the insurer and the individual account for the imperfect information available to the insurer on the individual's risk profile.

Finally, we say that $\cM^S:=(\Omega\times\Omega',\cF\otimes\cF',\prob\times\Q^S)$ is the probability space for the individual and, for completeness, that $\cM^O:=(\Omega\times\Omega',\cF\otimes\cF',\prob\times\Q^O)$ is the probability space for the insurance company.
\begin{remark}\label{rem:detmu}
Functions $\mu^S$ and $\mu^O$ are given at outset and are not updated during the optimisation. Updating in a non trivial way would require the use of a stochastic dynamic for the mortality force which, in general, would lead to a more complex problem.
\end{remark}

\subsection{The optimisation problem}
The insurance company uses its probabilistic model, based on objective survival probabilities, to price annuities. In particular, according to standard actuarial theory, the value at time $t>0$ of a life annuity that is payable continuously at a rate of one monetary unit per year (purchased by the individual aged $\eta+t$) is given by
\[
a^O_{\eta+t}= \int_{0}^{\infty} e^{-\widehat \rho u} {}_u p^O_{\eta+t} du.
\]
Here $\widehat \rho>0$ is a constant interest rate guaranteed by the insurer.

In our model the fund is automatically converted into an annuity at time $T$ but the individual has the option to annuitize prior to $T$. If she decides to annuitize at a time $t\in[0,T]$, with the fund's value equal to $X$, then the annuity payout rate is constant and it reads
\begin{equation}\label{eq:P}
P_{\eta+t}=\frac{X-K}{a^O_{\eta+t}},
\end{equation}
where the constant $K$ is either a fixed acquisition fee ($K>0$) or a tax incentive ($K<0$). The case $K=0$ leads to trivial solutions as explained in Remark \ref{rem:K0} below.
From the modelling point of view $T<+\infty$ reflects the fact that insurance companies typically have maximum age limit for the purchase of an annuity (this is noticed also in \cite{HD}).

The optimization criterion pursued by the individual is the maximization of the present value of future expected cash-flows, via the optimal timing of the annuity purchase under the model $\cM^S$. Letting $\cE^S[\cdot]$ be the expectation under the measure $\prob\times\Q^S$, if the individual is alive at time $t$ the optimisation problem reads
\begin{align}
\label{value1b}
V_t=\esssup_{\tau \in \cT_{t,T} }\cE^S\bigg[&\! \int_{t}^{\Gamma_D \wedge \tau }\!\!e^{-\rho s}\alpha X_s ds+\mathds{1}_{\{\Gamma_D \leq \tau \}}e^{-\rho \Gamma_D} X_{\Gamma_D}\\
&\hspace{+40pt} +P_{\eta+\tau}\hspace{-3pt}\int_{\Gamma_D \wedge \tau}^{\Gamma_D}e^{-\rho s}ds\Big|\cF_t\cap\{\Gamma_D>t\} \bigg]\notag
\end{align}
where $\cT_{t,T}$ is the set of $(\cF_s)_{s\ge0}$-stopping times taking their values in $[t,T]$ and $\rho>0$ is a discount rate.
Before annuitization, i.e.~for $s<\tau$, the individual receives dividends from the fund at rate $\alpha$. After annuitization, i.e.~for $s>\tau$, she gets the continuous annuity payment at constant annual rate $P_{\eta+\tau}$. In case the individual dies before the time of the annuity purchase, i.e.~on the event $\{\Gamma_D \leq \tau\}$, she leaves a bequest equal to her wealth.

\begin{remark}\label{rem:CG}
Thanks to a result in \cite{CG12} we show in appendix that there is no loss of generality in using stopping times from $\cT_{t,T}$. That is, we obtain the same value in \eqref{value1b} as if we were using stopping times of the enlarged filtration $(\cG_t)_{t\ge 0}$ where $\cG_t=\cF_t\vee\sigma(\{\Gamma_D>s\}\,,\,0\le s\le t)$.
\end{remark}

Due to the assumed independence between the demographic uncertainty and fund's returns (i.e.\ $\Gamma_D$ independent of $(B_t)_{t\ge0}$), and since the optimisation is over $(\cF_t)_{t\ge0}$-stopping times, the value function can be rewritten by using Fubini's theorem and \eqref{pS} as
\begin{align}
\label{value2b}
V_t=\esssup_{\tau \in \cT_{t,T} }\media \bigg[\! \int_{t}^{\tau}\!\!e^{-\int_{t}^{s}r(u) du}\beta(s) X_s ds+e^{-\int_{t}^{\tau}r(u)du}G(\tau,X_\tau)\Big|\cF_t\bigg]
\end{align}
where $\media[\,\cdot\,]$ is the expectation under $\prob$, $r(t):=\rho+\mu^S(\eta+t)$, $\beta(t):=\alpha+\mu^S(\eta+t)$, $G(t,x)=f(t)(x-K)$ and
\begin{align}
\label{f} f(t)=\frac{a^S_{\eta+t}}{a^O_{\eta+t}}.
\end{align}
Here $a^S_{\eta+t}$ is the individual's subjective valuation of the annuity, i.e.
\[
a^S_{\eta+t}=\int_{0}^{\infty} e^{- \rho u} {}_u p^S_{\eta+t} du.
\]
The function $f(\cdot)$ in \eqref{f} is the so-called ``money's worth''.

Since we are in a Markovian setting we have $\media[\,\cdot\,|\cF_t]=\media[\,\cdot\,|X_t]$. In particular if $X_t=x>0$, $\prob$-a.s.~we find it convenient to use the notation $\media_{t,x}[\,\cdot\,]:=\media[\,\cdot\,|X_t=x]=\media[\,\cdot\,|\cF_t]$. Moreover, the process $X$ is time-homogeneous, so that
\[
\mathsf{Law}\big((s,X_s)_{s\ge t}\big|X_t=x\big)=\mathsf{Law}\big((t+s,X_s)_{s\ge 0}\big|X_0=x\big).
\]
Using the above notations, for any given $(t,x)\in[0,T]\times(0,+\infty)$ we can rewrite \eqref{value2b} as
\begin{align}
\label{value2}
V(t,x)=\sup_{0 \leq \tau \leq T-t} \media \bigg[\! \int_{0}^{\tau}\!\!e^{-\int_{0}^{s}r(t+u) du}\beta(t+s) X_s^x ds+e^{-\int_{0}^{\tau}r(t+u)du}G(t+\tau,X^x_\tau)\bigg],
\end{align}
where we also say that $s_1\le\tau\le s_2\iff\tau\in\cT_{s_1,s_2}$ (this should cause no confusion because all stopping times in this paper belong to $\cT_{s_1,s_2}$ for some $s_1\le s_2$).

\subsection{The variational problem}

Before closing this section we introduce the variational problem naturally associated to \eqref{value2}.
Let $\mathcal{L}$ be the second order differential
operator associated to the diffusion \eqref{wealth}, i.e.
\[
(\mathcal{L}F)(x)=(\theta-\alpha)xF_x(x)+\tfrac{\sigma^2 x^2}{2} F_{xx}(x),\quad\text{for $F\in C^{2}(\realip)$}.
\]
Assuming for a moment that $V$ is regular enough, by applying the dynamic programming principle and It\^o's formula, we expect that the value function should satisfy the following variational inequality
\begin{equation}
\label{HJB}
\textrm{max} \Big\{ (V_t+\mathcal{L}V-r(\cdot)V)(t,x)+\beta(t)x, G(t,x)-V(t,x) \Big\}= 0, \ \ \ \ \ (t,x)\in(0,T)\times\realip
\end{equation}
with terminal condition $V(T,x)=G(T,x)$, $x\in\realip$. In the rest of the paper we will show that \eqref{HJB} holds in the a.e.\ sense with $V\in C^1([0,T)\times\realip)\cap C([0,T]\times\realip)$ and $V_{xx}\in L^\infty_{loc}([0,T)\times\realip)$. Moreover we will study the geometry of the set where $V=G$, i.e.\ the so-called stopping region.

\section{Properties of the value function}\label{sec:value}
In this section we provide some continuity properties of the value function and useful probabilistic bounds on its gradient. In what follows, given a set $A\subseteq[0,T]\times\realip$ we will sometimes denote $A\cap\{t<T\}:=A\cap([0,T)\times\realip)$. Also, we make the next standing assumption in the rest of the paper.
\begin{Assumptions}
\label{ass1}
$\mu^S(\cdot)$ and $\mu^O(\cdot)$ are continuously differentiable on $[0,+\infty)$.
\end{Assumptions}
To study the optimization problem \eqref{value2} we find it convenient to introduce the function
\begin{equation}
\label{w}
v(t,x)= V(t,x)-G(t,x),\quad\text{$(t,x)\in[0,T]\times\realip$}
\end{equation}
which may be financially understood as the value of the option to delay the annuity purchase.

We can easily compute
\begin{align}\label{HG}
H(t,x):=(G_t+\mathcal{L} G-r(\cdot)G)(t,x)+\beta(t)x=g(t)x+K \ell(t),
\end{align}
where
\begin{align}
\label{g}g(t):=f'(t)+\beta(t)(1-f(t))+(\theta-\rho)f(t)\:\:\:\text{and}\:\:\: \ell(t):=r(t)f(t)-f'(t).
\end{align}
An application of It\^o's formula gives
\begin{align*}
\media & \bigg[ e^{-\int_{0}^{\tau}r(t+u)du}G(t+\tau,X_\tau^x)\bigg]\\
=&G(t,x)+\media \bigg[ \int_{0}^{\tau}e^{-\int_{0}^{s}r(t+u)du}\Big(H(t+s,X^x_s)-\beta(t+s)X^x_s\Big) ds\bigg]
\end{align*}
and therefore it is straightforward to verify (see \eqref{value2}) that
\begin{equation}
\label{Vrew2}
v(t,x)=\sup_{0 \leq \tau \leq T-t}  \media \bigg[ \int_{0}^{\tau}e^{-\int_{0}^{s}r(t+u)du}H(t+s,X_s^x) ds\bigg].
\end{equation}

Notice that \eqref{Vrew2} includes a deterministic discount rate which is not time homogeneous. Optimal stopping problems of this kind are relatively rare in the literature. They feature technical difficulties which are more conveniently handled by considering a discounted version of the problem. Hence we introduce
\begin{equation}
\label{Vrew}
w(t,x):=e^{-\int_0^t r(s)ds}v(t,x)=\sup_{0 \leq \tau \leq T-t}  \media \bigg[ \int_{0}^{\tau}e^{-\int_{0}^{t+s}r(u)du}H(t+s,X_s^x) ds\bigg].
\end{equation}
Since the problem for $w$ is equivalent to the one for $v$ and $V$, from now on we focus on the analysis of \eqref{Vrew}.

From \eqref{value2} it is clear that $V(t,x) \geq G(t,x)$, for all $(t,x)\in[0,T]\times\realip$, therefore
$w$ is non-negative. Moreover it is straightforward to check that $w(t,x)$ is finite for all $(t,x)\in[0,T]\times\realip$, thanks to well known properties of $X$ and to Assumption \ref{ass1}.

As usual in optimal stopping theory, we let
\begin{equation}
\mathcal{C}= \big\{ (t,x)\in [0,T]\times \realip : w(t,x)>0 \big\} \label{setC}
\end{equation}
and
\begin{equation}
\mathcal{S}= \big\{ (t,x)\in [0,T]\times \realip : w(t,x)=0 \big\} \label{setS}
\end{equation}
be respectively the so-called continuation and stopping regions. We also denote by $\partial\cC$ the boundary of the set $\cC$
and we introduce the first entry time of $(t,X)$ into $\mathcal{S}$, i.e.
\begin{equation}
\label{entry}
\tau_*(t,x):= \inf \left \{ s \in [0,T-t] : (t+s,X_s^x) \in \mathcal{S} \right \},\quad\text{for $(t,x)\in[0,T]\times\realip$}.
\end{equation}

Since $(t,x)\mapsto H(t,x)$ is continuous, it is not difficult to see that for any fixed stopping time $\widetilde{\tau}\ge 0$, setting $\tau:=\widetilde{\tau}\wedge(\T-t)$, the map
\begin{align*}
(t,x)\mapsto \media \bigg[ \int_{0}^{\tau}e^{-\int_{0}^{t+s}r(u)du}H(t+s,X^x_s) ds\bigg]
\end{align*}
is continuous as well. It follows that $w$ is lower semi-continuous and therefore $\cC$ is open and $\cS$ is closed. Moreover, the finiteness of $w$ and standard optimal stopping results (see \cite[Cor.~2.9, Sec.~2]{PS}) guarantee that \eqref{entry} is optimal for $w(t,x)$.

For future frequent use we introduce here a new probability measure $\pprob$ defined by its Radon-Nikodym derivative
\begin{equation}
\label{Ptilde}
Z_t=\frac{d\,\pprob}{d\,\prob}\Big|_{\mathcal{F}_t}=\exp \left\{\sigma B_t-\frac{\sigma^2}{2} t \right\}, \quad t \geq 0,
\end{equation}
and notice that
\begin{align}\label{XZ}
X_t^{x}=x\,Z_t\, e^{(\theta-\alpha)t},\qquad t\ge 0.
\end{align}
It is well known that $\prob$ and $\pprob$ are equivalent on $\cF_t$ for all $t\in[0,\T]$.
\begin{remark}\label{rem:K0}
In case $K=0$ in \eqref{eq:P}, problem \eqref{Vrew} reduces to a deterministic problem. Noticing that
\[
\mathds{1}_{\{s<\tau\}}Z_s=\mathds{1}_{\{s<\tau\}}\media\left[Z_\tau|\cF_s\right]=\media\left[\mathds{1}_{\{s<\tau\}}Z_\tau|\cF_s\right]\,,
\]
because $\{\tau>s\}$ is $\cF_s$-measurable, and thanks to Fubini's theorem, one has
\begin{align*}
w(t,x)=&\,x \, \sup_{0 \leq \tau \leq T-t}  \media \bigg[ \int_{0}^{T-t}\!e^{-\int_{0}^{t+s}r(u)du}g(t+s)Z_s\mathds{1}_{\{s<\tau\}}e^{(\theta-\alpha)s} ds\bigg]\\[+4pt]
=&\,x \, \sup_{0 \leq \tau \leq T-t}  \media \bigg[ \int_{0}^{T-t}\!e^{-\int_{0}^{t+s}r(u)du}g(t+s)\media\left[ \mathds{1}_{\{s<\tau\}}Z_\tau\big|\cF_s\right] e^{(\theta-\alpha)s} ds\bigg]\\[+4pt]
=&\,x \, \sup_{0 \leq \tau \leq T-t}  \media \bigg[ Z_\tau\! \int_{0}^{\tau}\!e^{-\int_{0}^{t+s}r(u)du}g(t+s)e^{(\theta-\alpha)s} ds\bigg].
\end{align*}
Then, using $Z_\tau$ to change measure (cf.~\eqref{Ptilde}) we obtain
\[
w(t,x)=x \, \sup_{0 \leq \tau \leq T-t}  \mmedia \bigg[ \int_{0}^{\tau}e^{-\int_{0}^{t+s}r(u)du}g(t+s)e^{(\theta-\alpha)s} ds\bigg].
\]
The latter is equivalent to the deterministic problem of maximising the function
\[
F(t+\cdot\,):=\int_{0}^{\,\cdot}e^{-\int_{0}^{t+s}r(u)du}g(t+s)e^{(\theta-\alpha)s} ds.
\]
As a result the optimal annuitization time only depends on $t$ (as it happens in \cite{MY}).
\end{remark}

\begin{remark}\label{rem:T}
If we allow $T=+\infty$ and assume that
\[
\media\left[\int_0^\infty \! \!e^{-\int_{0}^{t}r(u)du}|H(t,X_t)|dt\right]<+\infty,
\]
then our problem \eqref{Vrew} remains well posed. We notice that the finite horizon $T$ only features in \eqref{Vrew} as part of the definition of the admissible stopping times. Then it is intuitively clear that the major mathematical challenges related to the time-dependency of \eqref{Vrew} arise from the properties of the map $t\mapsto e^{-\int_{0}^{t}r(u)du}H(t,x)$. Since such properties remain unchanged under $T=+\infty$, it follows that the analysis presented below may be extended to the case $T=+\infty$ after minor tweaks.
\end{remark}

The next proposition starts to analyse the regularity of $w$, and it provides a probabilistic characterization for its gradient which will be crucial for our subsequent analysis of the boundary of $\cC$.
\begin{proposition}\label{prop:bounds}
The value function $w$ is convex in $x$ for each $t\in[0,T]$ and it is locally Lipschitz continuous on $[0,T]\times\reali_+$. Moreover for a.e.\ $(t,x)\in[0,T]\times\realip$ we have
\begin{align}\label{wx}
w_x(t,x)=\mmedia \bigg[ \int_{0}^{\tau_*}e^{-\int_{0}^{t+s}r(u)du} g(t+s) e^{(\theta-\alpha)s}  ds\bigg]
\end{align}
and there exists a uniform constant $C>0$ such that
\begin{align}\label{bwt}
\hspace{-3pc}-C\left(1+\frac{1}{T-t}\right)\hspace{-3pt} \left(x\, \widetilde{\media}\,\tau_*+ \media\,\tau_*\right) \leq w_t(t,x) \leq C\hspace{-3pt} \left(x\, \widetilde{\media}\,\tau_*+ \media\,\tau_*\right)
\end{align}
\end{proposition}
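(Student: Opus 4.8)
The plan is to exploit the affine dependence of $X^x_s=xZ_s e^{(\theta-\alpha)s}$ on $x$ (recall \eqref{XZ}, \eqref{Ptilde}) and a systematic comparison of problem \eqref{Vrew} started from the nearby points $(t,x)$, $(t,x+h)$ and $(t\pm h,x)$. Throughout I would write $\phi(s):=e^{-\int_0^{s}r(u)du}$ (so $\phi\le1$ since $r=\rho+\mu^S(\eta+\cdot)\ge0$), and I will repeatedly use the change of measure of Remark \ref{rem:K0}, namely $\media\big[\int_0^{\tau}Z_s\psi(s)ds\big]=\mmedia\big[\int_0^{\tau}\psi(s)ds\big]$ for deterministic $\psi$ and bounded $\tau$. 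By Assumption \ref{ass1} one has $a^S,a^O\in C^2$, hence $f\in C^2([0,T])$ and thus $g,\ell\in C^1([0,T])$; in particular $r,g,\ell,g',\ell'$ are bounded on $[0,T]$.

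\emph{Convexity and $w_x$.} Since $H(t,x)=g(t)x+K\ell(t)$ is affine in $x$ by \eqref{HG}, for every fixed $\tau$ and $t$ the integral inside the supremum in \eqref{Vrew} is affine in $x$, so $w(t,\cdot)$ is a supremum of affine functions and therefore convex. To obtain \eqref{wx} I would use $\tau_*:=\tau_*(t,x)$ (optimal for $w(t,x)$ by the optimal-stopping results quoted after \eqref{entry}, and admissible for $w(t,x\pm h)$) as a competitor for $w(t,x\pm h)$; using $X^{x\pm h}_s-X^x_s=\pm hZ_s e^{(\theta-\alpha)s}$ and the change of measure gives, for $h>0$,
\[
\frac{w(t,x-h)-w(t,x)}{-h}\ \le\ \mmedia\Big[\int_0^{\tau_*}\phi(t+s)\,g(t+s)\,e^{(\theta-\alpha)s}\,ds\Big]\ \le\ \frac{w(t,x+h)-w(t,x)}{h}.
\]
Being convex and finite, $w(t,\cdot)$ is differentiable at a.e.\ $x$, and at such a point the two one-sided derivatives coincide and are squeezed by the display above, which yields \eqref{wx}. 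Since $\tau_*\le T$ and $g$ is bounded, the right-hand side of \eqref{wx} is bounded by a constant depending only on the data, so $w(t,\cdot)$ is Lipschitz with a uniform local constant.

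\emph{$w_t$.} The two time directions are not symmetric, which is what produces the asymmetric bound \eqref{bwt}. For $h>0$, $\tau_*=\tau_*(t,x)$ is admissible for $w(t-h,x)$ \emph{without} truncation (as $\tau_*\le T-t\le T-(t-h)$); using it as competitor and the fact that $u\mapsto\phi(u)H(u,y)$ is $C^1$ with $|\partial_u(\phi H)(u,y)|\le C_0(1+y)$ uniformly on $[0,T]$, one gets $w(t,x)-w(t-h,x)\le hC_0\,\media\big[\int_0^{\tau_*}(1+X^x_s)\,ds\big]\le Ch\,(x\,\mmedia\tau_*+\media\tau_*)$, which is the upper bound in \eqref{bwt}. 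In the opposite direction $\tau_*$ need not be admissible for $w(t+h,x)$, so I would truncate to $\tau_h:=\tau_*\wedge(T-t-h)$ and write
\begin{align*}
w(t+h,x)-w(t,x)\ \ge\ & \media\Big[\int_0^{\tau_h}\!\big(\phi(t+h+s)H(t+h+s,X^x_s)-\phi(t+s)H(t+s,X^x_s)\big)ds\Big]\\
&-\media\Big[\int_{\tau_h}^{\tau_*}\!\phi(t+s)H(t+s,X^x_s)\,ds\Big].
\end{align*}
The first term is $\ge -Ch(x\,\mmedia\tau_*+\media\tau_*)$ exactly as above (using $\tau_h\le\tau_*$). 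The second term vanishes on $\{\tau_*\le T-t-h\}$, and on the complement it is the integral of $|H|\le C_0(1+X^x_s)$ over a window of length $\le h$; using that $\{\tau_*>T-t-h\}\in\cF_{T-t-h}$ and the martingale property of $Z$ to reinstate $\mmedia$, followed by the elementary monotonicity bounds $\media\tau_*\ge(T-t-h)\,\prob(\tau_*>T-t-h)$ and $\mmedia\tau_*\ge(T-t-h)\,\pprob(\tau_*>T-t-h)$, one finds that for $h\le(T-t)/2$ this term is at most $\tfrac{Ch}{T-t}(x\,\mmedia\tau_*+\media\tau_*)$ in absolute value. This is precisely where the factor $1+\tfrac{1}{T-t}$ in \eqref{bwt} originates.

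\emph{Conclusion and main obstacle.} The incremental estimates above are uniform on compact subsets of $[0,T]\times\realip$: $\mmedia\tau_*,\media\tau_*\le T-t$ and $(1+\tfrac{1}{T-t})(T-t)=1+(T-t)$ stays bounded as $t\uparrow T$, so the apparent singularity at the horizon is harmless and $w$ is locally Lipschitz in $t$ on all of $[0,T]\times\realip$; combined with Lipschitz continuity in $x$ this gives joint local Lipschitz continuity, hence $w_t$ exists a.e. At a point of differentiability, letting $h\downarrow0$ in the forward estimate gives the lower bound in \eqref{bwt} and in the backward estimate gives the upper bound. I expect the delicate step to be the control of the residual-window term $\media\big[\int_{\tau_h}^{\tau_*}\phi(t+s)H(t+s,X^x_s)\,ds\big]$ near $t=T$, together with the extraction from it of the factor $1/(T-t)$; the remainder is convexity plus routine comparison inequalities and the change of measure \eqref{Ptilde}.
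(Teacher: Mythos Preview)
Your proposal is correct and follows essentially the same route as the paper: convexity from the affine structure of $H$, the squeeze for $w_x$ using $\tau_*(t,x)$ as competitor on both sides, the backward difference with $\tau_*(t,x)$ for the upper bound on $w_t$, and the forward difference with the truncated stopping time $\tau_*\wedge(T-t-h)$ plus Markov's inequality for the residual-window term to get the lower bound. The only cosmetic differences are that the paper handles the window term via the mean value theorem rather than your direct integral estimate, and you make explicit (which the paper leaves implicit) that $(1+\tfrac{1}{T-t})\tau_*\le 1+(T-t)$ keeps the Lipschitz constant bounded at the horizon.
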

\begin{proof}
\emph{1. (convexity)} Since $x\mapsto e^{-\int_0^t r(u)du}H(t,x)$ is linear then it is not difficult to show that for $x,y\in\realip$, $\lambda\in(0,1)$ and $x_\lambda:=\lambda x+(1-\lambda)y$ we have
\begin{align*}
&\media\left[\int^\tau_0e^{-\int_0^{t+s} r(u)du}H(t+s,X^{x_\lambda}_s)ds\right]\\
&=\lambda\media\left[\int^\tau_0e^{-\int_0^{t+s} r(u)du}H(t+s,X^{x}_s)ds\right]+(1-\lambda)\media\left[\int^\tau_0e^{-\int_0^{t+s} r(u)du}H(t+s,X^{y}_s)ds\right]\\
&\le \lambda w(t,x)+(1-\lambda)w(t,y),
\end{align*}
for any stopping time $\tau$. Taking the supremum over $\tau\in[0,T-t]$ the claim follows.
\vspace{+5pt}

\emph{2. (Lipschitz continuity)}
Fix $(t,x)\in[0,T]\times\realip$ and pick $\eps>0$. First we show that
\begin{align}\label{Lip-x}
|w(t,x\pm \eps)-w(t,x)|\le c\,\eps,
\end{align}
with $c>0$ independent of $(t,x)$.

Let $\tau_*=\tau_*(t,x)$ be optimal in $w(t,x)$, hence admissible and sub-optimal in $w(t,x+\eps)$, so that we have
\begin{align}\label{wx1}
w(t,x+\eps)-w(t,x)&\geq\media \bigg[ \int_{0}^{\tau_*}e^{-\int_{0}^{t+s}r(u)du}\left(H(t+s,X_s^{x+\eps}) -H(t+s,X_s^{x})\right) ds\bigg] \nonumber\\
&= \eps\media \bigg[ \int_{0}^{\tau_*}e^{-\int_{0}^{t+s}r(u)du} g(t+s) \frac{X_s^{x+\eps}-X_s^{x}}{\eps} ds\bigg]
\nonumber\\
&= \eps\media \bigg[ \int_{0}^{\tau_*}e^{-\int_{0}^{t+s}r(u)du} g(t+s) X^1_s ds\bigg]\\
&= \eps\mmedia \bigg[ \int_{0}^{\tau_*}e^{-\int_{0}^{t+s}r(u)du} g(t+s)e^{(\theta-\alpha)s}  ds\bigg],\nonumber
\end{align}
where for the last equality we used \eqref{XZ}. For the upper bound we repeat the above argument with $\tau_\eps^{+}:=\tau_*(t,x+\eps)$ optimal for $w(t,x+\eps)$ and find
\begin{align*}
w(t,x+\eps)-w(t,x)& \leq \eps \mmedia \bigg[ \int_{0}^{\tau_\eps^{+}}e^{-\int_{0}^{t+s}r(u)du} g(t+s) e^{(\theta-\alpha)s} ds\bigg].\nonumber
\end{align*}
Since $\tau_*$ and $\tau^+_\eps$ are smaller than $T-t$ we have $|w(t,x+\eps)-w(t,x)|\leq c\, \eps$ for a suitable $c>0$ independent of $(t,x)$.
By applying symmetric arguments we can also prove that $|w(t,x-\eps)-w(t,x)|\leq c\, \eps$ so that \eqref{Lip-x} holds.
For future reference we also notice that
\begin{align}\label{wx-}
w(t,x)-w(t,x-\eps)\leq&\media \bigg[ \int_{0}^{\tau_*}e^{-\int_{0}^{t+s}r(u)du}\left(H(t+s,X_s^{x}) -H(t+s,X_s^{x-\eps})\right) ds\bigg] \nonumber\\
=& \eps\mmedia \bigg[ \int_{0}^{\tau_*}e^{-\int_{0}^{t+s}r(u)du} g(t+s)e^{(\theta-\alpha)s}  ds\bigg]
\end{align}

Next we show that for all $\delta>0$, $x\in\realip$ and any $t\in[0,T-\delta]$ we have
\begin{align}\label{Lip-t}
|w(t\pm\eps,x)-w(t,x)|\le c_\delta\,\eps
\end{align}
for some $c_\delta>0$ only depending on $\delta$ and all $\eps\le T-t$.

Let $\tau_*=\tau_*(t,x)$ be optimal in $w(t,x)$, pick $\eps>0$ and define $\nu_\eps:=\tau_*\wedge (\T-t-\eps)$. Since $\nu_\eps$ is admissible and sub-optimal for $w(t+\eps,x)$ we get
\begin{align}
\label{uc0}
w&(t+\eps,x)-w(t,x)\nonumber\\
& \geq\media \bigg[ \int_{0}^{\nu_\eps}
e^{-\int_{0}^{t+\eps+s} r(u) du}H(t+\eps+s,X_s^{x})ds-\int_{0}^{\tau_*}e^{-\int_{0}^{t+s} r(u)du}
H(t+s,X_s^{x})ds\bigg] \nonumber\\
 &=\media \bigg[\int_{0}^{\nu_\eps} \Big( e^{-\int_{0}^{t+\eps+s} r(u) du}H(t+\eps+s,X_s^{x})-e^{-\int_{0}^{t+s} r(u)du}
H(t+s,X_s^{x}) \Big)ds  \bigg]\nonumber\\
 &\quad \hspace{0.2cm}-\media\bigg[\int_{\nu_\eps}^{\tau_*} e^{-\int_{0}^{t+s} r(u)du}
H(t+s,X_s^{x}) ds             \bigg].
\end{align}
Now we use that
\begin{align}
\label{uc4}
&\left|e^{-\int_{0}^{t+\eps+s} r(u) du}H(t+\eps+s,X_s^{x})-e^{-\int_{0}^{t+s} r(u)du}
H(t+s,X_s^{x})\right|\nonumber\\
&\leq \int^{\eps}_{0}\left|\frac{d}{d\,z}e^{-\int_{0}^{t+s+z} r(u)du}
H(t+s+z,X_s^{x}) \right|d\,z\nonumber\\
&\le\int^{\eps}_{0}\left| -r(t+s+z)H(t+s+z,X_s^{x}) +
\frac{\partial}{\partial z}H(t+s+z,X_s^{x})
\right|d\,z\nonumber\\
&\le c_1\left(1+ X_s^{x}\right)\eps
\end{align}
where the last estimate follows by \eqref{HG}, \eqref{g} and Assumption \ref{ass1}, with a uniform constant $c_1>0$ (recall that $r(\cdot)\ge 0$). Plugging the above expression in the first term of \eqref{uc0} and using the mean value theorem for the second term we get
\begin{align*}
w&(t+\eps,x)-w(t,x)\\
\ge& \media\bigg[- c_1\eps\int_0^{\nu_\eps} \left(1+ X_s^{x}\right)d\,s-H(t+\zeta,X_{\zeta}^{x})   (\tau_*-\nu_\eps)\bigg]\nonumber
\end{align*}
where $\zeta(\omega)\in(\nu_\eps(\omega),\tau_*(\omega))$. Notice that
\[
0\le (\tau_*-\nu_\eps)\le \eps\mathds{1}_{\{\tau_*\ge \T-t-\eps\}}
\]
and from \eqref{HG} one obtains the bound
\begin{align*}
\left| H(t+\zeta,X_{\zeta}^{x})   (\tau_*-\nu_\eps)\right|\leq c_1(1+ X_\zeta^{x})\eps \mathds{1}_{\{\tau_*\ge \T-t-\eps\}}.
\end{align*}
In conclusion, noticing that $\nu_\eps\le\tau_*$ and recalling \eqref{Ptilde} and \eqref{XZ}, by a change of measure we have
\begin{align}
\label{uc2}
w(t+\eps,x)-w(t,x)&\ge - c_1\eps\media\bigg[\int_0^{\tau_*} \left(1+ X_s^{x}\right)d\,s+(1+ X_\zeta^{x})\mathds{1}_{\{\tau_*\ge \T-t-\eps\}}\bigg]\nonumber\\
\ge& - C\eps\media\bigg[\int_0^{\tau_*} \left(1+ xZ_s\right)d\,s+(1+ xZ_\zeta)\mathds{1}_{\{\tau_*\ge \T-t-\eps\}}\bigg]\nonumber\\
\ge& - C\eps\bigg(
\media\,\tau_*+ x\,\mmedia\,\tau_*+\prob(\tau_*\ge \T-t-\eps)+ x\pprob(\tau_*\ge \T-t-\eps)
\bigg)
\end{align}
for a different constant $C>0$. Using the Markov inequality we obtain $\prob(\tau_*\ge \T-t-\eps)\le \media\,\tau_*/(\T-t-\eps)$ and $\pprob(\tau_*\ge \T-t-\eps)\le \mmedia\,\tau_*/(\T-t-\eps)$, which plugged back into \eqref{uc2} give
\begin{align}
\label{lbeps}
w(t+\eps,x)-w(t,x) \ge - C\eps\bigg(\media\,\tau_*+ x\,\mmedia\,\tau_*\bigg)\bigg(1+\frac{1}{\T-t-\eps}\bigg)
\end{align}
By using similar estimates and by observing that $\sigma_\eps^{+}:=\tau_*(t+\eps,x)$ is admissible and sub-optimal for $w(t,x)$ we get
\begin{align}
\label{wup}
w&(t+\eps,x)-w(t,x) \nonumber \\
& \leq\media \bigg[\int_{0}^{\sigma_\eps^{+}} \Big( e^{-\int_{0}^{t+\eps+s} r(u) du}H(t+\eps+s,X_s^{x})-e^{-\int_{0}^{t+s} r(u)du}
H(t+s,X_s^{x}) \Big)ds     \bigg]\nonumber\\
&\le c_1\eps\media\bigg[\int_0^{\sigma_\eps^{+}} \left(1+ X_s^{x}\right)d\,s\bigg]\le C\eps\bigg(\media\,\sigma_\eps^{+}+x\,\mmedia\,\sigma_\eps^{+}\bigg)
\end{align}
where we have used again \eqref{uc4} and the change of measure.

Symmetric arguments then give
\begin{align}
\label{wup1}
w(t,x)-w(t-\eps,x) \le C\eps\bigg(\media\,\tau_*+x\,\mmedia\,\tau_*\bigg)
\end{align}
and
\begin{align}
\label{lbld}
w(t,x)-w(t-\eps,x) \ge - C\eps\bigg(\media\,\sigma_\eps^{-}+ x\,\mmedia\,\sigma_\eps^{-}\bigg)\bigg(1+\frac{1}{\T-t}\bigg)
\end{align}
with $\sigma_\eps^{-}:=\tau_*(t,x-\eps)$.

Equations \eqref{lbeps}--\eqref{lbld} imply \eqref{Lip-t} and combining \eqref{Lip-x} and \eqref{Lip-t} we conclude that $w\in C([0,T]\times\realip)$, locally Lipschitz and differentiable a.e.\ in $[0,T)\times\realip$.
\vspace{+5pt}

\emph{3. (gradient bounds)} Let $(t,x)\in[0,T)\times\realip$ be a point of differentiability of $w$. Dividing \eqref{wx1} and \eqref{wx-} by $\eps$ and letting $\eps\to 0$ gives \eqref{wx}, as claimed. Moreover, dividing \eqref{lbeps} by $\eps$ and letting $\eps\to0$ we obtain the lower bound in \eqref{bwt}. Finally, dividing \eqref{wup1} by $\eps$ and letting $\eps\to 0$ we obtain the upper bound in \eqref{bwt}.

\end{proof}

Continuity of $w$ and standard optimal stopping theory guarantee that, for all $t\in[0,T]$, the process
\begin{align}\label{martw}
W_s:=w(t+s,X^x_{s})+\int_0^{s}e^{-\int_0^{t+u}r(v)dv}H(t+u,X^x_u)du,
\end{align}
is a continuous super-martingale for all $s\in[0,T-t]$ and $(W_{s\wedge\tau_*})_{s\ge0}$ is a martingale.

The next corollary follows by standard PDE arguments used normally in optimal stopping literature, see, e.g.\ \cite[Thm.\ 2.7.7]{KS98}.
\begin{corollary}\label{cor:C12}
The function $w$ is $C^{1,2}$ inside $\cC$ and it solves the following boundary value problem:
\begin{align}
\label{pde1} & (w_t+\cL w)(t,x) = -e^{-\int_0^tr(u)du} H(t,x), & (t,x)\in\cC, \\
\label{pde2} & w(t,x)= 0, & (t,x)\in\partial\cC\cap\{t<T\},\\
\label{pde3} & w(T,x) = 0, & x\in\realip.
\end{align}
\end{corollary}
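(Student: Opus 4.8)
The statement is local in $\cC$, so the plan is to establish $C^{1,2}$-regularity together with \eqref{pde1} on an arbitrary small neighbourhood compactly contained in $\cC$, by comparing $w$ there with the classical solution of an auxiliary Cauchy--Dirichlet problem and showing the two coincide via the martingale property of \eqref{martw}. The boundary conditions \eqref{pde2}--\eqref{pde3} will then follow at once from the definitions and the continuity of $w$.

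In detail, fix $(t_0,x_0)\in\cC$ and choose a bounded open rectangle $R=(t_1,t_2)\times(a,b)$ with $(t_0,x_0)\in R$, $\overline R\subset\cC$ and $0<a<b$ (note $\overline R\subset\cC$ forces $t_2<T$ since $w(T,\cdot)\equiv0$). On $\overline R$ the operator $\cL$ is uniformly elliptic (because $\sigma^2x^2/2$ is bounded away from zero there) with smooth coefficients, and by Assumption \ref{ass1} the source $(t,x)\mapsto -e^{-\int_0^tr(u)du}H(t,x)=-e^{-\int_0^tr(u)du}\big(g(t)x+K\ell(t)\big)$ is Hölder continuous on $\overline R$. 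Since $w\in C(\overline R)$ by Proposition \ref{prop:bounds}, classical interior Schauder theory for linear parabolic equations (see, e.g.\ \cite{KS98} and references therein) provides a unique $u\in C^{1,2}(R)\cap C(\overline R)$ solving $u_t+\cL u=-e^{-\int_0^tr(u)du}H$ in $R$ with $u=w$ on the lateral and terminal parts of $\partial R$. Let $\varrho$ be the first exit time of $s\mapsto(t_0+s,X^{x_0}_s)$ from $R$; since $\overline R\subset\cC$ we have $\varrho\le\tau_*(t_0,x_0)$ and $(t_0+\varrho,X^{x_0}_\varrho)$ lies on the portion of $\partial R$ where $u=w$. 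Applying It\^o's formula to $u$ along $(t_0+s,X^{x_0}_s)$ up to $\varrho$ (all integrands being bounded on $\overline R$) and taking expectations yields
\[
u(t_0,x_0)=\media\Big[u(t_0+\varrho,X^{x_0}_\varrho)+\int_0^{\varrho}e^{-\int_0^{t_0+s}r(v)dv}H(t_0+s,X^{x_0}_s)\,ds\Big].
\]
On the other hand $(W_{s\wedge\tau_*})_{s\ge0}$ in \eqref{martw} is a martingale, so optional sampling at $\varrho\le\tau_*$ gives exactly the same expression for $w(t_0,x_0)$ once we use $w=u$ on $\partial R$. Hence $w\equiv u$ on $R$, so $w\in C^{1,2}(R)$ and \eqref{pde1} holds at $(t_0,x_0)$; since $(t_0,x_0)\in\cC$ was arbitrary, the interior statement is proved.

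For the boundary conditions, \eqref{pde3} is immediate from \eqref{Vrew}: when $t=T$ the only admissible stopping time is $\tau\equiv0$, whence $w(T,x)=0$ for all $x\in\realip$. For \eqref{pde2}, a point $(t,x)\in\partial\cC$ with $t<T$ does not belong to the open set $\cC$, so by the definition \eqref{setC} and continuity of $w$ (Proposition \ref{prop:bounds}) we get $w(t,x)=0$.

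The only genuinely delicate point is the identification $w\equiv u$ on $R$: it hinges on the martingale property of $(W_{s\wedge\tau_*})_{s\ge0}$ together with the inclusion $\varrho\le\tau_*$, which guarantees that $W$ stays a martingale up to the exit time $\varrho$ from $R$ so that optional sampling is legitimate. Everything else — uniform ellipticity, Hölder regularity of coefficients and source (via Assumption \ref{ass1}), and solvability and uniqueness for the Cauchy--Dirichlet problem on $R$ — is standard parabolic theory, and the boundedness of all quantities on the compact $\overline R$ makes the use of It\^o's formula and of optional sampling routine.
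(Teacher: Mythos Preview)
Your proposal is correct and follows exactly the standard PDE argument the paper defers to (see \cite[Thm.~2.7.7]{KS98}): solve the Cauchy--Dirichlet problem on a small rectangle in $\cC$, represent both $u$ and $w$ via It\^o/optional sampling up to the exit time (using the martingale property of \eqref{martw} and $\varrho\le\tau_*$), and conclude $w\equiv u\in C^{1,2}$ with \eqref{pde1}; the boundary conditions \eqref{pde2}--\eqref{pde3} are immediate from the definitions.
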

\noindent It may appear that \eqref{pde1} is given in a slightly unusual form, but one should remember that $w(t,x)=e^{-\int_0^tr(u)du}v(t,x)$ (see \eqref{Vrew}) so that for $v$ we obtain the more canonical expression
\begin{align}\label{pde4}
(v_t+\cL v-r(\cdot))(t,x) = - H(t,x), \qquad (t,x)\in\cC.
\end{align}

The next technical lemma states some properties of $w$ that will be useful to study the regularity of the boundary $\partial \cC$.
Its proof is given in Appendix.
\begin{lemma}\label{lem:limw0}
Assume $g(t)< 0$ for $t\in(0,\T)$. Then
\begin{itemize}
\item[(i)] $x\mapsto w(t,x)$ is non-increasing for all $t\in[0,\T]$,
\item[(ii)] for any $t\in[0,\T]$ it holds
\begin{align}\label{limw0}
\lim_{x\to\infty}w(t,x)=0,
\end{align}
\item[(iii)] for all $t_1<t_2$ in $[0,T]$ we have $\cS\cap((t_1,t_2)\times\realip)\neq\emptyset$.
\end{itemize}
\end{lemma}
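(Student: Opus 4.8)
The plan is to prove (i), (ii), (iii) in that order, the last one carrying essentially all of the work. Throughout I use that, under Assumption \ref{ass1}, $g$ and $\ell$ are continuous on $[0,\T]$, so the hypothesis $g<0$ on $(0,\T)$ upgrades to $g\le 0$ on all of $[0,\T]$. For (i) I would run a pathwise comparison: by \eqref{XZ}, $0<x\le y$ gives $X^x_s\le X^y_s$ for all $s$, $\prob$-a.s., and since $g\le 0$ the map $x\mapsto H(t,x)=g(t)x+K\ell(t)$ is non-increasing for every $t\in[0,\T]$ (cf.~\eqref{HG}). Hence, for any admissible $\tau$,
\[
\media\Big[\int_0^\tau e^{-\int_0^{t+s}r(u)du}H(t+s,X^y_s)\,ds\Big]\le\media\Big[\int_0^\tau e^{-\int_0^{t+s}r(u)du}H(t+s,X^x_s)\,ds\Big]\le w(t,x),
\]
and taking the supremum over $\tau$ on the left yields $w(t,y)\le w(t,x)$. (Equivalently one may quote $w_x\le 0$ a.e.\ from \eqref{wx} together with the continuity of $w$ from Proposition \ref{prop:bounds}.)

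For (ii), since $g(t+s)X^x_s\le 0$ we have $H(t+s,X^x_s)^+\le (K\ell(t+s))^+\le C$ with $C:=|K|\sup_{[0,\T]}|\ell|<\infty$, so from \eqref{Vrew}, $0\le w(t,x)\le\media\big[\int_0^{\T-t}e^{-\int_0^{t+s}r(u)du}H(t+s,X^x_s)^+\,ds\big]$. For a.e.\ $s$ one has $g(t+s)<0$, hence $H(t+s,y)^+=0$ as soon as $y>\phi(s):=(K\ell(t+s))^+/(-g(t+s))<\infty$; thus $H(t+s,X^x_s)^+\le C\,\mathds 1_{\{X^x_s\le\phi(s)\}}$ and
\[
0\le w(t,x)\le C\int_0^{\T-t}\prob\big(X^x_s\le\phi(s)\big)\,ds.
\]
By \eqref{XZ}, $X^x_s=xZ_se^{(\theta-\alpha)s}$ with $Z_s>0$ $\prob$-a.s., so $\prob(X^x_s\le\phi(s))\to 0$ as $x\to\infty$ for a.e.\ $s$; dominated convergence (integrand bounded by $1$ on $[0,\T-t]$) then gives \eqref{limw0}. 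The case $t=\T$ is immediate since $w(\T,\cdot)\equiv 0$.

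For (iii) I would argue by contradiction. Suppose $\cS\cap((t_1,t_2)\times\realip)=\emptyset$, i.e.\ $(t_1,t_2)\times\realip\subseteq\cC$, and set $t_0:=\tfrac12(t_1+t_2)$ and $\kappa:=\tfrac12(t_2-t_0)$. For every $x>0$ the graph $(t_0+s,X^x_s)_{s\ge 0}$ stays in $(t_1,t_2)\times\realip\subseteq\cC$ for all $s<t_2-t_0$, regardless of the path, so $\tau_*(t_0,x)\ge t_2-t_0$ deterministically; since $\tau_*$ is optimal for $w(t_0,x)$ this forces $w(t_0,x)\le\sup_{\tau\ge t_2-t_0}\media\big[\int_0^{\tau}e^{-\int_0^{t_0+s}r(u)du}H(t_0+s,X^x_s)\,ds\big]$. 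Now split $[0,\tau]=[0,\kappa]\cup[\kappa,\tau]$: on $[\kappa,\tau]$ bound the integrand by $(K\ell)^+\le C$, yielding a contribution $\le C(\T-t_0)$, independent of $x$ and $\tau$; on $[0,\kappa]$ we have $t_0+s\in(t_1,t_2)\subseteq(0,\T)$, hence $g(t_0+s)\le -c_0<0$ by continuity. Using $e^{-\int_0^{t_0+s}r(u)du}\in[e^{-R\T},1]$ with $R:=\sup_{[0,\T]}r$, separating according to whether $X^x_s\ge 2C/c_0$ or not, and using $\media\int_0^\kappa X^x_s\,ds=x\int_0^\kappa e^{(\theta-\alpha)s}ds=:x\gamma$ with $\gamma>0$, the $[0,\kappa]$ contribution is bounded by $-\tfrac{e^{-R\T}c_0\gamma}{2}\,x+c_2+C\int_0^\kappa\prob(X^x_s<2C/c_0)\,ds$ for some constant $c_2\ge 0$ independent of $x$ and $\tau$. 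As the last integral tends to $0$ when $x\to\infty$, the right-hand side becomes negative for $x$ large, so $w(t_0,x)<0$, contradicting $w\ge 0$. Hence $\cS\cap((t_1,t_2)\times\realip)\neq\emptyset$.

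I expect (iii) to be the real obstacle: the point is that being trapped in $\cC$ over a whole time strip forces the investor to hold the running reward $H(t+s,X^x_s)$ — which for large wealth is strongly negative on a set of times of positive length — over a macroscopic horizon, which is incompatible with the non-negativity of the option value $w$. Parts (i) and (ii) are comparatively routine, the only mild care being the dominated-convergence step in (ii) and, in (iii), keeping track that all error constants are genuinely independent of the (otherwise unbounded family of) competing stopping times $\tau$.
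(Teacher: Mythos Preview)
Your argument is correct and, in fact, more elementary than the route taken in the paper. For part (i) both approaches coincide. For (ii) and (iii), however, the two proofs differ substantially.

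The paper establishes (iii) \emph{before} (ii) and does so via the super/martingale property of the process in \eqref{martw}: it introduces the auxiliary level set $\{H<-\delta_0\}$ with boundary $\gamma_0$ (cf.~\eqref{gammadelta}) and the corresponding exit time $\tau_0$ (cf.~\eqref{tau0}), then runs a two-step localisation. Step~1 shows $\cS\cap([t,T)\times\realip)\neq\emptyset$ by stopping the martingale at $\tau_0$ and using $\tau_0\to T-t$ in probability as $x\to\infty$; Step~2 upgrades this to arbitrary strips $(t_1,t_2)$ by picking a point $(t',x')\in\cS$ with $t'\ge t_2$ and combining $\tau_0$ with the hitting time $\tau'$ of $\{x'\}$. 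Only afterwards is (ii) deduced, again through the martingale identity and the already-established (iii).

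Your approach bypasses the martingale machinery entirely and works straight from the definition \eqref{Vrew}. For (ii) you bound $w$ by $\int_0^{T-t}e^{-\ldots}H^+\,ds$ and observe that $H(t+s,\cdot)^+$ is supported on a finite interval for a.e.~$s$; dominated convergence then does the job without any reference to $\cS$. For (iii) the contradiction comes from a single integral estimate: if the strip $(t_1,t_2)\times\realip$ were contained in $\cC$, any competing stopping time at $t_0$ would dominate $t_2-t_0$, and on the deterministic sub-interval $[0,\kappa]$ one has $g(t_0+\cdot)\le-c_0<0$, forcing the running payoff to behave like $-c\,x$ while the remaining contribution on $[\kappa,\tau]$ stays bounded uniformly in $\tau$ (because $gX\le 0$ everywhere). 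This yields $w(t_0,x)<0$ for large $x$, contradicting $w\ge 0$.

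Your method is shorter and uses only the sign structure of $H$; the paper's method is more in the spirit of optimal stopping (evaluating $w$ through optional sampling) and makes the role of the stopping set explicit, which dovetails with how (ii) is later used in Theorem~\ref{th1}. One minor simplification: in your treatment of the $[0,\kappa]$ integral you do not actually need the case-split at $2C/c_0$; since $g(t_0+s)\le-c_0$ and $e^{-\int_0^{t_0+s}r}\ge e^{-RT}$, the direct bound $e^{-\ldots}H(t_0+s,X^x_s)\le -c_0e^{-RT}X^x_s+|K\ell(t_0+s)|$ already gives, after taking expectations, $-c_0e^{-RT}x\gamma+C\kappa$.
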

\noindent It is worth noticing that $(iii)$ does not exclude that there may exists $t\in(0,T)$ such that $\cS\cap(\{t\}\times\realip)=\emptyset$.


\section{Properties of the optimal boundary}\label{sec:freeb}
In this section we provide sufficient conditions for the boundary $\partial\cC$ to be represented by a function of time $b$. We establish connectedness of the sets $\cC$ and $\cS$ with respect to the $x$ variable and finally study Lipschitz continuity of $t\mapsto b(t)$. It is worth emphasizing that this study is mathematically challenging because of the lack of monotonicity of the map $t\mapsto b(t)$ and falls outside the scope of the existing probabilistic literature on optimal stopping and free boundary problems. In Section \ref{numerics} we show that the Gompertz-Makeham mortality law (a mainstream model in actuarial science) leads naturally to the set of assumptions that we make below.

An initial insight on the shape of $\cC$ is obtained by noticing that the set
\begin{align}\label{R}
\mathcal{R}:=\{(t,x)\in[0,T]\times\realip: H(t,x)>0\}
\end{align}
is contained in $\cC$. In fact if $(t,x)\in\mathcal R$ then
\[
w(t,x)\ge \media\left[\int_0^{\tau_{\mathcal{R}}}e^{-\int_0^{t+s}r(t+u)du}H(t+s,X^x_s)ds\right]>0
\]
with $\tau_{\mathcal{R}}$ the first exit time of $(t+s,X^x_s)$ from $\mathcal R$. For all $t\in[0,T]$ such that $g(t)\neq 0$ the boundary $\partial\mathcal{R}$ is given by the curve
\begin{align}\label{gamma}
\gamma(t):=-K \ell(t)/g(t).
\end{align}
Moreover, for each $t\in[0,T]$, we denote the $t$-section of $\mathcal R$ by
\begin{align}\label{Rt}
\mathcal{R}_t:=\{x\in\realip: (t,x)\in\mathcal R\}.
\end{align}

\begin{remark}\label{Ran}
\

{\bf (1)} Note that if $g(\cdot)$ and $K\ell(\cdot)$ in \eqref{HG} have the same sign on $[0,T]$ (i.e.~$\gamma(\,\cdot\,)\le 0$), then either $\mathcal{R}=\emptyset$ or $\mathcal{R}=[0,T]\times \realip$. In the former case $\mathcal{S}=[0,T]\times \realip$, while in the latter we have $\mathcal{C}=[0,T]\times \realip$.

{\bf (2)} Recall \eqref{g}. Consider $t\in[0,T)$ such that $g(t)<0$, i.e.\ $\ell(t)-\beta(t)-(\theta-\alpha)f(t)>0$. Then if $\theta\ge \alpha$ and $K>0$, i.e. $K$ is an acquisition fee, we have that $\mathcal R_t=[0,\gamma(t))$ and
\[
\gamma(t)=K \frac{\ell(t)}{\ell(t)-\beta(t)-(\theta-\alpha)f(t)}>K.
\]
The latter means that the annuity should not be purchased if the individual's wealth is less or equal than $K$ and the funds value has a positive trend (net of dividend payments).
\end{remark}
\noindent Motivated by {\bf (1)} in the remark above we will later assume that $\gamma(\cdot)>0$ on $[0,T]$.
\vspace{+4pt}

Now we illustrate sufficient conditions under which the probabilistic representation \eqref{wx} easily provides the shape of the continuation and stopping regions.
\begin{proposition}
\label{prop1}
If $g(\cdot)$ does not change its sign on $[0,T]$, the stopping region is characterized by a free boundary
\begin{align}
b:[0,T]\to\realip\cup\{+\infty\}.
\end{align}
In particular
\begin{itemize}
\item[(i)] If $g(t)\ge 0$ for all $t\in[0,T]$ then $\mathcal{S}=\{(t,x)\in [0,T]\times \realip: x\le b(t)\}$.
\item[(ii)] If $g(t)\le 0$ for all $t\in[0,T]$ then $\mathcal{S}=\{(t,x)\in [0,T]\times \realip: x\ge b(t)\}$.
\end{itemize}
\end{proposition}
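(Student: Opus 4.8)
The plan is to exploit the probabilistic representation \eqref{wx} of $w_x$, which expresses the $x$-derivative as an expectation of an integral whose integrand has the sign of $g(t+s)$. Assume first that $g(t)\ge 0$ for all $t\in[0,T]$ (case (i)). Then the integrand in \eqref{wx} is non-negative, hence $w_x(t,x)\ge 0$ for a.e.\ $(t,x)$, and by continuity of $w$ we conclude that $x\mapsto w(t,x)$ is non-decreasing for every fixed $t$. Since $w\ge 0$ everywhere, this monotonicity immediately gives that each section $\cS_t:=\{x\in\realip:\,(t,x)\in\cS\}=\{x:\,w(t,x)=0\}$ is a \emph{down-set}: if $w(t,x_0)=0$ and $x<x_0$ then $0\le w(t,x)\le w(t,x_0)=0$. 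Therefore $\cS_t$ is an interval of the form $[0,b(t)]$ (or $\{0\}$, or $\realip$, or empty), and we may define
\[
b(t):=\sup\{x\in\realip:\,(t,x)\in\cS\},
\]
with the convention $b(t)=+\infty$ when the section is all of $\realip$ and, say, $b(t)=0$ when the section is empty (one should fix the endpoint-inclusion convention once and for all; since $\cS$ is closed, whenever $b(t)<\infty$ the point $(t,b(t))$ belongs to $\cS$, so $\cS_t=[0,b(t)]$ with the closed endpoint). This yields $\cS=\{(t,x):\,x\le b(t)\}$ as claimed. Case (ii), $g(t)\le 0$ on $[0,T]$, is entirely symmetric: \eqref{wx} gives $w_x\le 0$, so $x\mapsto w(t,x)$ is non-increasing, each section $\cS_t$ is an \emph{up-set} $[b(t),+\infty)$, and we set $b(t):=\inf\{x:\,(t,x)\in\cS\}$, obtaining $\cS=\{(t,x):\,x\ge b(t)\}$. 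Note that in case (ii) the relevant monotonicity is exactly statement (i) of Lemma \ref{lem:limw0}, which is available under the hypothesis $g<0$; for $g\le 0$ with equality possible one still gets $w_x\le 0$ directly from \eqref{wx}.

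A small technical point is that \eqref{wx} holds only for a.e.\ $(t,x)$, whereas the conclusion is about every section. This is handled by a standard density argument: for fixed $t$, the a.e.-monotonicity in $x$ of $w(t,\cdot)$ (valid for a.e.\ $t$) combined with continuity of $w(t,\cdot)$ for every $t$ and joint continuity of $w$ upgrades to monotonicity of $x\mapsto w(t,x)$ for \emph{every} $t\in[0,T]$; alternatively one invokes convexity of $w(t,\cdot)$ from Proposition \ref{prop:bounds} together with the sign of $w_x$ on the full-measure set to pin down the monotonicity everywhere. Either way no new difficulty arises.

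The main obstacle — and it is a mild one at this stage — is simply to phrase the endpoint conventions so that the stated set identities are literally correct in all degenerate configurations (empty stopping region, stopping region equal to the whole half-line, sections that degenerate to the single point $x=0$), and to make sure the case $g\equiv 0$ on a subinterval, or $g$ vanishing at isolated times, does not break the argument: there $w_x=0$ a.e.\ is still consistent with $w(t,\cdot)$ being (weakly) monotone, so the section is still an interval of the required type. No regularity of $b$ is asserted here — continuity and Lipschitz bounds are deferred to the later results — so the proof is genuinely just "sign of $w_x$ $\Rightarrow$ monotone sections $\Rightarrow$ $\cS$ is a sub/super-graph," and the entire content is the observation that \eqref{wx} delivers the sign of $w_x$ from the sign of $g$.
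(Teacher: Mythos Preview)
Your proof is correct and follows essentially the same approach as the paper: use the representation \eqref{wx} to read off the sign of $w_x$ from the sign of $g$, deduce monotonicity of $x\mapsto w(t,x)$, and conclude that each $t$-section of $\cS$ is an interval of the required type. The paper's own argument is a one-line version of this (``$w_x\ge 0$ implies $(t,x)\in\cC\Rightarrow (t,x')\in\cC$ for $x'\ge x$''), while you have been more explicit about the a.e.\ issue and the endpoint conventions, but the content is the same.
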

\begin{proof}
In case \textit{(i)} $w_x(t,x)\geq 0$ and this implies
\begin{align*}
(t,x)\in\cC\implies (t,x')\in\cC\quad\text{for all $x'\ge x$},
\end{align*}
and the claim follows.
A similar argument applies to case \textit{(ii)}.
\end{proof}

For each $t\in[0,T]$ we denote $\mathcal S_t:=\{x\in\realip\,:\, (t,x)\in\cS\}$ and $\cC_t=\realip\setminus\cS_t$. These are the so-called $t$-sections of the stopping and continuation set, respectively. Clearly $\cS_t=[0,b(t)]$ under $(i)$ and $\cS_t=[b(t),+\infty)$ under $(ii)$ of the above proposition.

In the rest of the paper we make the next standing assumption. This will hold in all the results below without explicit mention. In Sections \ref{numerics} and \ref{sec:final} we discuss its range of applicability and some extensions.
\begin{Assumptions}\label{ass:g}
Assume $\gamma(t)>0$, for all $t\in[0,\T]$ and its limit $\gamma(T):=\lim_{t\to T}\gamma(t)$ exists (possibly infinite). Moreover we have
\[
\text{either {\bf (i)}: $g(t) > 0$ for all $t\in(0,\T)$, or {\bf (ii)}: $g(t) < 0$ for all $t\in(0,\T)$.}
\]
\end{Assumptions}

\begin{remark}\label{rem:gg}
It is worth discussing a financial/demographic interpretation of our assumption. We start by noticing that the money's worth $f(\cdot)$ should change slowly as function of time. Indeed in models with deterministic mortality force like ours it is unreasonable to imagine that the individual or the insurer may have drastic changes in their views on survival probabilities (this could instead be the case in stochastic mortality models).  Moreover, it is reasonable that an individual who is healthier than the average of the population at outset, will remain so over time, so that the map $t\mapsto f(t)-1$  should not change sign (this again would be less reasonable in stochastic mortality models).
Finally, we observe that as $t\to\infty$ one has $f(t)\to 1$, because eventually the individual and the insurer should agree on the survival probabilities.

As explained, $f'(\cdot)$ is small, hence the leading terms in $g(\cdot)$ are those involving $\beta$ and $\theta-\rho$ (see \eqref{g}). In particular they highlight the interplay between the financial and demographic risk for the individual.
On the one hand, $\beta(t)=\alpha+\mu^S(\eta+t)$ in \eqref{value2b} can be interpreted as the dividends arising from the financial investment adjusted by the demographic risk -- notice that an individual with high subjective mortality force is less likely to annuitize and would rather enjoy the return on a financial investment; on the other hand, $\theta-\rho$ is closely related to the risk premium of the financial investment. In $g(\cdot)$ these two terms are weighted exactly with $1-f(\cdot)$ and $f(\cdot)$, respectively. Moreover, $g(\cdot)$ in \eqref{Vrew2} represents the marginal gain, per unit of investment in stock, arising from delaying the annuitization by one time unit.

In light of all considerations so far (recall that $f(\cdot)-1$ does not change sign and $f'$ is almost negligible), it is clear from the formula in \eqref{g} that if $f(t)>1$ ($f(t)<1$) for all $t\in(0,\T)$, i.e.~the individual finds the annuity underpriced (overpriced), and $\theta<\rho$ ($\theta>\rho$) then $g(\cdot)$ remains negative (positive) on $[0,T]$. This conclusion is financially clear because an underpriced annuity and the prospect of a negative risk premium would give the investor a negative marginal gain (per unit of stock) from delaying the annuity purchase (and viceversa).

Finally, if $f(t)>1$ ($f(t)<1$) for all $t\in(0,\T)$ and $\theta>\rho$ ($\theta<\rho$) then $g(\cdot)$ may change its sign. However, if for example $t\mapsto \beta(t)$ is monotonic (i.e.~the mortality force is monotonic), then the change in sign is unlikely to occur more than once, due to the slow variation in time of the money's worth $f$. Motivated by this observation, we conduct numerical experiments in Section \ref{numerics} using Gompertz-Makeham mortality force that suggest that, when a change in the sign of $g(\cdot)$ occurs, this is likely to be over a time period longer than 20 years. It is therefore mainly for mathematical interest that we allow $g(\cdot)$ to vanish at time $T$. Indeed, we will also show in Section \ref{sec:final} that this extra flexibility enables us to extend our results to some cases when $g(\cdot)$ changes its sign once on $(0,T)$.
\end{remark}

\begin{remark}\label{rem:g}
Positivity of $\gamma(t)$ in Assumption \ref{ass:g} rules out the cases when $\mathcal{S}=[0,T]\times \realip$ or $\mathcal{C}=[0,T]\times \realip$ (see point {\bf (1)} of Remark \ref{Ran}).

Conditions (i) and (ii) in Assumption \ref{ass:g} are indeed supported by numerical experiments illustrated in Figures \ref{fig1} and \ref{fig3} in Section \ref{numerics}, where $g(\cdot)$ does not change sign or it changes sign at most once.
\end{remark}

Notice that Assumptions \ref{ass1} and \ref{ass:g} imply $\gamma\in C(0,T)$.
It is not hard to verify that the following properties hold.
\begin{lemma}\label{lem:obs}
\
\begin{itemize}
\item[(a)] If $g(\cdot)>0$ on $(0,\T)$ then for each $t\in(0,\T)$ there exists $g_0(t)>0$ such that
\begin{align}\label{obs1}
\int_0^sg(t+u)du\ge g_0(t)\,s\qquad\text{for all $s\in[0,\T-t]$}.
\end{align}
\item[(b)]
If $g(\cdot)<0$ on $(0,\T)$ then for each $t\in(0,\T)$ there exists $g_0(t)>0$ such that
\begin{align}\label{obs2}
\int_0^sg(t+u)du\le -g_0(t)\,s\qquad\text{for all $s\in[0,\T-t]$}.
\end{align}
\end{itemize}
\end{lemma}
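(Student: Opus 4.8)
The plan is to reduce both assertions to a one-line compactness argument applied to the function
$\phi_t(s):=\int_0^s g(t+u)\,du$ on the compact interval $[0,\T-t]$. First I would record that $g(\cdot)$, defined in \eqref{g}, is continuous on $[0,\T]$: under Assumption \ref{ass1} the hazard functions in \eqref{pS} are smooth in time, so the annuity values $a^S_{\eta+\cdot}$ and $a^O_{\eta+\cdot}$ and hence $f(\cdot)$ are $C^1$, and $\beta(\cdot)$ is $C^1$ as well; this continuity of $g$ (and of $\ell$) is moreover already used in Section \ref{sec:value}. In particular, at the fixed point $t\in(0,\T)$ of case (a) we have $g(t)>0$.

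For (a), fix $t\in(0,\T)$ and set $\phi_t(s):=\int_0^s g(t+u)\,du$ for $s\in[0,\T-t]$. Then $\phi_t\in C^1([0,\T-t])$ with $\phi_t(0)=0$ and $\phi_t'(s)=g(t+s)>0$ for $s\in[0,\T-t)$, so $\phi_t$ is strictly increasing; in particular $\phi_t(s)>0$ for every $s\in(0,\T-t]$. The key point is that the ratio $s\mapsto \phi_t(s)/s$ extends continuously to $s=0$, with value $\lim_{s\downarrow 0}\phi_t(s)/s=\phi_t'(0)=g(t)>0$. Hence $s\mapsto \phi_t(s)/s$ is continuous and strictly positive on the compact set $[0,\T-t]$, so it attains a strictly positive minimum $g_0(t):=\min_{0\le s\le \T-t}\phi_t(s)/s>0$, which yields $\int_0^s g(t+u)\,du=\phi_t(s)\ge g_0(t)\,s$ for all $s\in[0,\T-t]$; this is \eqref{obs1}. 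For (b), note that $-g(\cdot)>0$ on $(0,\T)$, apply the argument just given to $-g$, and multiply the resulting inequality by $-1$ to obtain \eqref{obs2}.

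There is no real obstacle here; the only two places that require a moment's care are: (i) the continuity of $g$ up to the endpoints, which is needed both so that $g(t)>0$ makes sense and so that the extended ratio is continuous on the closed interval $[0,\T-t]$; and (ii) the strict positivity $\phi_t(\T-t)>0$ in the borderline case $g(T)=0$ permitted by Assumption \ref{ass:g} — this is immediate, since $\phi_t(\T-t)=\int_0^{\T-t}g(t+u)\,du$ is the integral of a function that is strictly positive on the open interval $(0,\T-t)$, hence strictly positive.
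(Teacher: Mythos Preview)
Your argument is correct and is precisely the natural compactness proof; the paper itself does not give a proof of this lemma, stating only that ``it is not hard to verify that the following properties hold.'' Your care with the two borderline issues---continuity of $g$ up to the endpoints and strict positivity of $\phi_t(\T-t)$ when $g(\T)=0$---is exactly what is needed.
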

The next simple lemma will be useful in what follows.
\begin{lemma}\label{lem:wt0}
If $K\ell(t)\le 0$ for all $t\in[0,T]$ then $\lim_{x\to0}w(t,x)=0$ for all $t\in[0,T]$. Hence $[0,T]\times\{0\}\subseteq\cS$.
\end{lemma}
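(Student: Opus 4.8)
The plan is to exploit the probabilistic representation of $w$ in \eqref{Vrew} and, in particular, the sign of the integrand $H(t,x)=g(t)x+K\ell(t)$ near $x=0$. First I would fix $t\in[0,T]$ and observe that since $X^x_s=xZ_se^{(\theta-\alpha)s}$ with $Z$ not depending on $x$, the process $X^x$ is non-decreasing in $x$ for each $\omega$ and converges to $0$ uniformly on compacts (in $s$) as $x\to0$. Consequently, for any admissible stopping time $\tau\le T-t$,
\begin{align*}
\media\bigg[\int_0^\tau e^{-\int_0^{t+s}r(u)du}H(t+s,X^x_s)ds\bigg]
= \media\bigg[\int_0^\tau e^{-\int_0^{t+s}r(u)du}\big(g(t+s)X^x_s+K\ell(t+s)\big)ds\bigg].
\end{align*}
Under the hypothesis $K\ell(t)\le 0$ for all $t$, the second term is non-positive, hence
\begin{align*}
\media\bigg[\int_0^\tau e^{-\int_0^{t+s}r(u)du}H(t+s,X^x_s)ds\bigg]
\le \media\bigg[\int_0^\tau e^{-\int_0^{t+s}r(u)du}g(t+s)\,X^x_s\,ds\bigg]
\le \media\bigg[\int_0^{T-t} e^{-\int_0^{t+s}r(u)du}|g(t+s)|\,X^x_s\,ds\bigg],
\end{align*}
where the last bound is crude but uniform in $\tau$. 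Taking the supremum over $\tau$ gives $0\le w(t,x)\le x\,\media\big[\int_0^{T-t}e^{-\int_0^{t+s}r(u)du}|g(t+s)|\,Z_s e^{(\theta-\alpha)s}ds\big]$, and the expectation on the right is finite and independent of $x$ (it is finite thanks to $\media[Z_s^2]<\infty$, continuity of $g$ on the compact $[t,T]$, and Assumption \ref{ass1}; one may also change measure via \eqref{Ptilde} to rewrite it as $x\,\mmedia[\int_0^{T-t}e^{-\int_0^{t+s}r(u)du}|g(t+s)|e^{(\theta-\alpha)s}ds]$). Letting $x\to0$ yields $\lim_{x\to0}w(t,x)=0$, and since $w\ge0$ and $w$ is continuous on $[0,T]\times\realip$ (Proposition \ref{prop:bounds}), we conclude $w(t,0)=0$, i.e.\ $(t,0)\in\cS$ for every $t\in[0,T]$, which is precisely $[0,T]\times\{0\}\subseteq\cS$.

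The argument above is essentially the only route I see, and there is no real obstacle: the single point that deserves a line of care is the interchange/uniformity needed to pass the limit $x\to0$ through the supremum, but this is handled by the uniform-in-$\tau$ domination $\int_0^\tau(\cdots)ds\le\int_0^{T-t}(\cdots)ds$ displayed above, so dominated convergence (or simply the explicit linear-in-$x$ bound) closes the estimate. One could alternatively note that $x=0$ is an absorbing point for \eqref{wealth} and that for $x=0$ the running payoff reduces to $\int_0^\tau e^{-\int_0^{t+s}r(u)du}K\ell(t+s)ds\le 0$, so the optimal choice is $\tau=0$ and $w(t,0)=0$ directly; then continuity of $w$ gives the limit as a byproduct rather than an input. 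Either way the proof is short.
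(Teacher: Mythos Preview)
Your argument is correct and follows essentially the same route as the paper: drop the non-positive term $K\ell(\cdot)$, bound the remaining integral uniformly in $\tau$ by $\int_0^{T-t}|g(t+s)|\,\media[X^x_s]\,ds$, and let $x\to0$. The paper's proof is the one-line version of exactly this estimate.
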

\begin{proof}
Recalling \eqref{HG} and using dominated convergence we obtain
\begin{align*}
0\le \lim_{x\to 0}w(t,x)\le \lim_{x\to0}\int_0^{T-t}|g(t+s)|\media\left[X^x_s\right]ds=0.
\end{align*}
\end{proof}

Next we show that the optimal boundary is locally Lipschitz continuous on $[0,T)$, hence also bounded on any compact.
Some of the ideas in the proof below are borrowed from Theorem 4.3 in \cite{DAS}. However we cannot directly apply results of \cite{DAS} since Condition (D) therein corresponds to require that there exists $c>0$ such that
\[
\left|\tfrac{\partial}{\partial t}\left(e^{-\int_0^t r(s)ds}H(t,x)\right)\right|\le c(1+|g(t)|)\quad\text{for all $(t,x)\in[0,T]\times\realip$.}
\]
The latter bound is clearly impossible in our setting since the left hand side in the equation above is linear in $x$.
\begin{theorem}
\label{th1}
The optimal boundary $b(\cdot)$ is locally Lipschitz continuous on $[0,T)$.
\end{theorem}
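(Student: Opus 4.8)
\textbf{Proof strategy for Theorem \ref{th1}.}
The plan is to show that the boundary $b(\cdot)$ cannot oscillate too fast by comparing the value function at two nearby time slices, exploiting the probabilistic gradient bounds from Proposition \ref{prop:bounds} together with the observation (Lemma \ref{lem:obs}) that $\int_0^s g(t+u)\,du$ grows at least linearly in $s$. I will treat case \textbf{(ii)} of Assumption \ref{ass:g}, i.e.\ $g(\cdot)<0$ on $(0,T)$, so that $\cS_t=[b(t),+\infty)$; case \textbf{(i)} is symmetric. Fix a compact $[0,T_0]\subset[0,T)$ and a point $(t_0,x_0)$ with $x_0=b(t_0)$. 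I want to find $L>0$ (depending only on $T_0$) such that $|b(t)-b(t_0)|\le L|t-t_0|$ for $t$ near $t_0$.

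The core of the argument is a quantitative lower bound on the ``local time spent in $\cC$'' near the boundary, which in turn forces $w$ and its derivatives to be nondegenerate. Concretely, I would first establish that $-w_x$ is bounded below by a strictly positive quantity on $\partial\cC$ restricted to $[0,T_0]$: using \eqref{wx} and Lemma \ref{lem:obs}(b), for $(t,x)\in\cC\cap([0,T_0]\times\realip)$ one has
\begin{align*}
-w_x(t,x)=\mmedia\Big[\int_0^{\tau_*}e^{-\int_0^{t+s}r(u)du}|g(t+s)|e^{(\theta-\alpha)s}\,ds\Big]\ge c_0\,\mmedia[\tau_*],
\end{align*}
with $c_0>0$ uniform on $[0,T_0]$. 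Similarly $-w_t$ is controlled from above by $C(x\,\mmedia\tau_*+\media\tau_*)$ via \eqref{bwt}, where on the compact time set the factor $(1+1/(T-t))$ is bounded. The idea is then that the free-boundary equation $w(t,b(t))=0$, differentiated formally (or handled via the super-martingale property of \eqref{martw} and a suitable comparison of $w(t_0,x_0+\delta)$ with $w(t_0\pm\eps,x_0+\delta)$), yields $|b'(t)|\lesssim |w_t|/|w_x|$ near $\partial\cC$; plugging in the two bounds, the $\mmedia[\tau_*]$ factors essentially cancel, leaving $|b'|\le L$ with $L$ depending on $T_0$, $c_0$, $C$ and a bound on $x$ over the relevant region.

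To make this rigorous without assuming differentiability of $b$, I would argue by contradiction following the scheme of \cite[Thm.~4.3]{DAS}: suppose there exist $t_n\to t_0$ and points on the boundary with $|b(t_n)-b(t_0)|/|t_n-t_0|\to\infty$. Take $x_n$ between $b(t_n)$ and $b(t_0)$; then $(t_0,x_n)$ lies in $\cC$ (if $b(t_n)<b(t_0)$, after relabeling) while $(t_n,x_n)\in\cS$, so $w(t_n,x_n)=0$ but $w(t_0,x_n)>0$. Using the Lipschitz-in-$t$ estimate \eqref{Lip-t} gives $w(t_0,x_n)\le c_{T_0}|t_n-t_0|$. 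On the other hand, from $w(t_0,b(t_0))=0$, the convexity/monotonicity in $x$ (Lemma \ref{lem:limw0}(i)) and the fact that $\tau_*(t_0,x)$ is bounded below on the segment between $x_n$ and $b(t_0)$ — because that whole segment sits in $\cC$ at time $t_0$ and the process needs a definite amount of time to reach $\cS$ — one gets a lower bound $w(t_0,x_n)\ge c_1|x_n-b(t_0)|\cdot(\text{something not too small})$. Balancing the upper bound $O(|t_n-t_0|)$ against a lower bound of order $|b(t_n)-b(t_0)|$ times a positive constant contradicts $|b(t_n)-b(t_0)|/|t_n-t_0|\to\infty$.

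The main obstacle, and the place where the argument is genuinely harder than in \cite{DAS}, is precisely that Condition (D) of \cite{DAS} fails: the time-derivative of $e^{-\int_0^t r}H(t,x)$ grows linearly in $x$, so the naive bounds carry an extra factor of $x$ (or of $x\,\mmedia\tau_*$) that does not appear in the denominator $-w_x\ge c_0\mmedia\tau_*$. Resolving this requires showing that the region of the boundary under consideration stays in a bounded strip of $x$-values — i.e.\ that $b$ is \emph{bounded} on compacts of $[0,T)$ — so that the stray factor of $x$ is harmless. A clean way to get boundedness is: for $t\le T_0$, the section $\cC_t$ contains $\mathcal R_t=[0,\gamma(t))$ with $\gamma$ continuous hence bounded on $[0,T_0]$, while an upper bound $b(t)\le B(T_0)$ follows by a direct estimate showing $w(t,x)>0$ for all $x$ below some level — using that $H(t,x)=g(t)x+K\ell(t)$ is positive on $\mathcal R$ and that, even slightly above $\gamma(t)$, the process $(t+s,X^x_s)$ spends enough time in $\mathcal R$ before possibly exiting to make the expectation in \eqref{Vrew} strictly positive — combined with, in case (ii), $\lim_{x\to\infty}w(t,x)=0$ from Lemma \ref{lem:limw0}(ii) to see that $b(t)<\infty$. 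Once the relevant piece of $\partial\cC$ is confined to $[0,T_0]\times[\,0,B(T_0)\,]$, all the estimates above become uniform and the contradiction argument closes, giving local Lipschitz continuity on $[0,T)$.
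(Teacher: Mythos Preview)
Your overall plan---bound $|w_t|/|w_x|$ near the boundary using Proposition \ref{prop:bounds} and Lemma \ref{lem:obs}, then transfer this to a bound on the slope of $b$---is exactly the paper's strategy. But there is a genuine gap in your treatment of case (ii), and it is precisely the technical heart of the theorem.

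You write that ``the $\mmedia[\tau_*]$ factors essentially cancel''. They do not. The numerator in \eqref{bwt} carries $\media\,\tau_*$ as well as $x\,\mmedia\,\tau_*$, while the denominator coming from \eqref{wx} and Lemma \ref{lem:obs}(b) only yields $-w_x\ge c_0\,\mmedia\,\tau_*$. So after cancellation you are left with the ratio $\media\,\tau_*/\mmedia\,\tau_*$, and in case (ii) one has $\media\,\tau_*\ge\mmedia\,\tau_*$ (the stopping set lies \emph{above} the starting point, so adding positive drift under $\pprob$ makes hitting it faster, not slower). This ratio can in principle blow up and you never address it. The paper spends the bulk of its Step 2 for case (ii) on this point: it introduces a level $a<\inf\gamma$ and the hitting time $\tau_a=\inf\{t:X_t\le a\}$, then decomposes $\media_{t,x_\delta}[X_{\tau_*}(\tau_*-t)]$ over the events $\{\tau_*\lessgtr\tau_a\}$ and $\{\tau_*<T\}$ versus $\{\tau_*=T\}$, and uses the strong Markov property at $\tau_a$ together with a uniform lower bound $\inf_{s}\media_{s,a}[X_T\mathds{1}_{\{\tau_*=T\}}]>0$ to obtain $\mmedia\,\tau_\delta\ge (c/x_\delta)\,\media\,\tau_\delta$. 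This estimate is the new probabilistic ingredient here; without it the argument does not close.

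Two further issues. First, your proof of boundedness of $b$ on compacts in case (ii) is incomplete: Lemma \ref{lem:limw0}(ii) gives $b(t)<\infty$ pointwise, not a uniform bound $B(T_0)$. The paper obtains uniform boundedness \emph{a posteriori}, by feeding the differential inequality $|b'_\delta(t)|\le c\,b_\delta(t)/g_0(t)$ into Gronwall, anchored at a finite value $b(t_2)$ whose existence is guaranteed by Lemma \ref{lem:limw0}(iii). Second, the lower bound step in your contradiction argument is circular as stated: you need $\tau_*(t_0,y)$ bounded below for $y$ on the segment, but $\tau_*(t_0,y)\to 0$ as $y\to b(t_0)$, and controlling it away from $b(t_0)$ requires information about $b(t_0+\cdot)$ that you do not yet have. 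The paper sidesteps this by working with the smooth level curves $b_\delta$ (where $w=\delta>0$), applying the implicit function theorem to get $b'_\delta=-w_t/w_x$ at interior points where $w_x\neq 0$, bounding $|b'_\delta|$ uniformly in $\delta$, and then passing to the limit via Ascoli--Arzel\`a and the pointwise convergence $b_\delta\to b$. Finally, cases (i) and (ii) are not symmetric: in (i) one has $\mmedia\,\tau_\delta\ge\media\,\tau_\delta$ and $b\le\gamma$ gives immediate boundedness, so (i) is the easy case and (ii) is where the work is.
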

\begin{proof}
In this proof most of the arguments are symmetric when we consider case $(i)$ and case $(ii)$ of Assumption \ref{ass:g}.

We start by noticing that, in case $(i)$ of Assumption \ref{ass:g}, Lemma \ref{lem:wt0} holds (because $\gamma(\cdot)>0$) and also $w_x> 0$ inside $\cC$ due to \eqref{wx}. The free boundary $b$ is the zero-level set of $w$, then continuity of $w$ and Lemma \ref{lem:wt0} imply that, for each $t\in[0,T)$, we can find $\delta>0$ sufficiently small so that the equation $w(t,x)=\delta$ has a solution $x=b_\delta(t)$.
The $\delta$-level set of $w$ is locally given by a continuous function $b_\delta$. Moreover $b_\delta(t)>b(t)\ge0$.
The family $(b_\delta)_{\delta>0}$ decreases as $\delta\to0$ so that its limit $b_0$ exists, it is upper semi-continuous as decreasing limit of continuous functions, and $b_0(t)\ge b(t)$. Since $w(t,b_\delta(t))=\delta$ it is clear that taking limits as $\delta\to0$ we get $w(t,b_0(t))=0$ and therefore $b_0(t)\le b(t)$ so that we conclude
 
\begin{align}\label{point}
\lim_{\delta\to0}b_\delta(t)=b(t)
\end{align}
for all $t\in[0,\T)$. Similarly, in case $(ii)$ of Assumption \ref{ass:g} we have $w_x<0$. Then $b_\delta(t)<b(t)$ and \eqref{point} holds (if $b(t)=+\infty$, the limit is $+\infty$). Finiteness of $b_\delta$ is always guaranteed by Lemma \ref{lem:limw0}.

Since $(t,b_\delta(t))\in\cC$ for all $t\in[0,T)$, and $w_x(t,b_\delta(t))\neq 0$ under either $(i)$ or $(ii)$ of Assumption \ref{ass:g}, an application of the implicit function theorem gives (recall that $w\in C^1$ in $\cC$)
\begin{equation}
\label{bprimo}
b^{\prime}_{\delta}(t)=-\frac{w_t(t,b_{\delta}(t))}{w_x(t,b_{\delta}(t))}, \quad t\in [0,\T).
\end{equation}
Next we will obtain a bound on $b'_\delta$, independent of $\delta$, for any bounded interval $[t_0,t_1]\subset(0,\T)$. This allows the use of Ascoli-Arzela's theorem to extract a sequence $(b_{\delta_j})_{j\ge 1}$ such that, as $j\to\infty$, $b_{\delta_j}$ converges uniformly on $[t_0,t_1]$  to a Lipschitz continuous function. Uniqueness of the limit and \eqref{point} imply that $b(\cdot)$ is locally Lipschitz.

To find the uniform bound on $b'_\delta$ we divide the proof in steps.
\vs{+6pt}

\emph{Step 1}. Let us start by observing that $w_x(t,b_{\delta}(t))>0$ in case $(i)$ of Assumption \ref{ass:g}, so that
\begin{align}\label{mod}
|b'_\delta(t)|=\frac{|w_t(t,b_\delta(t))|}{w_x(t,b_\delta(t))}.
\end{align}
To simplify notation, in what follows we set $x_\delta:=b_\delta(t)$ and $t$ is fixed. To estimate the numerator in \eqref{mod} we use the bound in \eqref{bwt},
thus obtaining
\begin{align}\label{modwt}
|w_t(t,x_\delta)|\leq C\left(1+\frac{1}{\T-t}\right)\!\left ( x_\delta\, \mmedia\, \tau_\delta + \media\, \tau_\delta\right),
\end{align}
where $\tau_\delta=\tau_*(t,x_\delta)$, for simplicity.

Plugging \eqref{modwt} inside \eqref{mod} we obtain
\begin{align}\label{mod2bis}
|b'_\delta(t)|\le C\left(1+\frac{1}{\T-t}\right)\frac{\left ( x_\delta\, \mmedia\, \tau_\delta + \media\, \tau_\delta\right)}{w_x(t,x_\delta)}.
\end{align}
A similar estimate but with the denominator replaced by $-w_x(t,x_\delta)$ can be obtained, up to obvious changes, in the setting of $(ii)$ in Assumption \ref{ass:g}, i.e.
\begin{align}\label{mod2}
|b'_\delta(t)|\le C\left(1+\frac{1}{\T-t}\right)\frac{\left ( x_\delta\, \mmedia\, \tau_\delta + \media\, \tau_\delta\right)}{-w_x(t,x_\delta)}.
\end{align}
\vs{+6pt}

At this point, in order to make \eqref{mod2bis} and \eqref{mod2} uniform in $\delta$, we must look at their limits as $\delta\to0$. For this we need to consider separately case $(i)$ and case $(ii)$ of Assumption \ref{ass:g}.
\vs{+6pt}

\emph{Step 2 - Case $(i)$}. Here $\cS_t=[0,b(t)]$ and, since $b\le\gamma$ and $\gamma\in C(0,T)$, then $b$ is locally bounded. Notice that, under $\pprob$ defined in \eqref{Ptilde}, the process $\widetilde{B}_s=B_s-\sigma s$, $s\ge 0$ is a Brownian motion, and the individual's wealth \eqref{wealth} started from $x_\delta$ may be written as
\[
dX_s^{x_\delta} = (\theta-\alpha+\sigma^2)X_s^{x_\delta} ds+\sigma X_s^{x_\delta} d\widetilde{B}_s.
\]
Then, for any $s\in[0,T-t]$ we have
\begin{align}\label{prob}
\pprob \left(\tau_\delta>s\right) =&\pprob \left(\inf_{0\leq u \leq s} \left(X^{x_\delta}_u-b(t+u)\right)>0\right)\nonumber\\
\geq& \prob  \left(\inf_{0\leq u \leq s} \left(X^{x_\delta}_u-b(t+u)\right)>0\right)=\prob \left(\tau_\delta>s\right)
\end{align}
and thus $\mmedia\, \tau_\delta\geq \media \tau_\delta$. Substituting this estimate in the numerator of \eqref{mod2bis} allows us to write
\begin{align}\label{mod3}
|b'_\delta(t)|\le C\left(1+\frac{1}{\T-t}\right)\!\left(1+x_\delta \right)\frac{\mmedia\, \tau_\delta}{w_x(t,x_\delta)}.
\end{align}
Recalling Assumption \ref{ass1}, \eqref{wx} and \eqref{obs1} it is now clear that there exists a constant $c>0$ independent of $(t,x_\delta)$ such that
\begin{align}\label{lip1}
w_x(t,x_\delta)\ge c\,\mmedia\left[\int_0^{\tau_\delta}g(t+s)ds\right]\ge c\,g_0(t) \mmedia\,\tau_\delta.
\end{align}
Hence plugging the latter into \eqref{mod3} we conclude that
\begin{align}\label{mod3bis}
|b'_\delta(t)|\le \frac{C}{c\,g_0(t)}\left(1+\frac{1}{\T-t}\right)\!\left(1+x_\delta \right)\le \frac{c'}{g_0(t)}\left(1+\frac{1}{\T-t}\right)\!\left(2+\gamma(t) \right)
\end{align}
with $c'>0$ a suitable constant and where we used $b_\delta(t) \leq 1+ \gamma(t)$ as $\delta\to 0$. Then the uniform bound in \eqref{mod3bis} implies that $b$ is locally Lipschitz as claimed.
\vs{+6pt}

\emph{Step 2 - Case $(ii)$}. Here $\cS_t=[b(t),+\infty)$. The analysis in this part is more involved than in the previous case. An argument similar to the one in \eqref{prob} gives $\media\,\tau_\delta\ge\mmedia\,\tau_\delta$ which unfortunately does not help with the estimate in \eqref{mod2}. So we need to proceed in a different way.

From \eqref{mod2} we see that, in order to bound $|b'_\delta(t)|$, we must bound $\mmedia\,\tau_\delta/|w_x(t,x_\delta)|$ and $\media\,\tau_\delta/|w_x(t,x_\delta)|$ (recall that $w_x\le 0$). The former can be bounded easily by using \eqref{wx} and \eqref{obs2}, since
\begin{align}\label{lip2}
w_x(t,x_\delta)\le c\,\mmedia\left[\int_0^{\tau_\delta}g(t+s)ds\right]\le -c\,g_0(t) \mmedia\,\tau_\delta
\end{align}
with suitable $c>0$. Then \eqref{lip2} implies
\begin{align}\label{lip3}
\frac{\mmedia\,\tau_\delta}{|w_x(t,x_\delta)|}\le \frac{1}{c\,g_0(t)}.
\end{align}

The other term requires more work because the expectation in the numerator is taken with respect to $\prob$ whereas the one in the denominator uses $\pprob$. Although we can still use \eqref{lip2} to estimate the ratio $\media\,\tau_\delta/|w_x(t,x_\delta)|$ now we end up with
\begin{align}\label{lip4}
\frac{\media\,\tau_\delta}{|w_x(t,x_\delta)|}\le \frac{\media\,\tau_\delta}{c\,g_0(t)\mmedia\,\tau_\delta}.
\end{align}
Our next task is to find a bound for the ratio $\media\,\tau_\delta\,/\,\mmedia\,\tau_\delta$.

Due to Assumption \ref{ass:g} there exists $a>0$ such that $\gamma(t)\ge 2a$ for $t\in[0,\T]$ and we denote
\begin{align*}
\tau_a=\inf\{t\ge0\,:\,X_t\le a\}.
\end{align*}
For this part of the proof it is convenient to think of $\Omega$ as the canonical space of continuous paths $\omega=\{\omega(t),t\ge0\}$ and denote $\vartheta_s$ the shifting operator $\vartheta_s \,\omega =\{\omega(s+t),t\ge 0\}$. Recall that $\mmedia_{t,x}[\,\cdot\,]=\mmedia[\,\cdot\,|X_t=x]$ and $\media_{t,x}[\,\cdot\,]=\media[\,\cdot\,|X_t=x]$. With this notation we must be careful that for fixed $(t,x_\delta)$ and any $s\ge 0$ we have $\prob(\tau_\delta>s)=\prob_{t,x_\delta}(\tau_*-t>s)$ and $\pprob(\tau_\delta>s)=\pprob_{t,x_\delta}(\tau_*-t>s)$, because
\[
\tau_*=\inf\{u\ge t\,:\,(u,X^{t,x_\delta}_u)\in\cS\}\quad\text{under $\prob_{t,x_\delta}$ and $\pprob_{t,x_\delta}$.}
\]

Our first estimate gives
\begin{align}\label{00}
\mmedia\,\tau_\delta=\mmedia_{t,x_\delta}(\tau_*-t)=\frac{1}{x_\delta}\media_{t,x_\delta}\left[e^{(\alpha-\theta)(\tau_*-t)}X_{\tau_*}({\tau_*}-t)\right]\ge\frac{c_1}{x_\delta}
\media_{t,x_\delta}\left[X_{\tau_*}\,({\tau_*}-t)\right]
\end{align}
for a suitable uniform constant $c_1>0$. Next we obtain
\begin{align}\label{01}
\media_{t,x_\delta}\left[X_{\tau_*}\,({\tau_*}-t)\right]=&\,\media_{t,x_\delta}\left[X_{\tau_*}\,({\tau_*}-t)\left(\mathds{1}_{\{{\tau_*}\le\tau_a\}}+
\mathds{1}_{\{{\tau_*}>\tau_a\}}\right)\right]\nonumber\\[+4pt]
\ge&\,a\media_{t,x_\delta}\left[({\tau_*}-t)\mathds{1}_{\{{\tau_*}\le\tau_a\}}\right]+\media_{t,x_\delta}\left[X_{\tau_*}\,({\tau_*}-t)\mathds{1}_{\{{\tau_*}>\tau_a\}}\right].	
\end{align}
The last term can be further simplified:
\begin{align}\label{02}
&\media_{t,x_\delta}\left[X_{\tau_*}\,({\tau_*}-t)\mathds{1}_{\{{\tau_*}>\tau_a\}}\right]\nonumber\\[+4pt]
&=\,\media_{t,x_\delta}\left[X_{\tau_*}\,({\tau_*}-t)\mathds{1}_{\{{\tau_*}>\tau_a\}}\left(\mathds{1}_{\{{\tau_*}<\T\}}+\mathds{1}_{\{{\tau_*}=\T\}}\right)\right]\nonumber\\[+4pt]
&\ge\,a\,\media_{t,x_\delta}\left[({\tau_*}-t)\,\mathds{1}_{\{{\tau_*}>\tau_a\}}\mathds{1}_{\{{\tau_*}<\T\}}\right]+
\media_{t,x_\delta}\left[X_{\tau_*}\,({\tau_*}-t)\,\mathds{1}_{\{{\tau_*}>\tau_a\}}\mathds{1}_{\{{\tau_*}=\T\}}\right]
\end{align}
where we have used that $\{\tau_*<T\}\subseteq\{X_{\tau_*}\ge \gamma({\tau_*})\}$ under $\prob_{t,x_\delta}$ and $\gamma({\tau_*})\ge 2a$. The last term in the above expression may be estimated by using iterated conditioning and the strong Markov property:
\begin{align}\label{03}
&\media_{t,x_\delta}\left[X_{\tau_*}\,({\tau_*}-t)\,\mathds{1}_{\{{\tau_*}>\tau_a\}}\mathds{1}_{\{{\tau_*}=\T\}}\right]\nonumber\\[+4pt]
&=\,\media_{t,x_\delta}\left[(\T-t)\,\mathds{1}_{\{{\tau_*}>\tau_a\}}\media_{t,x_\delta}\left(X_{\tau_*}\,\mathds{1}_{\{{\tau_*}=\T\}}\big|\cF_{\tau_a}\right)\right]\nonumber\\[+4pt]
&=\,\media_{t,x_\delta}\left[(\T-t)\,\mathds{1}_{\{{\tau_*}>\tau_a\}}\media_{t,x_\delta}\left(X_{\tau_a+{\tau_*}\circ\vartheta_{\tau_a}}\,\mathds{1}_{\{{\tau_*}\circ\vartheta_{\tau_a}
=\T-\tau_a\}}\big|\cF_{\tau_a}\right)\right]\\[+4pt]
&=\,\media_{t,x_\delta}\left[(\T-t)\,\mathds{1}_{\{{\tau_*}>\tau_a\}}\media_{\,\tau_a,X_{\tau_a}}\left(X_{{\tau_*}}\,\mathds{1}_{\{{\tau_*}=\T\}}\right)\right]\nonumber\\[+4pt]
&\ge\,\media_{t,x_\delta}\left[(\T-t)\,\mathds{1}_{\{{\tau_*}>\tau_a\}}\mathds{1}_{\{{\tau_*}=\T\}}
\media_{\,\tau_a,a}\left(X_{{\tau_*}}\,\mathds{1}_{\{{\tau_*}=\T\}}\right)\right]\nonumber\\[+4pt]
&\ge\,c_2\,\media_{t,x_\delta}\left[(\T-t)\,\mathds{1}_{\{{\tau_*}>\tau_a\}}\mathds{1}_{\{{\tau_*}=\T\}}\right],\nonumber
\end{align}
where
\begin{align}\label{pos}
c_2:=\inf_{0\le s\le \T}\media_{\,s,a}\left(X_{\T}\,\mathds{1}_{\{{\tau_*}=\T\}}\right)>0.
\end{align}
Notice that the strict positivity in \eqref{pos} may be verified by using the known joint law of the Brownian motion and its running supremum. Indeed, recalling that $\gamma(t)\ge 2a$ for all $t\in[0,T]$ we have
\begin{align*}
\media_{\,s,a}\left(X_{\T}\,\mathds{1}_{\{{\tau_*}=\T\}}\right)\ge \media_{s,a}\left(X_T\,\mathds{1}_{\{\sup_{s\le u\le T}X_u\le 2a\}}\right).
\end{align*}

From \eqref{01}, \eqref{02} and \eqref{03} we get
\begin{align}
\media_{t,x_\delta}\left[X_{\tau_*}\,({\tau_*}-t)\right]\ge&\, a\media_{t,x_\delta}\left[({\tau_*}-t)\mathds{1}_{\{{\tau_*}\le\tau_a\}}\right]+a\,\media_{t,x_\delta}\left[({\tau_*}-t)\,\mathds{1}_{\{{\tau_*}>\tau_a\}}
\mathds{1}_{\{{\tau_*}<\T\}}\right]\nonumber\\[+4pt]
&+c_2\,\media_{t,x_\delta}\left[(\T-t)\,\mathds{1}_{\{{\tau_*}>\tau_a\}}\mathds{1}_{\{{\tau_*}=\T\}}\right]
\ge c_3\,\media_{t,x_\delta}\,({\tau_*}-t)=c_3\,\media\,\tau_\delta
\end{align}
with $c_3=a\wedge c_2$. Now we plug the latter into \eqref{00} and then back in \eqref{lip4} and obtain
\begin{align}\label{05}
\frac{\media\,\tau_\delta}{|w_x(t,x_\delta)|}\le \frac{x_\delta}{c'\,g_0(t)}
\end{align}
with $c'=c\cdot c_1\cdot c_3>0$ a uniform constant. The above expression and \eqref{lip3} may now be substituted into \eqref{mod2} and give
\begin{align}\label{04}
|b'_\delta(t)|\le \frac{c''}{g_0(t)}\left(1+\frac{1}{\T-t}\right)b_\delta(t)
\end{align}
where $c''>0$ is a suitable constant.

We recall that, in case $(ii)$ of Assumption \ref{ass:g}, Lemma \ref{lem:limw0}-$(iii)$ implies
that for any $0\le t_0<t_1<T$ we can find $t_2\in(t_0,t_1)$ such that $b_\delta(t_2)\le b(t_2)<+\infty$. The latter and \eqref{04} allow to apply Gronwall's inequality to obtain that, for any $0\le t_0<t_1<T$, there exists a constant $c_{t_0,t_1}>0$, independent of $\delta$ and such that
\begin{align}\label{unifb}
\sup_{t_0\le t\le t_1}\big|b_\delta(t)\big|\le c_{t_0,t_1}\,.
\end{align}
The same bound therefore holds for the boundary $b$, and it shows that $b$ is bounded on any compact. Moreover \eqref{04} and \eqref{unifb} also give a uniform bound for $b'_\delta$ on $[t_0,t_1]$ as needed.
\end{proof}

Lipschitz continuity of the boundary has important consequences regarding the regularity of the value function $w$, which we summarize below.
\begin{proposition}\label{thm:C1}
The value function $w$ is continuously differentiable on $[0,\T)\times\realip$. Moreover $w_{xx}$ is continuous on the closure of the set $\cC\cap\{t<\T-\eps\}$ for all $\eps>0$.
\end{proposition}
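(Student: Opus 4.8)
The plan is to bootstrap the interior regularity $w\in C^{1,2}(\cC)$ of Corollary \ref{cor:C12} up to the free boundary, exploiting the Lipschitz continuity of $b$ from Theorem \ref{th1}. Since $w$ vanishes identically — hence is smooth — in the interior of $\cS$, everything reduces to showing that $w_t,w_x$ (and $w_{xx}$, away from $t=\T$) extend continuously across the graph of $b$, necessarily with the value $0$ dictated by the $\cS$-side. Throughout I would use the multiplicative structure $X^x_s=x\,\xi_s$ with $\xi_s:=Z_s e^{(\theta-\alpha)s}$ from \eqref{XZ}--\eqref{Ptilde}, and the trivial bound $\tau_*\le\T$.

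The first and central step is the smooth fit $w_x\in C([0,\T)\times\realip)$. By \eqref{wx}, $|w_x(t,x)|\le C\,\mmedia\,\tau_*(t,x)$ on $\cC$, while $w_x=0$ on $\mathrm{int}\,\cS$; so it suffices to prove that $\tau_*(t,x)\to0$, $\prob$- and $\pprob$-a.s.\ (hence in $L^1$ by dominated convergence), whenever $\cC\ni(t,x)\to(t_0,b(t_0))$ with $t_0<\T$. In case $(i)$, where $\tau_*(t,x)=\inf\{s:\,x\xi_s\le b(t+s)\}$, I would fix a local Lipschitz constant $L$ of $b$ near $t_0$ and write, for small $s$,
\[
x\xi_s-b(t+s)\ \le\ (x-b(t_0))\xi_s+b(t_0)(\xi_s-1)+L\big(|t-t_0|+s\big).
\]
By the law of the iterated logarithm at $0$ (equivalently Blumenthal's $0$–$1$ law), $b(t_0)(\xi_s-1)+Ls$ is strictly negative along some random sequence $s_n\downarrow0$ almost surely, since the $O(\sqrt{s\log\log(1/s)})$ oscillations of $\sigma B_s$ dominate the $O(s)$ drift; hence, once $(t,x)$ is close enough to $(t_0,b(t_0))$, the displayed right-hand side is negative at such an $s_n$, forcing $\tau_*(t,x)\le s_n$, and letting $n\to\infty$ gives $\tau_*(t,x)\to0$. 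Case $(ii)$ is symmetric, using the upper fluctuations of $\xi$. Thus $w_x$ extends continuously by $0$ onto $\partial\cC$.

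Next I would obtain $w\in C^1$. The bound \eqref{bwt} gives $|w_t(t,x)|\le C\big(1+\tfrac1{\T-t}\big)\big(x\,\mmedia\,\tau_*(t,x)+\media\,\tau_*(t,x)\big)$, and since $1/(\T-t)$ remains bounded as $(t,x)\to(t_0,b(t_0))$ for $t_0<\T$, the previous step yields $w_t(t,x)\to0$ from $\cC$, while $w_t=0$ on $\mathrm{int}\,\cS$. Hence $w_t$ and $w_x$ extend to functions $W_1,W_2$ that are continuous on $[0,\T)\times\realip$ and vanish on $\partial\cC$. Since $w$ is locally Lipschitz, for each fixed $x$ the map $t\mapsto w(t,x)$ is absolutely continuous with a.e.\ derivative $W_1(\cdot,x)$ (the set $\{t:b(t)=x\}$, where $w(\cdot,x)\equiv0$, causing no trouble), so $w(t_2,x)-w(t_1,x)=\int_{t_1}^{t_2}W_1(s,x)\,ds$; continuity of $W_1$ then forces $w_t\equiv W_1$ at every point, and symmetrically $w_x\equiv W_2$. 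This gives $w\in C^1([0,\T)\times\realip)$.

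Finally, for $w_{xx}$ I would rearrange the PDE \eqref{pde1}: in $\cC$,
\[
\tfrac{\sigma^2x^2}{2}\,w_{xx}(t,x)=-e^{-\int_0^tr(u)\,du}H(t,x)-w_t(t,x)-(\theta-\alpha)x\,w_x(t,x).
\]
On $\overline{\cC\cap\{t<\T-\eps\}}$ the right-hand side is continuous — $H$ is continuous, $w_t,w_x$ are continuous by the previous step, and the factor $1+\tfrac1{\T-t}$ implicit in the relevant bounds stays $\le1+\tfrac1\eps$ — while $x\mapsto\sigma^{-2}x^{-2}$ is continuous (hence locally bounded) on $\realip$; dividing yields the desired continuous extension of $w_{xx}$. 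The hard part is undoubtedly the smooth-fit argument: it is precisely there that the Lipschitz regularity of $b$ (the conclusion of Theorem \ref{th1}) is indispensable, because it is what allows the Brownian oscillations to overwhelm the boundary's variation and drive the process into $\cS$ instantaneously; for a less regular boundary this mechanism — and with it the $C^1$ property — could break down.
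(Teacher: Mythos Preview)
Your proof is correct and follows essentially the same route as the paper's: both use the local Lipschitz continuity of $b$ and the law of the iterated logarithm to force $\tau_*(t,x)\to0$ as $(t,x)\to\partial\cC$, then feed this into the gradient bounds \eqref{wx}--\eqref{bwt} to obtain $w_x,w_t\to0$, and finally extract $w_{xx}$ from the PDE \eqref{pde1}. The paper is terser---it cites \cite{DeAE16} for the continuity of $\tau_*$ rather than spelling out the LIL argument, and omits your verification that the continuous extensions of $w_x,w_t$ are genuine pointwise derivatives---but the underlying mechanism is identical.
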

\begin{proof}
Corollary \eqref{cor:C12} tells us that $w_t$ and $w_x$ are continuous at all points in the interior of $\cC$ and of $\cS$, and thus it remains to analyze the regularity of $w$ across the boundary. In order to do that it is crucial to observe that since $t\mapsto b(t)$ is locally Lipschitz, the law of iterated logarithm implies that $\tau_*$ is indeed equal to the first time $X$ goes strictly below the boundary, in case $(i)$, or strictly above the boundary, in case $(ii)$. In other words the first entry time to $\cS$ is equal to the first entry time to its interior part. This is an important fact that can be used to prove that $(t,x)\mapsto\tau_*(t,x)$ is continuous $\prob$-a.s., and it is zero at all boundary points~(see for example \cite[Lemma 5.1 and Proposition 5.2]{DeAE16}).

Fix $(t_0,x_0)\in\partial\cC\cap\{t<T\}$ and take a sequence $(t_n,x_n)_{n\ge1}\!\subset\!\cC$ with $(t_n,x_n)\to (t_0,x_0)$ as $n\to \infty$. Continuity of $(t,x)\mapsto \tau_*(t,x)$ and the discussion above imply that $\tau_*(t_n,x_n)\to 0$, $\prob$-a.s.\ as $n\to \infty$. The latter and formulae \eqref{wx} and \eqref{bwt} give $w_x(t_n,x_n)\to 0$ and $w_t(t_n,x_n)\to 0$. Since $(t_0,x_0)$ and the sequence $(t_n,x_n)$ were arbitrary we get $w\in C^{1}([0,T)\times\realip)$.

The final claim regarding continuity of $w_{xx}$ follows from \eqref{pde1}. We know from Corollary \ref{cor:C12} that $w_{xx}$ is continuous in $\cC$. Moreover for any $(t_0,x_0)\in\partial\cC\cap\{t<\T\}$ we can take limits in \eqref{pde1} as $(t,x)\to(t_0,x_0)$ with $(t,x)\in\cC$ and use that $w(t_0,x_0)=w_x(t_0,x_0)=w_t(t_0,x_0)=0$ to obtain
\begin{align}\label{wxxb}
\lim_{\cC\owns(t,x)\to (t_0,x_0)}\frac{\sigma^2x^2}{2}w_{xx}(t,x)=\frac{\sigma^2x_0^2}{2}w_{xx}(t_0,x_0)=-e^{-\int_0^{t_0}r(u)du}H(t_0,x_0)
\end{align}
which proves our claim.
\end{proof}

As it will be discussed in Section \ref{numerics}, for the numerical evaluation of the optimal boundaries it is important to find the limit value of $b(t)$ when $t\to\T$. This will be analyzed in the next proposition. For future use we introduce the function (recall $g_0$ as in Lemma \ref{lem:obs})
\begin{align}\label{u0}
u_0(t):=\frac{1+\gamma(t)}{g_0(t)},\qquad t\in[0,T].
\end{align}
\begin{proposition}\label{prop:limb1}
Recall that $\gamma(T):=\lim_{t\to T}\gamma(t)$ exists but could be infinite (Assumption \ref{ass:g}). If $u_0\in L^1(0,\T)$ then $b(\cdot)$ has limit
\begin{align}\label{limb2}
\lim_{t\uparrow\T}b(t)=\gamma(\T).
\end{align}

\end{proposition}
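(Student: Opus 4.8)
The plan is to show first that the limit $L:=\lim_{t\uparrow T}b(t)$ exists in $[0,+\infty]$, and then to identify it with $\gamma(T)$; cases $(i)$ and $(ii)$ of Assumption \ref{ass:g} are handled in parallel by symmetric arguments. Since $\mathcal R\subseteq\cC$ one has $b\le\gamma$ in case $(i)$ and $b\ge\gamma$ in case $(ii)$; as $\gamma(t)\to\gamma(T)$ this already gives $\limsup_{t\uparrow T}b(t)\le\gamma(T)$ in case $(i)$ and $\liminf_{t\uparrow T}b(t)\ge\gamma(T)$ in case $(ii)$ (in the latter, if $\gamma(T)=+\infty$ we are immediately done). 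It therefore remains to prove that $L$ exists and that $L\ge\gamma(T)$ in case $(i)$, resp.\ $L\le\gamma(T)$ in case $(ii)$.

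The first step establishes existence of the limit. I would revisit the estimate \eqref{bprimo} for $b_\delta'$ and note that, unlike the lower bound for $w_t$ in \eqref{bwt}, the \emph{upper} bound $w_t\le C(x\,\mmedia\,\tau_*+\media\,\tau_*)$ carries no singular factor $1+(T-t)^{-1}$. Feeding this into \eqref{bprimo} together with \eqref{wx}, Lemma \ref{lem:obs} and -- in case $(ii)$ only -- the inequality $\media\,\tau_\delta\le C\,x_\delta\,\mmedia\,\tau_\delta$ obtained along \eqref{00}--\eqref{05}, one gets a one-sided bound, uniform in $\delta$ on compacts of $(0,T)$ (using $g_0(t)^{-1}\le u_0(t)$ and, in case $(i)$, $b_\delta\le 1+\gamma$ for $\delta$ small by Dini's theorem): namely $b_\delta'(t)\ge -C\,u_0(t)$ in case $(i)$ and $b_\delta'(t)\le C\,u_0(t)\,b_\delta(t)$ in case $(ii)$. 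Integrating over $[s,t]\subset(0,T)$ and letting $\delta\downarrow0$ yields, for $0<s<t<T$,
\[
b(s)\le b(t)+C\!\int_s^t u_0(r)\,dr\quad(\text{case }(i)),\qquad b(t)\le b(s)\,\exp\!\Big(C\!\int_s^t u_0(r)\,dr\Big)\quad(\text{case }(ii)).
\]
Because $u_0\in L^1(0,T)$, letting first $t\uparrow T$ and then $s\uparrow T$ forces $\limsup_{t\uparrow T}b(t)\le\liminf_{t\uparrow T}b(t)$, so $L$ exists.

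The second step identifies $L$ by a ``toll'' argument. Suppose $L<\gamma(T)$ in case $(i)$ (case $(ii)$ with $L>\gamma(T)$, hence $\gamma(T)<\infty$, is symmetric). Fix $x^{**}<x^*$ in $(L,\gamma(T))$ and a small $\eta_0>0$; since $b\to L$ and $\gamma\to\gamma(T)$ there is $t^*<T$ with $b(t)\le x^{**}$ and $\gamma(t)\ge x^*+3\eta_0$ for $t\in[t^*,T)$ (with the obvious adaptation if $\gamma(T)=+\infty$). Then $(t,x^*)\in\cC$, so $w(t,x^*)>0$, and $\tau_*:=\tau_*(t,x^*)\ge\sigma\wedge(T-t)$, where $\sigma$ is the first exit time of $X^{x^*}$ from $(x^{**},x^*+\eta_0)$, because $b(t+\cdot)\le x^{**}$. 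Writing $w(t,x^*)=\media\!\int_0^{\tau_*}e^{-\int_0^{t+s}r(u)du}H(t+s,X^{x^*}_s)\,ds$ and splitting at $\tau_*\wedge\sigma$: on $\{s<\sigma\}$ one has $X^{x^*}_s<x^*+\eta_0<\gamma(t+s)$, hence $-H(t+s,X^{x^*}_s)\ge 2\eta_0\,g(t+s)$, and with \eqref{obs1} and the Gaussian small-time bound $\media[\sigma\wedge(T-t)]\ge\tfrac12(T-t)$ the first part is $\le -c_1\,g_0(t)\,(T-t)$ for some $c_1>0$; the remaining part is supported on the event that $X^{x^*}$ rises above $x^*+3\eta_0$, so by a Gaussian tail bound and the strong Markov property it is $\le C\,(T-t)\,e^{-c/(T-t)}$. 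Thus $0<w(t,x^*)\le(T-t)\big(-c_1 g_0(t)+C e^{-c/(T-t)}\big)$ for all $t$ near $T$, whence $g_0(t)<(C/c_1)\,e^{-c/(T-t)}$ and therefore $u_0(t)=(1+\gamma(t))/g_0(t)\ge(c_1/C)\,e^{c/(T-t)}$ near $T$ -- contradicting $u_0\in L^1(0,T)$. Hence $L\ge\gamma(T)$, and together with the first step and $b\le\gamma$ this gives \eqref{limb2}.

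The crux of the argument is the second step: one must calibrate the exit interval and the width $\eta_0$ so that, on the relevant event, $H$ stays bounded away from zero \emph{proportionally to $g(t+s)$} -- this is what ultimately produces the factor $g_0(t)$, and then $u_0$, in the final inequality -- while the reward coming from upward (resp.\ downward) excursions of $X$ past $\gamma$ is exponentially small in $(T-t)^{-1}$ by Gaussian estimates. Combining these into the sharp dichotomy, and checking uniformity of all constants as $t\uparrow T$, including the degenerate regime $g(T)=0$, $\gamma(T)=+\infty$, is exactly where the hypothesis $u_0\in L^1(0,T)$ is used in an essential way, entering both in the first step (to close the Gronwall-type inequality) and in the second (to exclude the anomalous decay of $g_0$).
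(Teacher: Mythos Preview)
Your first step—existence of the limit via the one-sided bounds $b'_\delta\ge -C\,u_0$ (case $(i)$) and $b'_\delta\le C\,b_\delta/g_0$ (case $(ii)$), then Gronwall and $u_0\in L^1(0,T)$—is exactly what the paper does.

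Your second step, however, is a genuinely different route. The paper identifies $L=\gamma(T)$ by a short PDE/test-function argument (borrowed from \cite{DeA15}): assuming $b(T-)<\gamma(T)$ in case $(i)$, one picks a rectangle $(a,b)\times[t',T)\subset\cC$, sets $F_\varphi(s)=\int_a^b v_t(s,y)\varphi(y)\,dy$ for a smooth test function $\varphi\ge 0$, uses \eqref{pde4} and integration by parts to show $\lim_{s\uparrow T}F_\varphi(s)=-\int_a^b H(T,y)\varphi(y)\,dy>0$ (since $H(T,y)<0$ for $y<\gamma(T)$), and then integrates $F_\varphi$ over $[T-\delta,T]$ to contradict $v>0$ in $\cC$. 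This argument uses $u_0\in L^1(0,T)$ only through the existence of $b(T-)$, not in the identification itself. Your argument is purely probabilistic: you trap $X^{x^*}$ in a small corridor below $\gamma$, exploit $H=g(\cdot)(x-\gamma(\cdot))\le -2\eta_0\,g(\cdot)$ there together with \eqref{obs1}/\eqref{obs2} to produce a negative contribution of order $-g_0(t)(T-t)$, and kill the escape term by a Gaussian tail bound of order $(T-t)e^{-c/(T-t)}$; the resulting inequality $g_0(t)\lesssim e^{-c/(T-t)}$ then forces $1/g_0\notin L^1$, hence $u_0\notin L^1$. This is correct, though your phrasing ``supported on the event that $X^{x^*}$ rises above $x^*+3\eta_0$'' is slightly loose: the second integral is nonzero also after downward exits of the corridor, but for an \emph{upper} bound you only need to control its positive part, which indeed lives on $\{X_s>\gamma(t+s)\}\subseteq\{X_s>x^*+3\eta_0\}$. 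In summary, the paper's identification step is shorter and analytically cleaner (and does not re-use the integrability of $u_0$), while your approach has the merit of staying entirely within the probabilistic toolkit developed in the paper and of making transparent why the hypothesis $u_0\in L^1(0,T)$ excludes precisely the anomalous decay of $g_0$ near $T$.
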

\begin{proof}
Let us first consider case $(i)$ in Assumption \ref{ass:g}.
Here we recall the notation $x_\delta=b_\delta(t)$, $\tau_\delta=\tau_*(t,x_\delta)$ used in the proof of Theorem \ref{th1}.

From \eqref{bprimo} and the upper bound in \eqref{bwt} we get (recall that $w_x>0$)
\begin{align*}
b'_\delta(t)\ge& -\frac{C\left( x_\delta \,\mmedia\, \tau_\delta + \media\, \tau_\delta\right)}{w_x(t,x_\delta)}
\ge- C\left(1+ x_\delta\right) \,\frac{\mmedia\, \tau_\delta}{w_x(t,x_\delta)}
\end{align*}
where in the last inequality we have used $\mmedia\,\tau_\delta\ge\media\,\tau_\delta$ which follows from \eqref{prob}. Employing \eqref{lip1} and letting $\delta \rightarrow 0$ we find
\begin{align}
b'(t)\ge - \frac{C}{c\,g_0(t)}\left(1+ b(t)\right)\ge -\frac{C'}{g_0(t)}\left(1+ \gamma(t)\right), \qquad\text{for a.e.~$t\in[0,\T)$,}
\end{align}
where we have also used that $b$ is bounded from above by $\gamma$ (see \eqref{gamma}) and $C'=C/c$. Recalling $u_0(\cdot)$ and setting $\hat{b}(t)=b(t)+C'\,\int_0^t u_0(s) ds$, the mapping $t\rightarrow \hat{b}(t)$ is increasing. Thus, $\lim_{t\uparrow \T} \hat{b}(t) $ exists and since $u_0$ is integrable and positive on $[0,\T]$ then the limit $\lim_{t\uparrow \T} b(t)$ exists as well.

Notice that $b(t)\le \gamma(t)$ for all $t\in[0,\T)$ and therefore $b(\T-)\le \gamma(\T)$. Recall that $\gamma(\T)>0$ due to Assumption \ref{ass:g}. For the proof of \eqref{limb2} we follow the approach of \cite{DeA15}. Arguing by contradiction we assume $b(\T-)<\gamma(\T)$. Then we can pick $a,b$ such that $b(\T-)<a<b<\gamma(\T)$ and $t^{\prime}<\T$ such that
$(a,b)\times [t^{\prime},T)\subset \mathcal{C}$. It is convenient here to work with $v$ rather than $w$ (see \eqref{Vrew}) and to refer to \eqref{pde4}.

Let $\varphi \in C^{\infty}(a,b)$ with $\varphi\geq 0$ and define
$F_{\varphi}(s):=\int_{a}^{b}v_t(s,y)\varphi(y)dy$. Now, denoting by $\mathcal{L}^*$ the adjoint of the operator $\mathcal{L}$, we have
\begin{align*}
\lim_{s\uparrow T}F_{\varphi}(s)=&\lim_{s\uparrow T}\int_{a}^{b}\left[-H(s,y)-\mathcal{L}v(s,y)+r(s)v(s,y)\right]\varphi(y)dy\\
=&\lim_{s\uparrow T}\int_{a}^{b}\left[-H(s,y)\varphi(y)+v(s,y)\left( -\mathcal{L}^*+ r(s)\right)\varphi(y)\right]dy\\
=&\int_{a}^{b}\left[-H(T,y)\varphi(y)\right]dy>0
\end{align*}
since $v(T,y)=0$ and $H(T,y)<0$ for $y\in(a,b)$. From the above we also deduce that $F_{\varphi}(\cdot)$ is continuous up to $T$, thus there exists $\delta>0$ such that $F_{\varphi}(s)>0$, for
$s\in[T-\delta,T]$, and
\begin{align*}
0<\int_{\T-\delta}^{\T} F_{\varphi}(s) ds=\int_{a}^{b} \varphi(y) \left[v(\T,y)-v(\T-\delta,y)\right]dy=-\int_{a}^{b} \varphi(y) v(\T-\delta,y)dy<0
\end{align*}
since $v(\T-\delta,y)>0, \ y\in(a,b)$. Hence a contradiction.

Consider now case $(ii)$ of Assumption \ref{ass:g}.
Since $b(t)\ge \gamma(t)$ for all $t\in[0,\T]$, it is obvious that \eqref{limb2} holds if $\gamma(T)=+\infty$. For the case of $\gamma(T)<+\infty$ instead we recall \eqref{bprimo} and notice that \eqref{bwt} implies
\begin{align}
b'_\delta(t)\le C\left(\,\frac{b_\delta(t)\mmedia \tau_\delta+\media\tau_\delta}{|w_x(t,b_\delta(t))|}\,\right),\qquad t\in[0,\T).
\end{align}
If we now recall \eqref{lip3} and \eqref{05} we find
\begin{align}\label{upbdb}
b_\delta'(t)\le c\,\frac{b_\delta(t)}{g_0(t)}\qquad \text{for}\ t\in[0,\T)
\end{align}
for a suitable $c>0$.

Since $u_0\in L^1(0,T)$ and $\gamma\in C((0,T])$ then $1/g_0(t)$ is integrable on $[0,\T]$. Gronwall's lemma applied to \eqref{upbdb} guarantees that
\[
b_\delta(t)\le b_\delta(t_0)e^{\int_{t_0}^t\tfrac{c}{g_0(s)}ds}\qquad \text{for $t\in[0,T]$}
\]
with $t_0\in(0,T)$ arbitrarily chosen. It was shown in Theorem \ref{th1} that  $b$ is bounded on any compact subset of $(0,T)$ and therefore taking limits as $\delta\to0$ and recalling \eqref{point} we find that $b$ is indeed bounded on $[0,\T]$. This fact and \eqref{upbdb} in turn imply that $b'(t)\le c_0/g_0(t)$ a.e.\ on $[0,\T]$ for a suitable $c_0>0$ only depending on $t_0$.

We can therefore define $\hat{b}(t):=b(t)-c_0\int_0^t 1/g_0(s) ds$ and the mapping $t\mapsto\hat{b}(t)$ is non-increasing so that $\hat{b}(T-):=\lim_{t\to\T}\hat{b}(t)$ exists. Since $1/g_0(t)$ is integrable and positive on $[0,\T]$ then $b(t):=\lim_{t\to\T}b(t)$ exists too, it is finite and $b(\T-)\ge \gamma(\T)$. To prove \eqref{limb2} we argue by contradiction, assuming $b(\T-)>\gamma(\T)$. Then following analogous arguments to those employed in the proof of case $(i)$ above we reach the desired contradiction.
\end{proof}
\begin{remark}
In general the map $s\mapsto e^{-\int_{0}^{t+s}r(u)du}H(t+s,x)$ is not monotonic. As a consequence it becomes extremely challenging (if possible at all) to determine whether $b(\cdot)$ is monotonic or not. For numerical evidence of non-monotonic boundaries see Section \ref{numerics}.
\end{remark}

\subsection{Characterisation of the free boundary and of the value function}
In the next theorem we will find non-linear integral equations that characterise uniquely the free boundary and the value function. Here we notice that all regularity properties of $w$ obviously transfer to $V$ of \eqref{value2}, due to \eqref{w} and \eqref{Vrew}. In particular we notice that $V\in C^{1}([0,T)\times\realip)$ and $V_{xx}\in L^\infty_{loc}([0,T)\times\realip)$, with the only discontinuity of $V_{xx}$ occurring across $\partial\cC$. It is important to remark that Corollary \ref{cor:C12} and the remaining properties of $w$ studied above imply that indeed $V$ solves \eqref{HJB} in the almost everywhere sense (more precisely at all points $(t,x)\notin\partial\cC$).
\begin{theorem}
\label{ThVolt1}
For all $(t,x)\in[0,T]\times\realip$, the value function \eqref{value2} has the following representation
\begin{align}
\label{reprV1}
V(t,x)= \media &\, \bigg[ e^{-\int_{0}^{T-t}r(t+u)du}G(T,X^x_{T-t})\notag\\
&+\int_{0}^{T-t}e^{-\int_{0}^{s}r(t+u)du}\Big(\beta(t+s)X^x_s -H(t+s,X^x_s)\mathds{1}_{\{(t+s,X^x_s) \in \cS\}}\Big) ds\bigg].
\end{align}

Recall \eqref{u0} and assume $u_0\in L^1(0,T)$. In case $(i)$ of Assumption \ref{ass:g} (resp.\ in case $(ii)$), the optimal boundary $b$ is the unique continuous solution, smaller (resp.\ larger) than $\gamma$, of the following non-linear integral equation: for all $t\in[0,T]$
\begin{align}
\label{reprb}
G(t,b(t))\!=\! \media &\, \bigg[ e^{-\int_{0}^{T-t}r(t+u)du}G(T,X^{b(t)}_{T-t})\notag\\
& +\!\int_{0}^{T-t}\!e^{-\int_{0}^{s}r(t+u)du}\!\left(\beta(t+s)X^{b(t)}_s \!-\!H(t+s,X^{b(t)}_s)\mathds{1}_{\{(t+s,X^{b(t)}_s) \in \cS\}}\right) ds\bigg],
\end{align}
with $b(T-)=\gamma(T)$ (cf.~\eqref{limb2}).
\end{theorem}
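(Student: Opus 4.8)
The plan is to derive the representation \eqref{reprV1} by a standard application of It\^o's formula to $V$ (equivalently, to $w$) over $[0,T-t]$, exploiting the regularity already established, and then to prove that $b$ is the \emph{unique} solution of \eqref{reprb} by the now-classical argument of Peskir (used e.g.~in \cite{PS}, \cite{DeA15}, \cite{DAS}).

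\textbf{Step 1: the representation of $V$.} Since Proposition \ref{thm:C1} gives $V\in C^1([0,T)\times\realip)$ with $V_{xx}\in L^\infty_{loc}$ and a single jump of $V_{xx}$ across the Lipschitz curve $\partial\cC$, I would apply It\^o's formula (in the generalised form valid for $C^1$ functions with locally bounded second derivative, or equivalently by a mollification argument exploiting the local-time-on-curves formula) to $s\mapsto e^{-\int_0^s r(t+u)du}V(t+s,X^x_s)$ on $[0,T-t]$. Using the variational inequality \eqref{HJB}, which holds a.e.\ and in particular off $\partial\cC$, the drift term equals $-\beta(t+s)X^x_s$ on $\cC$ and equals $-\beta(t+s)X^x_s + H(t+s,X^x_s)$ on $\mathrm{int}\,\cS$; since the curve $\partial\cC$ is Lebesgue-null in time (it is a graph of a function), the two cases can be merged into the single integrand $-\beta(t+s)X^x_s + H(t+s,X^x_s)\mathds{1}_{\{(t+s,X^x_s)\in\cS\}}$. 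Taking expectations (the local-martingale part is a true martingale thanks to the integrability of $X$ and boundedness of $V_x$ from Proposition \ref{prop:bounds}) and rearranging, using $V(T,x)=G(T,x)$, yields \eqref{reprV1}. Evaluating at a point $(t,x)$ with $x=b(t)\in\cS_t$, so that $V(t,b(t))=G(t,b(t))$, gives \eqref{reprb}; the endpoint condition $b(T-)=\gamma(T)$ is Proposition \ref{prop:limb1}.

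\textbf{Step 2: uniqueness of the solution.} Suppose $c:[0,T]\to\realip$ is continuous, satisfies $c\le\gamma$ (case $(i)$) resp.\ $c\ge\gamma$ (case $(ii)$), $c(T-)=\gamma(T)$, and solves \eqref{reprb} with $b$ replaced by $c$. Define the candidate value
\[
V^c(t,x):=\media\Big[e^{-\int_0^{T-t}r(t+u)du}G(T,X^x_{T-t})+\int_0^{T-t}e^{-\int_0^s r(t+u)du}\big(\beta(t+s)X^x_s-H(t+s,X^x_s)\mathds{1}_{\{(t+s,X^x_s)\in\cS^c\}}\big)ds\Big],
\]
where $\cS^c:=\{(t,x):x\le c(t)\}$ in case $(i)$ (resp.\ $x\ge c(t)$ in case $(ii)$). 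One then argues, in the usual four sub-steps: (a) $V^c=G$ on $\cS^c$ — this uses precisely the integral equation \eqref{reprb} together with the strong Markov property and the fact that, starting from $(t,x)\in\cS^c$, the dynamics "before entering $\cC^c$" is handled by It\^o applied to $G$ and \eqref{HG}; (b) $V^c$ solves the PDE $V^c_t+\cL V^c-r(\cdot)V^c=-\beta(\cdot)x$ in $\mathrm{int}\,\cC^c$ with $V^c=G$ on $\partial\cC^c$, hence $V^c$ is $C^{1,2}$ there by interior regularity; (c) $V^c\ge G$ everywhere — shown by applying It\^o to $V^c$ and using the sign of $H$ on the appropriate side of $\gamma$ (here Assumption \ref{ass:g}, i.e.\ $g$ of one sign and $\gamma>0$, is exactly what is needed: $H>0$ precisely on $\mathcal R$, which contains $\cC^c$ in the right way); (d) $V^c\le V$ — by It\^o applied to $V^c$ along the \emph{true} optimal stopping time $\tau_*$, using $V^c\ge G$ and $H\cdot\mathds{1}_{\cS^c}$ having the correct sign. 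Combining (c), (d) with the optimality of $\tau_*$ for $V$ forces $V^c=V$, hence $\cS^c=\cS$, hence $c=b$.

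\textbf{Main obstacle.} The delicate points are (i) justifying the generalised It\^o formula of Step 1 across the free boundary — but this is precisely where the local Lipschitz continuity of $b$ (Theorem \ref{th1}) and the continuity of $w_{xx}$ up to $\partial\cC\cap\{t<T-\eps\}$ (Proposition \ref{thm:C1}) are used, exactly as in \cite{PS}; and (ii) controlling the terminal layer $t\to T$, where $1/(T-t)$ blows up in the gradient bound \eqref{bwt} and where $\gamma(T)$ may be infinite — this is why the hypothesis $u_0\in L^1(0,T)$ is imposed, guaranteeing via Proposition \ref{prop:limb1} that $b$ has a well-defined (possibly infinite, in case $(ii)$ only if $\gamma(T)=\infty$) limit equal to $\gamma(T)$, so that the integral equation "closes up" correctly at $t=T$. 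I expect sub-step (c) of the uniqueness argument — proving $V^c\ge G$ for an \emph{arbitrary} candidate $c$ — to be the part requiring the most care, since it is there that the structural Assumption \ref{ass:g} on the sign of $g$ (equivalently, on the geometry of $\mathcal R$ relative to $\gamma$) must be fully exploited to pin down the sign of the relevant integrand.
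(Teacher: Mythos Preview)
Your Step 1 is correct and essentially identical to the paper's: the paper also obtains \eqref{reprV1} by a mollification of $V$, It\^o's formula on compacts (exit times $\tau_m$), dominated convergence in $n$ and then in $m$, using exactly the regularity from Proposition \ref{thm:C1} and the a.e.\ validity of \eqref{HJB}. Evaluation at $x=b(t)$ gives \eqref{reprb}, and the terminal value comes from Proposition \ref{prop:limb1}.

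Your Step 2, however, is not the argument the paper uses, and the specific sub-steps you sketch are problematic. The paper's four steps (for case $(i)$, say) are: (1) $U^c=G$ on $\{x\le c(t)\}$ (your (a)); (2) $V\ge U^c$ everywhere, by stopping the $\widehat U$-martingale at $\tau_c:=\inf\{s:X^x_s\le c(t+s)\}$, using $U^c(t+\tau_c,X_{\tau_c})=G(t+\tau_c,X_{\tau_c})$, and recognising the resulting expression as a sub-optimal payoff for $V$; (3) $c\ge b$, by a contradiction argument that stops \emph{both} the $\widehat U$- and $\widehat V$-martingales at $\sigma_b:=\inf\{s:X^x_s\ge b(t+s)\}$ from a point $x\le c(t)<b(t)$ and uses $H<0$ below $\gamma$; (4) $c\le b$, by a symmetric contradiction using $\tau_b$ from a point $b(t)<x<c(t)$. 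In particular, your sub-step (c) ``$V^c\ge G$'' is never established and is \emph{not needed}; in fact, for a generic continuous $c\le\gamma$ one computes $V^c-G=\media\big[\int_0^{T-t}e^{-\int_0^s r}H(t+s,X^x_s)\mathds{1}_{\{X^x_s>c(t+s)\}}\,ds\big]$, and since $H<0$ on the strip $\{c(t+s)<X^x_s<\gamma(t+s)\}$ this expression has no reason to be nonnegative. Your sub-step (d) is also mis-specified: applying the $\widehat U$-martingale along $\tau_*$ and using $V^c\ge G$ at $\tau_*$ would give $V^c\ge V$, the wrong inequality; the correct route to $V^c\le V$ uses $\tau_c$, not $\tau_*$. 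Finally, even granting (c) and (d), the implication ``$V^c\ge G$ and $V^c\le V$ $\Rightarrow$ $V^c=V$'' does not follow without an additional supermartingale property that you have not stated. So your identification of (c) as the ``main obstacle'' is a red herring: the paper bypasses it entirely by comparing the two candidate boundaries directly via the pair of contradictions in Steps 3--4.
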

\begin{proof}
Here we only show how to obtain \eqref{reprV1}. Then inserting $x=b(t)$ in \eqref{reprV1} and using that $V(t,b(t))=G(t,b(t))$, we get \eqref{reprb}. The proof of uniqueness is standard in modern optimal stopping literature and it dates back to \cite{Pe05} (for more examples see \cite{PS}). However, we provide the full argument in appendix for the interested reader.

Let $\left( V^{(n)}\right)_{n \geq0}$ be a sequence with $V^{(n)} \in C^{\infty}([0,T)\times \realip)$ such that (see \cite[Sec.~7.2]{GT})
\begin{align}\label{sob1}
\left( V^{(n)},V^{(n)}_x, V^{(n)}_t\right)\rightarrow \left( V,V_x, V_t\right)
\end{align}
as $n\uparrow \infty$, uniformly on any compact set, and
\begin{align}\label{sob2}
\lim_{n\rightarrow \infty}( V^{(n)}_{xx}-V_{xx})(t,x)=0\quad\text{for all $(t,x)\notin\partial\cC$}.
\end{align}

Let $\left( K_{m}\right)_{m \geq0}$ be an increasing sequence of compact sets converging to $[0,T]\times\realip$ and define
\[
\tau_m= \inf \left \{ s>0: (t+s,X_s^x) \notin K_{m} \right \}\wedge (T-t).
\]
Then an application of It\^o calculus gives
\begin{align}
\label{reprV2}
V^{(n)}(t,x)=& \media \bigg[ e^{-\int_{0}^{\tau_m}r(t+u)du}V^{(n)}(t+\tau_m,X^x_{\tau_m})\notag\\
 &  \hspace{0.5cm}-\int_{0}^{\tau_m}e^{-\int_{0}^{s}r(t+u)du}\left(V^{(n)}_t+\mathcal{L}V^{(n)}-r(t+s)V^{(n)} \right)(t+s,X^x_s)ds\bigg].
\end{align}

We want to let $n\uparrow \infty$ and use \eqref{sob1} and \eqref{sob2}, upon noticing that $(t+s,X_s)$ lies in a compact for $s\le \tau_m$, and that its law is absolutely continuous with respect to the Lebesgue measure on $[0,T]\times\realip$. The latter in particular implies $\prob\big((t+s,X^x_s)\in\partial\cC\big)=0$ for all $s\in[0,T-t)$ and enables the use of \eqref{sob2}. Recall that $V$, $V_x$, $V_t$ and $V_{xx}$ are locally bounded. Then, from dominated convergence and \eqref{reprV2} we obtain
\begin{align*}
V(t,x)=&\lim_{n\to\infty}V^{(n)}(t,x)\\
=&\media \bigg[ e^{-\int_{0}^{\tau_m}r(t+u)du}V(t+\tau_m,X^x_{\tau_m})\\
& \hspace{+15pt}-\int_{0}^{\tau_m}e^{-\int_{0}^{s}r(t+u)du}\left(V_t+\mathcal{L}V-r(t+s)V \right)(t+s,X^x_s)ds\bigg].
\end{align*}
Therefore, using \eqref{HJB} (or equivalently Corollary \ref{cor:C12}) we also find
\begin{align*}
V(t,x)\!=& \media\bigg[ e^{-\int_{0}^{\tau_m}r(t+u)du}V(t+\tau_m,X^x_{\tau_m})+\!\!\int_{0}^{\tau_m}\!\!e^{-\int_{0}^{s}r(t+u)du}\beta(t+s)X^x_s \mathds{1}_{\{(t+s,X^x_s) \in \cC\}}ds\\
 &\hspace{+15pt}+ \int_{0}^{\tau_m}e^{-\int_{0}^{s}r(t+u)du}\big[\beta(t+s)X^x_s-H(t+s,X^x_s)\big]\mathds{1}_{\{(t+s,X^x_s) \in \cS \}} ds\bigg].
\end{align*}

Finally we take $m\uparrow \infty$, use that $\tau_m \uparrow (T-t)$ and $V(T,x)=G(T,x)$, and apply dominated convergence to obtain \eqref{reprV1}.
\end{proof}

\section{Numerical findings}
\label{numerics}
Here we apply the results obtained in the previous sections to some situations of practical interest. For simplicity, throughout the section we take $\rho=\widehat \rho$.
A standard choice to model the force of mortality is the so-called Gompertz-Makeham law, which corresponds to
\begin{align}
\label{force}
\mu(t)=A+BC^t,
\end{align}
where $A$, $B$ and $C$ are real-valued and are estimated by statistical data of the population.
For simplicity, here we assume $A=0.00055845, \ B=0.000025670, \ C=1.1011$ as in \cite{HD}\footnote{These figures are those used by the Belgian regulator, Arr\^et\'e Vie 2003}. Time is measured in years and we consider two different scenarios:
\begin{description}
 \item[(a)] $f(t)\equiv f>0$ (see \eqref{f}) and $\mu^S(\cdot)=\mu(\cdot)$;
 \item[(b)] $\mu^O(\cdot)=\mu(\cdot)$ and $\mu^S(\cdot)=(1+\bar{\mu})\mu^O(\cdot)$ with $\bar{\mu} \in (-1,+\infty)$
 \end{description}

In the first scenario the money's worth function \eqref{f} is constant. If the individual believes she is healthier than the average in the population, then $\mu^S(\,\cdot\,) < \mu^O(\,\cdot\,)$ and therefore $f>1$. Conversely, for an individual who is pessimistic about her health $\mu^S(\,\cdot\,) > \mu^O(\,\cdot\,)$ and therefore $f<1$.  It is important to notice that the the function $g$ in \eqref{g} is \emph{monotonic} increasing (decreasing) if $f$ is a constant smaller (greater) than $1$.
\begin{figure}[ht!]
\centering
\includegraphics[scale=0.3]{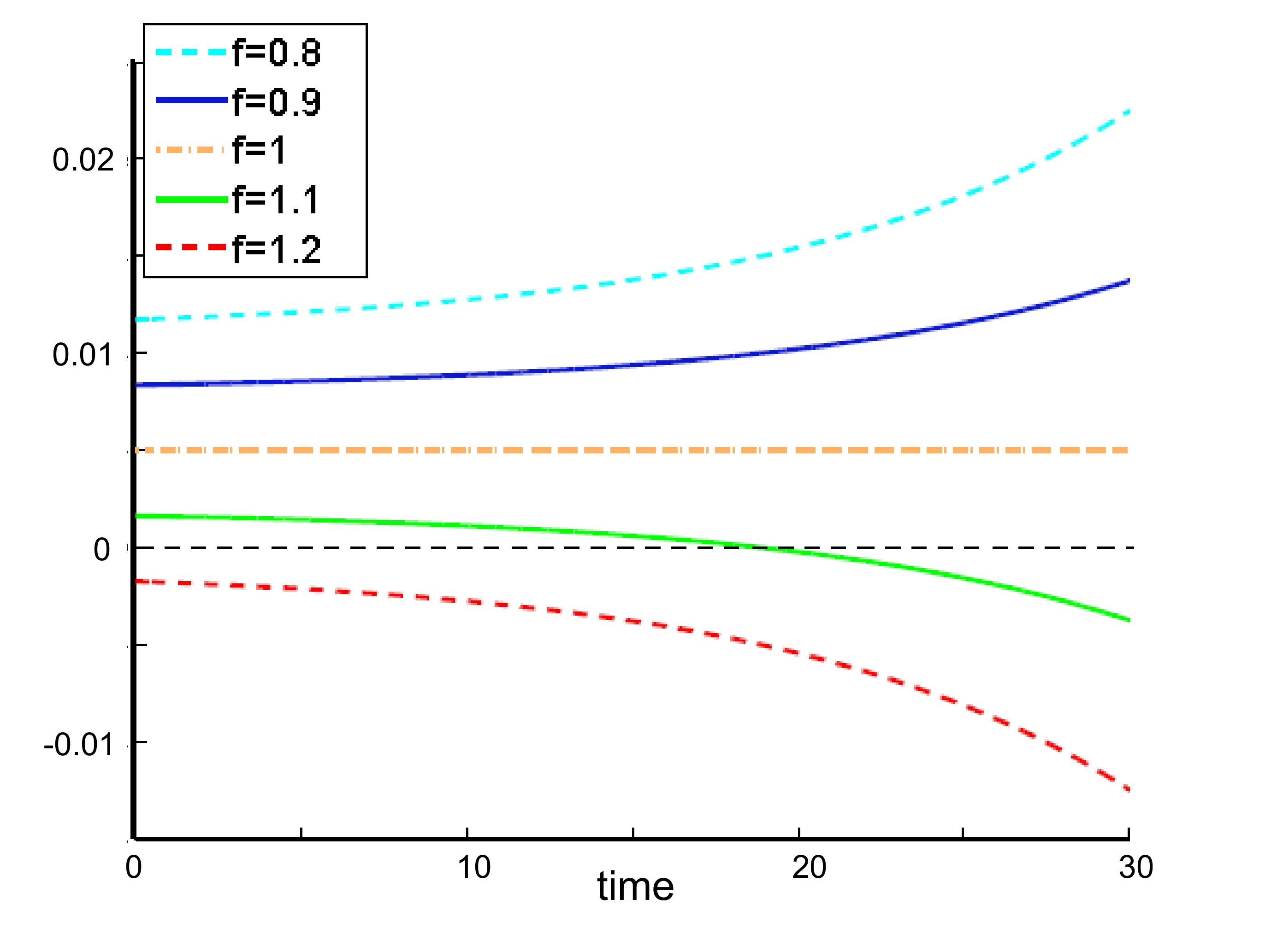}
\caption{Scenario (a). Behavior of function $g$.}
\label{fig1}
\end{figure}

The second scenario, uses the so-called \textit{proportional hazard rate transformation}, introduced in actuarial science by Wang (\cite{Wang}, see also \cite{MY}). If $\bar{\mu}<0$ (resp.\ $\bar \mu>0$), the individual considers herself healthier (resp.\ unhealthier) than the average.
The limit case  $\bar{\mu}\to -1$, is not relevant in practice as it corresponds to an individual whose life-expectancy is infinite. Similarly, the case $\bar{\mu}\rightarrow +\infty$ is also irrelevant in practice as it corresponds to an individual who believes she is about to die.
An important difference between scenarios (a) and (b) above is that, in the latter, the money's worth $f$ varies over time. In particular, if $\bar{\mu}<0$ (resp.\ $\bar \mu>0$) then
$f(t)>1$ (resp.\ $f(t)<1$), for all $t\in[0,T]$.

We notice that in all our numerical experiments the function $\ell(\,\cdot\,)$ of \eqref{g} is positive on $[0,T]$, so that the sign of $\gamma$ in \eqref{gamma} only depends on that of $K$ and $g$.
For the reader's convenience we also recall the standard numerical algorithm to compute \eqref{reprb} (under the assumptions of Theorem \ref{ThVolt1}). We take a equally-spaced partition $0=t_0<t_1<\ldots<t_{n-1}<t_n=T$ with $h:=t_{i+1}-t_i$. Starting from $b(T)=\gamma(T)$, for $i=1,2,\ldots n$ we solve
\begin{align*}
G(t_{i},b(t_i))= e^{-\int_{0}^{(n-i)h}r(t_i+u)du}\media\, \Big[ G(T,X^{b(t_i)}_{(n-i)h}) \Big]+h\sum^{n-i}_{k=1}Y(t_i,b(t_i),t_i+kh,b(t_i+kh))
\end{align*}
where
\begin{align*}
Y(t,b(t),\,&t+s, b(t+s))\\
&:=e^{-\int_{0}^{s}r(t+u)du}\Big(\beta(t+s)\media\big[X^{b(t)}_{s}\big]
-\media\Big[ H(t+s,X^{b(t)}_{s})\mathds{1}_{\{X^{b(t)}_{s} \le b(t+s)\}}\Big]\Big).
\end{align*}
Notice that the above formula is intended for $\cS_t=[0,b(t)]$. To deal with $\cS_t=[b(t),+\infty)$ we must change the indicator variable in the last expression in the obvious way.
\vspace{+5pt}

Unless otherwise specified, in what follows we take $T=30$, $\eta=50$, $\theta=4.5\%$, $\alpha=3.5\%$, $\sigma=10\%$ and $\widehat\rho=\rho=4\%$ (we take $\widehat\rho=\rho$ just for simplicity).
\vspace{+5pt}

\noindent\textbf{Scenario} (a).
\vspace{+5pt}

\begin{figure}[ht!]
\centering
\includegraphics[scale=0.5]{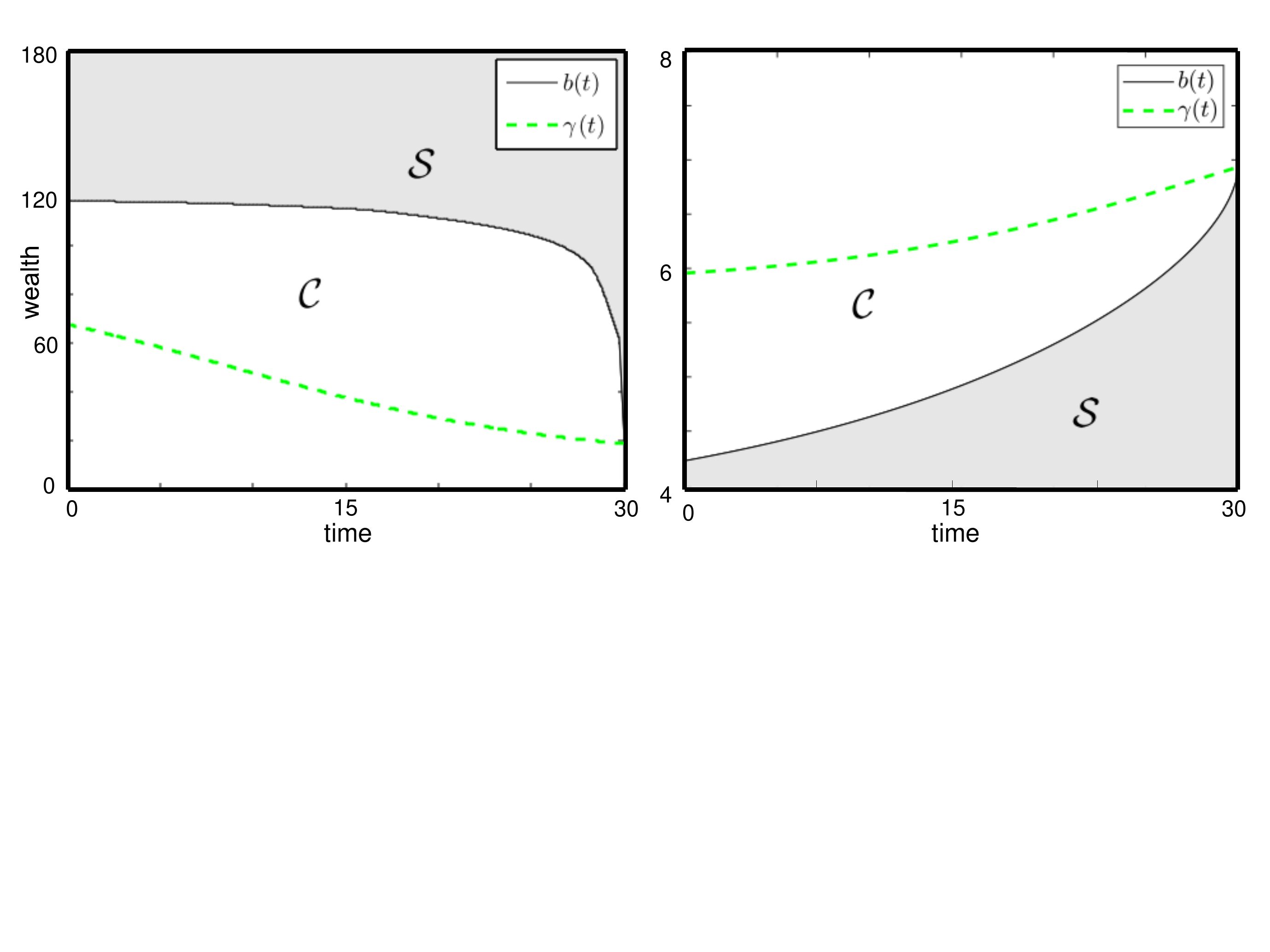}
\vspace{-4.5cm}
\caption{Scenario (a). The optimal annuitization regions and boundaries for $f=1.2, K=2$ (left plot) and for $f=0.8, K=-2$ (right plot).}
\label{fig2}
\end{figure}
In Figure \ref{fig1} the function $g$ in \eqref{g} is computed for different values of the constant $f$. As noticed in \cite{HD}, it is reasonable to expect that the value of $f$ is close to $1$. Notice that if $f$ is high enough ($f=1.2$) then $g$ remains always negative even if $\theta>\rho$.
We observe that $g$ varies slowly and in most cases it does not change sign, as required in Assumption \ref{ass:g}. However, if $f$ changes its sign (at most once, since $g$ is monotonic) we can still apply our methods as described in fuller details in Section \ref{sec:final}.

Figure \ref{fig2} shows the optimal annuitization regions and boundaries examining two of the cases considered in Figure \ref{fig1}, where $g$ is either negative ($f=1.2$) or positive ($f=0.8$) on $[0,T]$. We note that in the former case, if $K \leq 0$ then an immediate annuitization is optimal for all $(t,x)\in[0,T]\times \realip$ (see Remark \ref{Ran}). In the presence of a fixed acquisition fee $K>0$, instead, individuals annuitize as soon as the fund's value exceeds the boundary $b(\cdot)$ (left plot in Fig.~\ref{fig2}). On the other hand, in the case $f=0.8$, if $K \geq 0$ then the annuity is never purchased (Remark \ref{Ran}). Instead, in presence of a fixed tax incentive $K<0$, the annuity is purchased as soon as the fund's value falls below the boundary $b(\cdot)$ (right plot in Fig.~\ref{fig2}).
\vspace{+5pt}

\noindent\textbf{Scenario} (b).
\vspace{+5pt}

In Figure \ref{fig3} we look at scenario (b) and the function $g$ is plotted for different values of the constant $\bar{\mu}$ in cases $\theta<\rho$ (left plot) and $\theta>\rho$ (right plot). We note that, in a right neighbourhood of zero $g$ has the same sign of $\theta-\rho$. For most parameter choices, either $g$ does not change sign or it changes it once. 
Notice however that the change in sign occurs for $t\approx 22$ years, hence Assumption \ref{ass:g} is very reasonable.

\begin{figure}[ht!]
\centering
\includegraphics[scale=0.5]{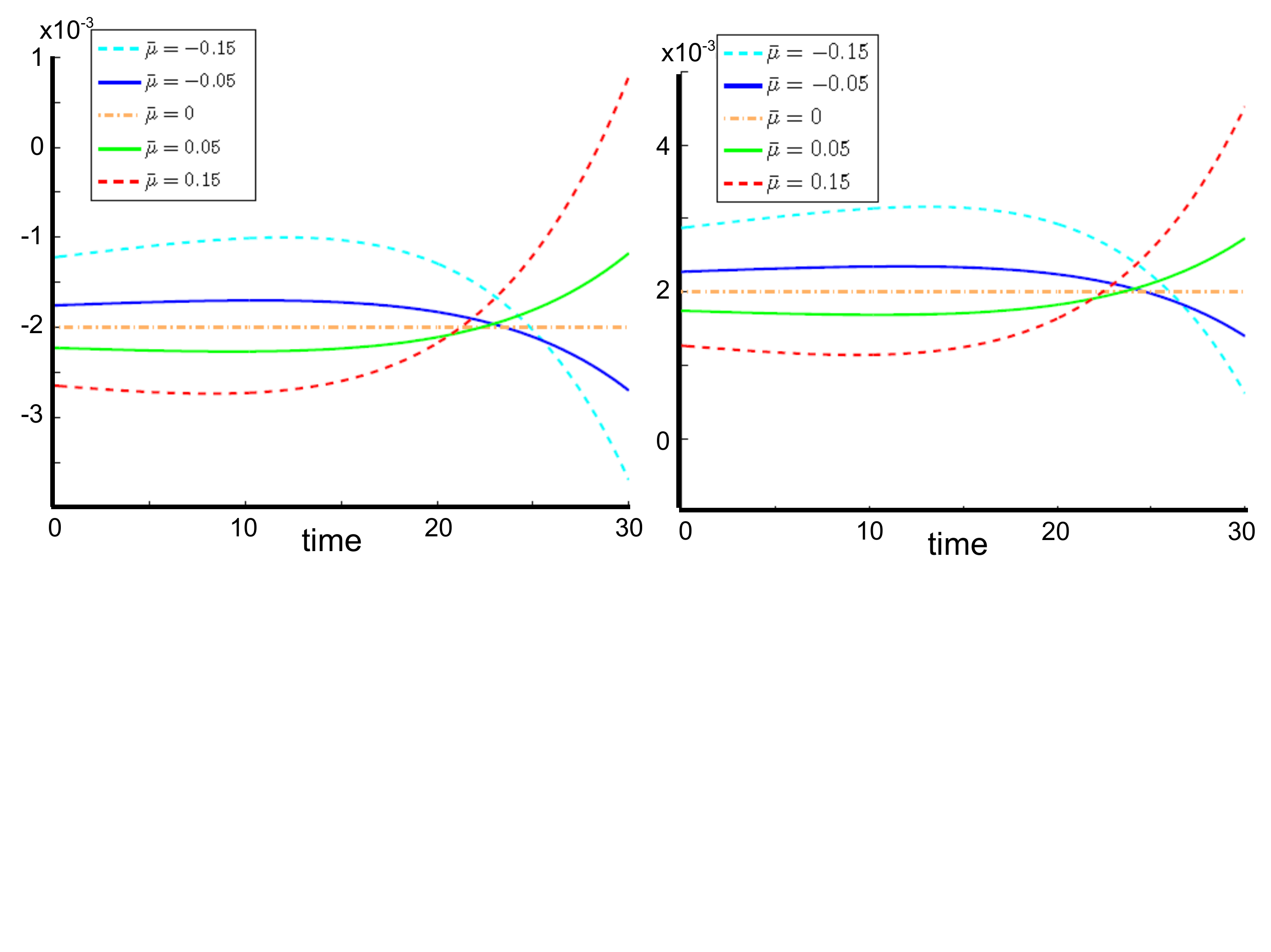}
\vspace{-4.5cm}
\caption{Scenario (b). Behaviour of function $g$ for $\theta<\rho$ (left plot) and for $\theta>\rho$ (right plot).}
\label{fig3}
\end{figure}

Optimal annuitization regions and their boundaries are presented in Figure \ref{fig4}.
\begin{figure}[ht!]
\centering
\includegraphics[scale=0.5]{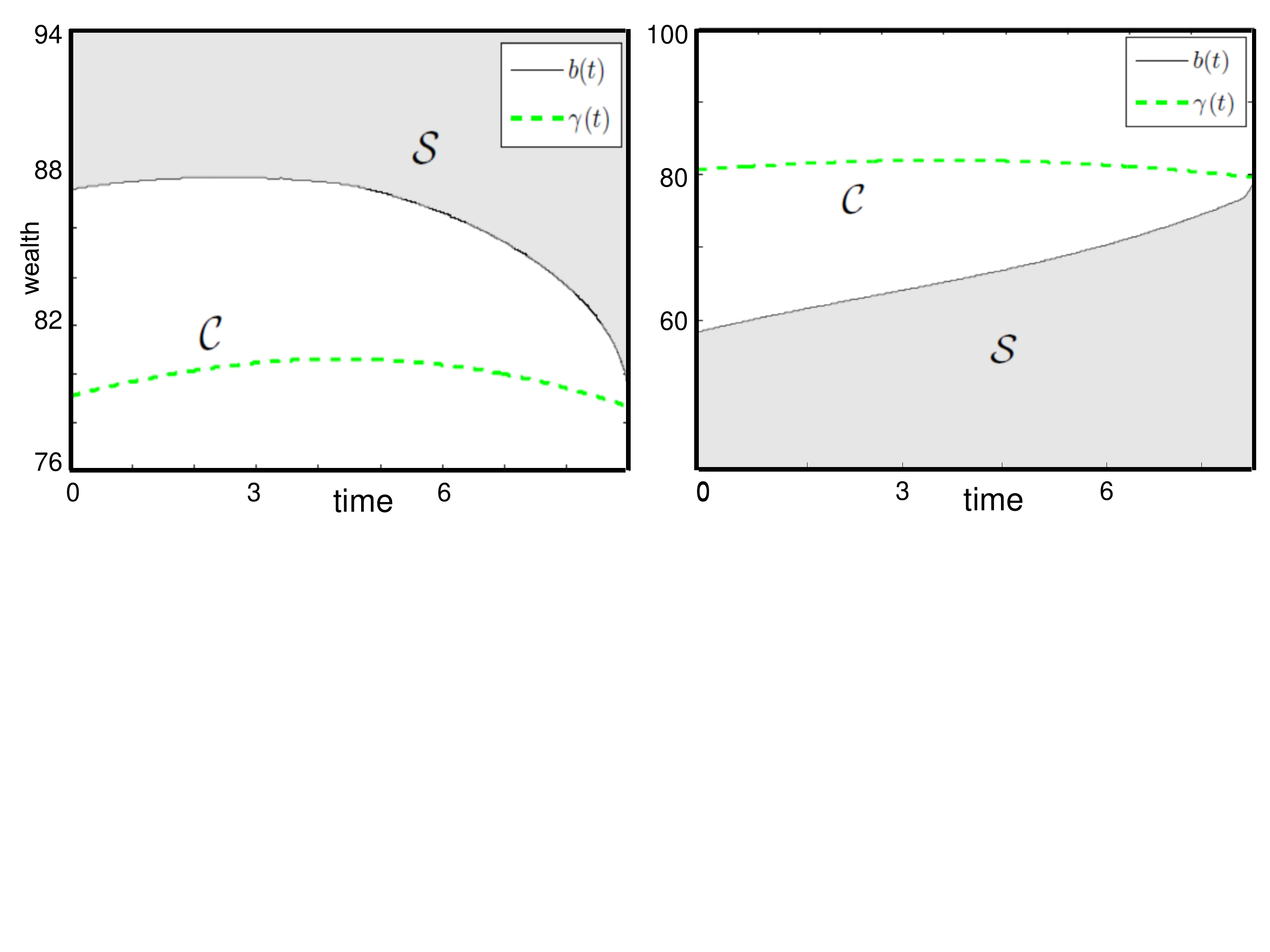}
\vspace{-4.8cm}
\caption{Scenario (b). The optimal annuitization regions and boundaries for $\bar{\mu}=-0.05, K=2, \theta<\rho$ (left plot) and for $\bar{\mu}=0.05, K=-2, \theta>\rho$ (right plot).}
\label{fig4}
\end{figure}
Re\-mar\-ka\-bly, we observe a non-monotonic optimal boundary in the left plot.
In the case $g<0$, an accurate numerical solution of the integral equation for $b$ needs a very fine partition of the interval $[0,T]$, which results in long computational times. We believe this is due to the steep gradient of $b$ near $T$ and to its lack of monotonicity. To simplify our analysis (which is intended for illustrative purpose only) we consider shorter time horizons for the investor than in scenario (a), i.e.~$T=9$ years.

\begin{figure}[ht!]
\centering
\includegraphics[scale=0.3]{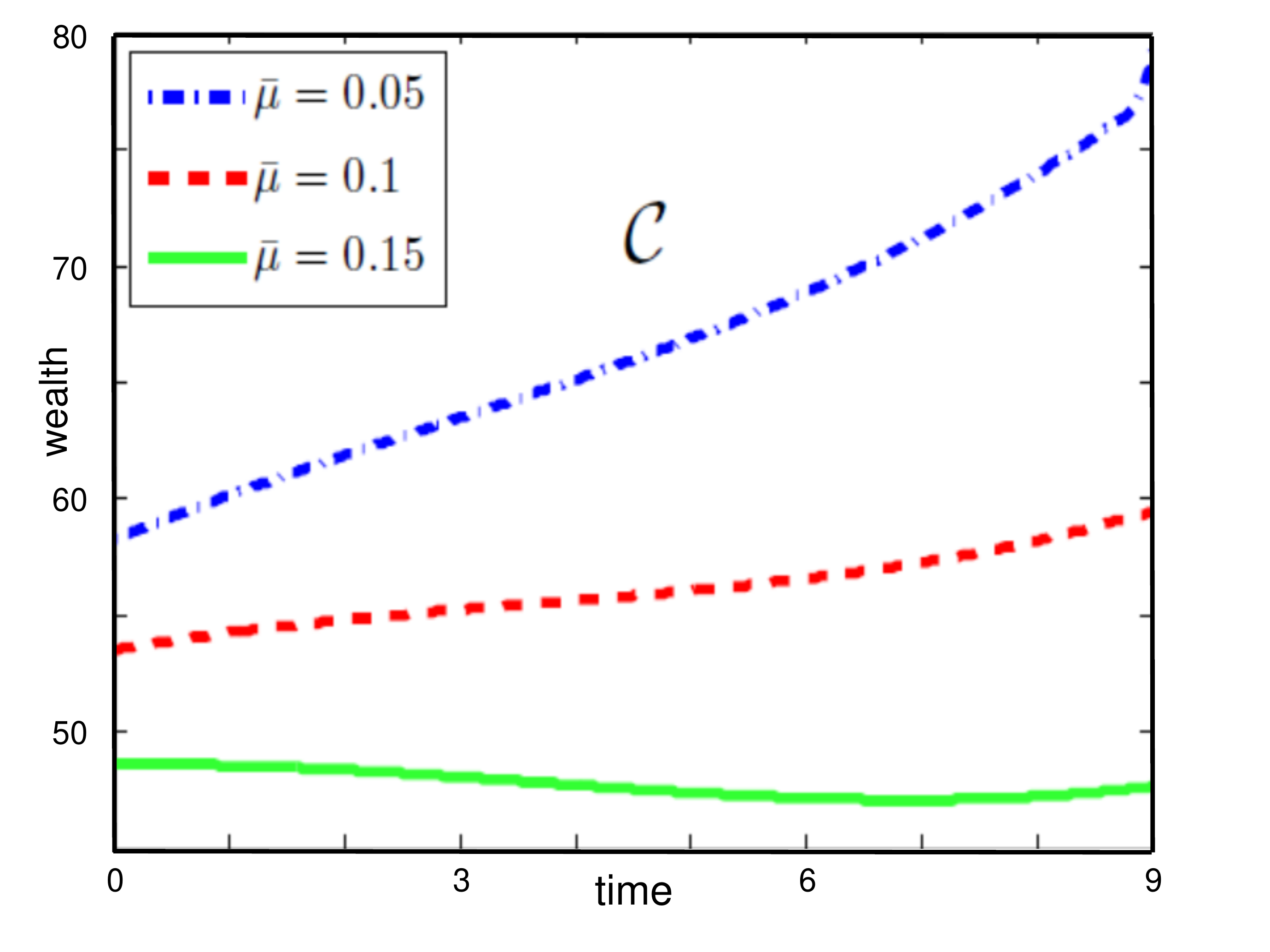}
\caption{Scenario (b). Sensitivity of the optimal boundary with respect to $\bar \mu$ for $\theta>\rho$ and $K=-2$.}
\label{fig5}
\end{figure}
In Figure \ref{fig5} we also study sensitivity of the annuitization boundary with respect to $\bar \mu>0$.
Recall that as $\bar \mu$ increases the individual considers herself increasingly unhealthier than the average population. As a results, we observe that the boundary $b(\cdot)$ is pushed downward and the continuation region expands. This is intuitively clear because annuities are financially less appealing for individuals with shorter (subjective) life expectancy.

\section{Final remarks and extensions}\label{sec:final}

As discussed in Remark \ref{rem:gg}, our main technical assumption (Assumption \ref{ass:g}) is supported by numerical experiments on the Gompertz-Makeham mortality law. The latter is widely used in the actuarial profession, hence it is a natural choice from the modelling point of view. We also notice that Assumption \ref{ass:g} allows for $g(T)=0$. This condition is mainly of mathematical interest. Indeed it enables extensions of our results to cover examples where $g$ is monotonic on $[0,T]$ and it changes its sign once. The latter examples are observed in Figures \ref{fig1} and \ref{fig3} (although the change of sign occurs only on rather long time horizons, e.g.\ $T>20$ years). On the other hand it appears that the function $\ell(\cdot)$ (see \eqref{g}) is positive in all of our numerical experiments.

Here we explain how our results can cover extensions to the case of $g(\cdot)$ changing its sign once. We shall consider separately the case of $K<0$ and $K>0$. From now on we assume that $g$ is monotonic and there exists $t_0\in(0,T)$ such that $g(t_0)=0$. We also assume that $\ell(t)> 0$ for $t\in[0,T]$ and recall $\mathcal R$ and $\gamma$ from \eqref{R} and \eqref{gamma}.
\vspace{+5pt}

\noindent\textbf{Case} $K<0$.
\vspace{+5pt}

1. (\emph{$g(\cdot)$ decreasing}).  In this setting we have $\gamma(t)>0$ for $t\in[0,t_0)$ with $\gamma(t)\uparrow +\infty$ as $t\uparrow t_0$. Moreover $\mathcal R$ lies above the curve $\gamma$ on $[0,t_0)$. For $t\in[t_0,T]$ we have $\mathcal R=\emptyset$ and therefore $\cS\cap\{t\ge t_0\}=[t_0,T]\times\realip$ (see Remark \ref{Ran}). This implies that $t_0$ is an effective time horizon for our optimization problem \eqref{value2} since it is optimal to immediately stop for any later time. From a mathematical point of view this means that we can equivalently study \eqref{value2} with $T$ replaced by $t_0$. On the effective time horizon $[0,t_0]$ part $(i)$ of Assumption \ref{ass:g} holds and we can repeat the analysis carried out in Sections \ref{sec:value} and \ref{sec:freeb}.
\vspace{+5pt}

2. (\emph{$g(\cdot)$ increasing}). In this setting we have $\mathcal R=\emptyset$ for $t\in[0,t_0]$, while in the interval $(t_0,T]$ we have $\gamma(\cdot)>0$ with $\gamma(t)\uparrow +\infty$ as $t\downarrow t_0$ and $\mathcal{R}$ lies above the curve $\gamma$. We can therefore study problem \eqref{value2} on the restricted time horizon $(t_0,T]$ where our Assumption \ref{ass:g} holds. This will give us $\cS_t=[0,b(t)]$ for $t\in(t_0,T]$ and all the results from the previous sections hold.

Moreover, we can show that $\cS_t=[0,b(t)]$ also for $t\in[0,t_0]$ (with $b(t)$ possibly infinite). For that we recall that $w(t,\,\cdot\,)$ is convex for each $t\in[0,T]$ (Proposition \ref{prop:bounds}) and $w(t,0+)=0$ (see Lemma \ref{lem:wt0}). The latter two properties imply $\cS_t=[0,b(t)]$ for $t\in[0,t_0]$ as claimed and $w_x(t,\,\cdot\,)\ge 0$ for all $t\in[0,t_0]$.

In summary, $\cS_t=[0,b(t)]$ for all $t\in[0,T]$ and most of the analysis in Section \ref{sec:freeb} carries over to this setting. However, it should be noted that methods used in Theorem \ref{th1} only allow to establish Lipschitz continuity of $b$ in $[t_0,T]$. A complete study of the boundary in $[0,t_0]$ requires new methods and we leave it for future work.
\vspace{+5pt}

\noindent\textbf{Case} $K>0$.
\vspace{+5pt}

1. (\emph{$g(\cdot)$ decreasing}). Here $\mathcal R\cap\{t\le t_0\}=[0,t_0]\times\realip$, so that $\cC\cap\{t\le t_0\}=[0,t_0]\times\realip$ and it is optimal to delay the annuity purchase at least until $t_0$, regardless of the dynamics of the fund's value. On $(t_0,T]$ instead we find that $\gamma(\cdot)>0$ with $\gamma(t)\uparrow +\infty$ as $t\downarrow t_0$ and $\mathcal{R}$ lies below the curve $\gamma$. From a mathematical point of view this means that we only need to study our problem \eqref{value2} on the restricted time horizon $(t_0,T]$ where our Assumption \ref{ass:g} holds.
\vspace{+5pt}

2. (\emph{$g(\cdot)$ increasing}). Here $\mathcal R\cap\{t\ge t_0\}=[t_0,T]\times\realip$, so that $\cC\cap\{t\ge t_0\}=[t_0,T]\times\realip$ and for $t\ge t_0$ it is optimal to delay the annuity purchase until maturity $T$, regardless of the dynamics of the fund's value. On $[0,t_0)$ instead we find that $\gamma(\cdot)>0$ with $\gamma(t)\uparrow +\infty$ as $t\uparrow t_0$ and $\mathcal{R}$ lies below the curve $\gamma$.

This case is much more challenging and we could not cover it with methods developed so far. We leave it for future research but nevertheless we would like to make an observation to highlight a key difficulty.

Take $t \in [0,t_0)$. The martingale property in \eqref{martw} allows us to rewrite problem \eqref{Vrew2} as follows
\begin{align}
\label{Vrew3}
v(t,x)=\sup_{0 \leq \tau \leq t_0-t}  \media \bigg[ &\int_{0}^{\tau}e^{-\int_{0}^{s}r(t+u)du}H(t+s,X_s^x) ds\\
&+e^{-\int_{0}^{t_0-t}r(t+u)du}v(t_0,X_{t_0-t}^x)\mathds{1}_{\{\tau = t_0-t\}} \bigg],\notag
\end{align}
where $v(t_0,x)$ may be explicitly calculated. In fact, from \eqref{g} we get
\begin{align*}
v(t_0,x)&=\media \hspace{-1pt} \bigg[\hspace{-1pt} \int_{0}^{T-t_0}e^{-\int_{0}^{s}r(t_0+u)du}\Big(K\ell(t_0+s)+g(t_0+s)X_s^x \Big)ds\bigg] \\
&= \int_{0}^{T-t_0}e^{-\int_{0}^{s}r(t_0+u)du}\Big(K\ell(t_0+s)+x\,g(t_0+s)e^{(\theta-\alpha)s} \Big)ds.
\end{align*}
So it is clear that $v(t_0,x)\!=\!c_1\!+\!c_2x$ with $c_1$ and $c_2$ positive constants that depend on $t_0$.

Due to the geometry of $\mathcal R$ we would expect that the stopping region lies somewhere above the curve $\gamma$ for $t\in[0,t_0)$. However if we now compute $v_x$ as in Proposition \ref{prop:bounds} it turns out that
\begin{align*}
v_x(t,x)=\mmedia \bigg[\hspace{-1pt} \int_{0}^{\tau_*}e^{-\int_{0}^{s}r(t+u)du}g(t+s)e^{(\theta-\alpha)s} ds\hspace{-1pt}+\hspace{-1pt}e^{-\int_{0}^{\tau_*}r(t+u)du+(\theta-\alpha)(t_0-t)}c_2\mathds{1}_{\{\tau_* = t_0-t\}} \hspace{-1pt} \bigg].
\end{align*}
Since $g(\cdot)<0$ on $[0,t_0)$ and $c_2>0$ it is no longer obvious that $v_x$ is negative on $[0,t_0)\times\realip$. This would have been a sufficient condition to guarantee that $\cS_t$ and $\cC_t$ are connected for all $t\in[0,t_0)$.

Noticing that $H(t,x)$ and $v(t_0,x)$ are linear in $x$, the asymptotic behaviour of $v(t,x)/x$ as $x\to\infty$ and convexity of $v(t,\cdot)$ (see Proposition \ref{prop:bounds}) suggest that, for a fixed $t\in[0,t_0)$, we should have $\cS_t=[b_1(t),b_2(t)]$ with $b_2(t)$ possibly infinite. This however leaves several open questions concerning the actual shape of $\cS$ and the regularity of its boundary. A complete answer to such questions requires the use of different methods and we leave it for future work.

\subsection{A comment on the regularity of the optimal boundary}

One of the main mathematical challenges in this work was the lack of monotonicity for the optimal boundary, which we managed to overcome by showing that the boundary is indeed locally Lipschitz.
While our methodology only relies on stochastic calculus, the idea of looking at the implicit function theorem and to provide bounds on $b'_\delta$ in \eqref{bprimo}, somehow comes from PDEs (we refer to \cite{DAS} for a more extensive review on the topic). In particular we were inspired by \cite{SS91}, where a variational inequality associated to an optimal stopping problem\footnote{A very minor difference with our work is that \cite{SS91} studies a minimisation problem.} is studied (see~eq.~(1.3) therein). Interestingly, our assumptions are rather weaker than those contained in \cite[pp.~376-377]{SS91}, so in this sense our probabilistic method extends results which were obtained in \cite{SS91} with purely analytical tools. The parallel with our notation is better understood if we use the problem formulation \eqref{Vrew}, although using \eqref{value2} is clearly equivalent.

First of all the problem studied in \cite{SS91} is for Brownian motion with drift and both the gain function and running cost in the optimal stopping problem are required to be of polynomial growth. This is immediately violated in our case where $(X_t)_{t\ge0}$ is the exponential of a Brownian motion with drift and the function $H$ in \eqref{Vrew} is linear in $X$. More importantly, what really is crucial for the proof of Lipschitz regularity of the free boundary in \cite{SS91} is that some particular lower bounds on the gradient of the running cost should hold. According to \cite{SS91}, in our Assumption \ref{ass:g} we would additionally need
\begin{align*}
e^{-\int_0^tr(u)du}|g(t)|\ge c\left[1+e^{-\int_0^tr(u)du}\left|H_t(t,x)-r(t)H(t,x)\right|\right]
\end{align*}
for all $(t,x)\in[0,T]\times\realip$ and for some $c>0$. Since the left-hand side of the above expression is independent of $x$, it is clear that the bound cannot hold.


\section*{Appendix}

\subsection*{Admissible stopping times}

Here we use an argument from \cite{CG12} to confirm that we incur no loss of generality in optimising over $\cT_{t,T}$ instead of using stopping times with respect to $(\cG_t)_{t\ge 0}$ where $\cG_t=\cF_t\vee\sigma(\{\Gamma_D>s\}\,,\,0\le s\le t)$. The key is that for any $(\cG_t)_{t\ge0}$-stopping time $\tau'$ there is a $(\cF_t)_{t\ge 0}$-stopping time $\tau$ such that $\tau\wedge\Gamma_D=\tau'\wedge\Gamma_D$, $\prob\times\Q^S$-a.s.~(see \cite[Ch.~VI.3, p.~370]{Pr}).

Then, letting $\cT'_{0,T}$ be the set of $(\cG_t)_{t\ge0}$-stopping times in $[0,T]$ we have

\begin{align*}
&\sup_{\tau' \in \cT'_{0,T} }\cE^S\bigg[\! \int_{0}^{\Gamma_D \wedge \tau' }\!\!e^{-\rho s}\alpha X_s ds+\mathds{1}_{\{\Gamma_D \leq \tau' \}}e^{-\rho \Gamma_D} X_{\Gamma_D} +P_{\eta+\tau'}\hspace{-3pt}\int_{\Gamma_D \wedge \tau'}^{\Gamma_D}e^{-\rho s}ds\bigg]\nonumber\\
&=\sup_{\tau' \in \cT'_{0,T} }\cE^S\bigg[\! \int_{0}^{\Gamma_D \wedge \tau' }\!\!e^{-\rho s}\alpha X_s ds+\mathds{1}_{\{\Gamma_D = \tau'\wedge\Gamma_D \}}e^{-\rho \Gamma_D} X_{\Gamma_D} +P_{\eta+\tau'\wedge\Gamma_D}\hspace{-3pt}\int_{ \tau' \wedge\Gamma_D}^{\Gamma_D}e^{-\rho s}ds\bigg]\nonumber\\
&=\sup_{\tau \in \cT_{0,T} }\cE^S\bigg[\! \int_{0}^{\Gamma_D \wedge \tau }\!\!e^{-\rho s}\alpha X_s ds+\mathds{1}_{\{\Gamma_D = \tau\wedge\Gamma_D \}}e^{-\rho \Gamma_D} X_{\Gamma_D} +P_{\eta+\tau\wedge\Gamma_D}\hspace{-3pt}\int_{ \tau \wedge\Gamma_D}^{\Gamma_D}e^{-\rho s}ds\bigg]\nonumber\\
&=\sup_{\tau \in \cT_{0,T} }\cE^S\bigg[\! \int_{0}^{\Gamma_D \wedge \tau }\!\!e^{-\rho s}\alpha X_s ds+\mathds{1}_{\{\Gamma_D \leq \tau \}}e^{-\rho \Gamma_D} X_{\Gamma_D} +P_{\eta+\tau}\hspace{-3pt}\int_{\Gamma_D \wedge \tau}^{\Gamma_D}e^{-\rho s}ds\bigg],
\end{align*}
where in the first equality we used that $\{\Gamma_D\le \tau'\}=\{\Gamma_D= \tau'\wedge\Gamma_D\}$ and that
\[
P_{\eta+\tau'}\hspace{-3pt}\int_{ \tau' \wedge\Gamma_D}^{\Gamma_D}e^{-\rho s}ds=\mathds{1}_{\{\tau'\le \Gamma_D\}}P_{\eta+\tau'}\hspace{-3pt}\int_{ \tau' \wedge\Gamma_D}^{\Gamma_D}e^{-\rho s}ds.
\]
The same argument may be repeated for stopping times in $\cT'_{t,T}$ upon taking expectation conditioned to $\cF_t\cap\{\Gamma_D>t\}$.

\subsection*{Proof of Lemma \ref{lem:limw0}}
Monotonicity of $w(t,\,\cdot\,)$ follows immediately from \eqref{wx}.

We now prove $(iii)$. For that we need a preliminary result and we introduce
\[
\gamma_{0}(t):=\inf\{x\in\realip\,:\,H(t,x)<-\delta_0\}
\]
for a fixed $\delta_0>0$. Since $g(t)<0$ for $t\in(0,T)$ it easy to verify that
\begin{align}\label{gammadelta}
\gamma_{0}(t)=-(\delta_0+K\ell(t))/g(t)
\end{align}
and $H(t,x)<-\delta_0$ for $x\in(\gamma_{0}(t),+\infty)$.

We define the stopping time
\begin{align}\label{tau0}
\tau_0=\tau_0(t,x):=\inf\{s\ge 0\,:\,X^x_s\le \gamma_{0}(t+s)\}\wedge (\T-t)
\end{align}
and notice that $\tau_0(t,x)\to\T-t$ in probability as $x\to\infty$. In fact for any $\eps>0$ there exists $\gamma_\eps\in(0,+\infty)$ such that $\gamma_{0}(t+s)\le \gamma_\eps$ for all $s\le \T-t-\eps$, therefore
\begin{align}\label{limprob}
\prob(\T-t-\tau_0\ge \eps)=&\,\prob(\tau_0\le \T-t-\eps)\le\, \prob\left(\inf_{0\le s\le \T-t-\eps}X^x_s\le \gamma_\eps\right)\\[+4pt]
\le&\, \prob\left(\inf_{0\le s\le \T-t-\eps}X^1_s\le \frac{\gamma_\eps}{x}\right)\nonumber
\end{align}
and clearly the last term goes to zero as $x\to\infty$.
\vs{+4pt}

Notice now that if there exists $t_*<T$ such that $[t_*,\T)\times\realip\subset\cC$, then for all $s\ge t_*$ one has
\begin{align}\label{form1}
w(s,\gamma_0(s))=&\,\media\left[\int_0^{\T-s}e^{-\int_0^{s+u}r(v)dv}H(s+u,\gamma_0(s)X^{1}_u)du\right]\\[+4pt]
=&\,\gamma_0(s)\!\int_0^{\T-s}\!\!e^{-\int_0^{s+u}r(v)dv}g(s\!+\!u)e^{(\theta-\alpha)u}du\!+\!K\!\int_0^{\T-s}\!\!e^{-\int_0^{s+u}r(v)dv}\ell(s\!+\!u)du.\nonumber
\end{align}
Noticing that
\begin{align}\label{form2}
\gamma_0(s)\int_0^{\T-s}g(t+u)du=-(\delta_0+K\ell(s))\frac{1}{g(s)}g(s+\xi_s)(\T-s)
\end{align}
for some $\xi_s\in[0,\T-s]$, we immediately conclude that
\begin{align}\label{limwu}
[t_*,\T)\times\realip\subset\cC\,\implies\,\lim_{s\to\T} w(s,\gamma_0(s))=0,
\end{align}
since $\ell\in C([0,T])$.
\vs{+4pt}

We now proceed in two steps.
\vs{+4pt}

\emph{Step $1$.} Here we prove that $\cS\cap([t,T)\times\realip)\neq\emptyset$ for all $t<T$. Assume the contrary, i.e.~there exists $t_*<T$ such that $[t_*,T)\times\realip\subset \cC$. Hence for any $t\in[t_*,\T)$ given and fixed we have $\tau_*(t,x)=\T-t$ $\prob$-a.s.~for all $x\in\realip$. In particular we take $x>\gamma_0(t)$. In order to obtain an upper bound for the value function we use the martingale property of \eqref{martw} and \eqref{tau0} to get
\begin{align}\label{limw00}
0\le w(t,x)=&\media\left[\int_0^{\tau_0}e^{-\int_0^{t+s}r(u)du}H(t+s,X^x_s)ds+w(t+\tau_0,X^x_{\tau_0})\right]\nonumber\\[+4pt]
\le&\,\media\left[\mathds{1}_{\{\tau_0\le \T-t-\eps\}}w(t+\tau_0,X^x_{\tau_0})\right]+\media\left[\mathds{1}_{\{\T-t-\eps<\tau_0< \T-t\}}w(t+\tau_0,X^x_{\tau_0})\right]\\[+4pt]
&-\delta_0 (T-t-\eps)\, c\,\prob(\tau_0> T-t-\eps)\nonumber
\end{align}
where $c>0$ is a uniform lower bound for the discount factor.

Since $X^x_{\tau_0}=\gamma_0(t+\tau_0)$ on $\{\tau_0<\T-t\}$ and $\gamma_0(\,\cdot\,)$ is bounded and continuous on $[t,\T-\eps]$, then, using that $w\in C([0,T]\times\realip)$, we can find $c_\eps\in[0,+\infty)$, independent of $x$ and such that
\begin{align}
\sup_{0\le s\le \T-t-\eps}w(t+s,\gamma_0(t+s))\le c_\eps.
\end{align}
On the other hand \eqref{limwu} implies that
\begin{align}
d_\eps:=\sup_{0\le s\le \eps}w(\T-s,\gamma_0(\T-s))<+\infty
\end{align}
so that \eqref{limw00} gives
\begin{align*}
0\le w(t,x)\le c_\eps\,\prob(\tau_0\le \T-t-\eps)+\prob(\tau_0>\T-t-\eps)\left(d_\eps\, -\delta_0 (T-t-\eps)\, c\,\right).
\end{align*}
Taking limits as $x\to\infty$ and recalling that $\tau_0(t,x)\to\T-t$ in probability (see \eqref{limprob}) we get
\begin{align*}
0\le w(t,x)\le d_\eps-\delta_0(T-t-\eps)\,c.
\end{align*}
Finally, letting $\eps\to 0$ and using \eqref{limwu} we obtain that $d_\eps\to0$ and
\begin{align*}
0\le w(t,x)\le -\delta_0(T-t-\eps)\,c,
\end{align*}
hence a contradiction. This means that $\cS\cap([t,T)\times\realip)\neq\emptyset$ for all $t<T$.
\vs{+4pt}

\emph{Step $2$.} Here we prove that $\cS\cap((t_1,t_2)\times\realip)\neq\emptyset$ for all $t_1<t_2$ in $[0,T]$. Let us argue by contradiction and assume that indeed $\cS\cap((t_1,t_2)\times\realip)=\emptyset$ for a given couple $t_1<t_2$ in $[0,T]$. With no loss of generality we can set
\[
t_2:=\sup\{t>t_1\,:\,\cS\cap((t_1,t)\times\realip)=\emptyset\}
\]
and we know from step 1 above that $t_2<T$. The idea is to show that indeed $t_2=t_1$, hence a contradiction.

Fix $t'\in[t_2,T)$ such that there exists $x'\in\realip$ with $(t',x')\in\cS$. Since $w_x\le 0$ then $\{t'\}\times[x',+\infty)\in\cS$ and we define
\begin{align}\label{tau'}
\tau':=\inf\{s\ge0\,:\,X^x_s\le x'\}.
\end{align}
As in \eqref{limw00} we use the martingale property of \eqref{martw} up to the stopping time $\zeta:=\tau_*\wedge\tau_0\wedge(t'-t_1)$, where here $\tau_*=\tau_*(t_1,x)$ and $\tau_0=\tau_0(t_1,x)$. In particular, noticing that $\prob(\tau_*\ge t_2-t_1)=1$ and therefore $\prob(\zeta\ge \tau_0\wedge(t_2-t_1))=1$, we get
\begin{align}\label{limw01}
0\le w(t_1,x)\le&\, -\delta_0(t_2-t_1)\,c\,\prob(\tau_0>t_2-t_1)\nonumber\\[+4pt]
&+\media\left[w(t_1+\tau_0,\gamma_0(t_1+\tau_0))\mathds{1}_{\{\tau_0<\tau_*\wedge(t'-t_1)\}}\right]\\[+3pt]
&+\media\left[w(t',X^x_{t'-t_1})\mathds{1}_{\{t'-t_1<\tau_*\wedge\tau_0\}}\right].\notag
\end{align}

On the event $\{\tau_0<\tau_*\wedge(t'-t_1)\}$ we have $\gamma_0(t_1+\tau_0)\le\sup_{t_1\le s \le t'}\gamma_0(s)\le \gamma'$ for some $0\le \gamma'<\infty$ that depends only on $t_1$ and $t'$. It follows that $w(t_1+\tau_0,X^x_{\tau_0})\le C$ for some $C>0$ that also depends only on $t_1$ and $t'$. Moreover $\{t'-t_1<\tau_*\wedge\tau_0\}=\{t'-t_1<\tau_*\wedge\tau_0\}\cap\{\tau'\le t'-t_1\}$ as otherwise the process would have crossed the vertical half-line $\{t'\}\times[x',+\infty)\in\cS$. For the same reason on the event $\{t'-t_1<\tau_*\wedge\tau_0\}$ one has $X^x_{t'-t_1}\le x'$ which then implies $w(t',X^x_{t'-t_1})\le C'$ for some $C'>0$ that only depends on $t_1$, $t'$ and $x'$.

Collecting these facts we have from \eqref{limw01}
\begin{align}\label{contr00}
0\le w(t_1,x)\le&\,C\,\prob(\tau_0\le t'-t_1)+ C'\,\prob(\tau'\le t'-t_1)-\delta_0(t_2-t_1)\,c\,\prob(\tau_0>t_2-t_1).
\end{align}
We take limits as $x\to\infty$. From the same argument as in \eqref{limprob} we get $\prob(\tau_0\le t'-t_1)\to 0$, $\prob(\tau_0>t_2 -t_1)\to 1$ and similarly
\begin{align*}
\prob(\tau'\le t'-t_1)=\prob\left(\inf_{0\le s\le t'-t_1}X^x_s\le x'\right)=\prob\left(\inf_{0\le s\le t'-t_1}X^1_s\le \frac{x'}{x}\right)\to 0.
\end{align*}
Thus \eqref{contr00} leads to a contradiction and it must be $t_2=t_1$.
\vs{+6pt}

Steps 1 and 2 above give $(iii)$.
\vs{+6pt}

Finally we can prove \eqref{limw0}. The limit is obvious for $t=\T$ since $w(\T,x)=0$, $x\in\realip$. It is also trivially true for all $t\in[0,T]$ such that $\cS\cap(\{t\}\times\realip)\neq\emptyset$. Therefore it remains to prove it for $t\in[0,T]$ such that $\cS\cap(\{t\}\times\realip)=\emptyset$.

Take one such $t\in[0,T]$ and fix $t'>t$ such that there exists $x'\in\realip$ with $(t',x')\in\cS$ (this must exist due to $(iii)$). Recall $\tau'$ as in \eqref{tau'}, then repeating the martingale argument and the estimates above we obtain
\begin{align*}
0\le w(t,x)\le&\,\media\left[w(t+\tau_0,\gamma_0(t+\tau_0))\mathds{1}_{\{\tau_0<\tau_*\wedge(t'-t)\}}\right]+\media\left[w(t',X^x_{t'-t})\mathds{1}_{\{t'-t<\tau_*\wedge\tau_0\}}\right]\\
\le&\,C\,\prob(\tau_0\le t'-t)+ C'\,\prob(\tau'\le t'-t).
\end{align*}
Taking limits as $x\to\infty$ \eqref{limw0} is easily verified.
\hfill$\square$

\subsection*{Proof of uniqueness in Theorem \ref{ThVolt1}} We give the proof only in the case of $\cS=\{(t,x)\,:\,x\le b(t)\}$ as the other case follows by the same arguments. Also, here we assume $\gamma(T)<+\infty$ for simplicity and notice that the argument below can be easily adapted for $\gamma(T)=+\infty$.

Let us assume there exists a continuous function $c:[0,T]\to \realip$ with $c(T)=\gamma(T)$, with $c(t)\le \gamma(t)$ for all $t\in[0,T]$ and such that $c$ solves
\begin{align}\label{inteq}
G(t,c(t))\!=\! \media &\, \bigg[ e^{-\int_{0}^{T-t}r(t+u)du}G(T,X^{c(t)}_{T-t})\notag\\
& +\!\int_{0}^{T-t}\!e^{-\int_{0}^{s}r(t+u)du}\!\left(\beta(t+s)X^{c(t)}_s \!-\!H(t+s,X^{c(t)}_s)\mathds{1}_{\{X^{c(t)}_s\le c(t+s)\}}\right) ds\bigg].
\end{align}
Then we define a function
\begin{align}
U^c(t,x)\!=\! \media &\, \bigg[ e^{-\int_{0}^{T-t}r(t+u)du}G(T,X^{x}_{T-t})\notag\\
& +\!\int_{0}^{T-t}\!e^{-\int_{0}^{s}r(t+u)du}\!\left(\beta(t+s)X^{x}_s \!-\!H(t+s,X^{x}_s)\mathds{1}_{\{X^{x}_s\le c(t+s)\}}\right) ds\bigg]
\end{align}
and notice that this is the analogue for $c$ of the value function $V$ in \eqref{reprV1}. Notice also that $U^c(T,x)=G(T,x)$ for $x\ge 0$.

Thanks to strong Markov property, it is not hard to show that the process $(\UU_s)_{s\in[0,T-t]}$ is a martingale, where
\begin{align}
\UU_s:=& e^{-\int_{0}^{s}r(t+u)du}U^c(t+s,X^{x}_{s})\notag\\
& +\!\int_{0}^{s}\!e^{-\int_{0}^{u}r(t+v)dv}\!\left(\beta(t+u)X^{x}_u \!-\!H(t+u,X^{x}_u)\mathds{1}_{\{X^{x}_u\le c(t+u)\}}\right) du.
\end{align}
Then the same argument also implies that $(\VV_s)_{s\in[0,T-t]}$ is a martingale as well, where
\begin{align}
\VV_s:=& e^{-\int_{0}^{s}r(t+u)du}V(t+s,X^{x}_{s})\notag\\
& +\!\int_{0}^{s}\!e^{-\int_{0}^{u}r(t+v)dv}\!\left(\beta(t+u)X^{x}_u \!-\!H(t+u,X^{x}_u)\mathds{1}_{\{X^{x}_u\le b(t+u)\}}\right) du.
\end{align}

Now we proceed in four steps as it is customary in the literature (see \cite{Pe05} and \cite{PS}).
\vs{+4pt}

\emph{Step 1}. First we show that $U^c(t,x)=G(t,x)$ for all $x\le c(t)$, $t\in[0,T]$. The statement is trivial for $(t,x)\in\{T\}\times\realip$ or for $x=c(t)$, since it follows by definition of $c(\cdot)$ and $U^c$. Let us now take $t<T$ and $x< c(t)$, define $\sigma_c:=\inf\{s\ge 0\,:\, X^x_s\ge c(t+s)\}\wedge(T-t)$ and use the martingale property of $\UU$ to obtain
\begin{align}
U^c(t,x)\!=\! \media &\, \bigg[ e^{-\int_{0}^{\sigma_c}r(t+u)du}U^c(t+\sigma_c,X^{x}_{\sigma_c})\notag\\
& +\!\int_{0}^{\sigma_c}\!e^{-\int_{0}^{s}r(t+u)du}\!\left(\beta(t+s)X^{x}_s \!-\!H(t+s,X^{x}_s)\right) ds\bigg].
\end{align}
Using that $U^c(t+\sigma_c,X^{x}_{\sigma_c})=G(t+\sigma_c,X^x_{\sigma_c})$, $\prob$-a.s., because $c$ solves \eqref{inteq} and $U^c(T,x)=G(T,x)$ we obtain
\begin{align}
U^c(t,x)\!=\! \media &\, \bigg[ e^{-\int_{0}^{\sigma_c}r(t+u)du}G(t+\sigma_c,X^{x}_{\sigma_c})\notag\\
& +\!\int_{0}^{\sigma_c}\!e^{-\int_{0}^{s}r(t+u)du}\!\left(\beta(t+s)X^{x}_s \!-\!H(t+s,X^{x}_s)\right) ds\bigg]=G(t,x),
\end{align}
where the final equality also uses that $\beta(t)x-H(t,x)=-(G_t+\cL G-r(\cdot)G)(t,x)$ and Dynkin's formula.
\vs{+4pt}

\emph{Step 2}. Now we show that $V(t,x)\ge U^c(t,x)$. The claim is trivial for $x\le c(t)$, $t\in[0,T)$ due to step 1. Similarly $U^c(T,x)=V(T,x)=G(T,x)$ for $x\in\realip$. Then, fix $t<T$, take $x>c(t)$ and denote $\tau_c:=\inf\{s\ge 0\,:\,X^x_s\le c(t+s)\}\wedge(T-t)$. Using the martingale property of $\UU$ and \eqref{inteq} we obtain
\begin{align}
U^c(t,x)\!=&\media \bigg[ e^{-\int_{0}^{\tau_c}r(t+u)du}U^c(t+\tau_c,X^{x}_{\tau_c})+\!\int_{0}^{\tau_c}\!e^{-\int_{0}^{s}r(t+u)du}\!\beta(t+s)X^{x}_s  ds\bigg]\notag\\
=&\media \bigg[ e^{-\int_{0}^{\tau_c}r(t+u)du}G(t+\tau_c,X^{x}_{\tau_c})+\!\int_{0}^{\tau_c}\!e^{-\int_{0}^{s}r(t+u)du}\!\beta(t+s)X^{x}_s  ds\bigg]\le V(t,x).
\end{align}
\vs{+4pt}

\emph{Step 3}. Here we prove that $c(t)\ge b(t)$, $t\in[0,T]$. Assume there is $t\in[0,T)$ such that $c(t)<b(t)$, take $x\le c(t)$ and denote $\sigma_b:=\inf\{s\ge 0\,:\, X^x_s\ge b(t+s)\}\wedge(T-t)$. Using now the martingale property of both $\UU$ and $\VV$ we obtain
\begin{align}
\label{V1}V(t,x)\!=\! \media &\, \bigg[ e^{-\int_{0}^{\sigma_b}r(t+u)du}V(t+\sigma_b,X^{x}_{\sigma_b})\notag\\
& +\!\int_{0}^{\sigma_b}\!e^{-\int_{0}^{s}r(t+u)du}\!\left(\beta(t+s)X^{x}_s \!-\!H(t+s,X^{x}_s)\right) ds\bigg]\\
\label{U1}U^c(t,x)\!=\! \media &\, \bigg[ e^{-\int_{0}^{\sigma_b}r(t+u)du}U^c(t+\sigma_b,X^{x}_{\sigma_b})\notag\\
& +\!\int_{0}^{\sigma_b}\!e^{-\int_{0}^{s}r(t+u)du}\!\left(\beta(t+s)X^{x}_s \!-\!H(t+s,X^{x}_s)\mathds{1}_{\{X^{x}_s\le c(t+s)\}}\right) ds\bigg].
\end{align}
Notice that $U^c(t,x)=V(t,x)$ because $x\le c(t)<b(t)$ and recall that $V(t+\sigma_b,X^{x}_{\sigma_b})\ge U^c(t+\sigma_b,X^{x}_{\sigma_b})$. Then, subtracting \eqref{U1} from \eqref{V1} we get
\begin{align}\label{2}
0\le \! \media &\, \bigg[\!\int_{0}^{\sigma_b}\!e^{-\int_{0}^{s}r(t+u)du}\!H(t+s,X^{x}_s)\mathds{1}_{\{X^{x}_s>c(t+s)\}} ds\bigg].
\end{align}
Since $H(t+s,X^{x}_s)<0$ for $s\le \sigma_b$ (recall that $H<0$ below $\gamma(\cdot)$) and since $\prob(X^{x}_s>c(t+s),\,\sigma_b>s)>0$ for any $s$ sufficiently small (thanks to continuity of $b$ and $c$), the inequality in \eqref{2} is a contradiction hence it cannot be $c(t)< b(t)$.
\vs{+4pt}

\emph{Step 4}. In this final step we show that $c(t)\le b(t)$ for $t\in[0,T]$, so that from step 3 we conclude $c(t)=b(t)$ for $t\in[0,T]$. Assume that there is $t\in[0,T)$ such that $c(t)>b(t)$ and take $x\in(b(t),c(t))$. Then letting $\tau_b=\inf\{s\ge 0\,:\,X^x\le b(t+s)\}\wedge(T-t)$ and using again the martingale property of both $\UU$ and $\VV$ we get
\begin{align}
\label{V2}V(t,x)\!=\! \media &\, \bigg[ e^{-\int_{0}^{\tau_b}r(t+u)du}V(t+\tau_b,X^{x}_{\tau_b})+\!\int_{0}^{\tau_b}\!e^{-\int_{0}^{s}r(t+u)du}\!\beta(t+s)X^{x}_s ds\bigg]\\
\label{U2}U^c(t,x)\!=\! \media &\, \bigg[ e^{-\int_{0}^{\tau_b}r(t+u)du}U^c(t+\tau_b,X^{x}_{\tau_b})\notag\\
& +\!\int_{0}^{\tau_b}\!e^{-\int_{0}^{s}r(t+u)du}\!\left(\beta(t+s)X^{x}_s \!-\!H(t+s,X^{x}_s)\mathds{1}_{\{X^{x}_s\le c(t+s)\}}\right) ds\bigg].
\end{align}
We notice that $V(t+\tau_b,X^{x}_{\tau_b})=U^c(t+\tau_b,X^{x}_{\tau_b})=G(t+\tau_b,X^{x}_{\tau_b})$ due to step 3 and step 1. Also we recall that $V(t,x)\ge U^c(t,x)$ due to step 2. Then subtracting \eqref{U2} from \eqref{V2} gives
\begin{align}\label{3}
0\le \media &\, \bigg[\!\int_{0}^{\tau_b}\!e^{-\int_{0}^{s}r(t+u)du}\!H(t+s,X^{x}_s)\mathds{1}_{\{X^{x}_s\le c(t+s)\}}ds\bigg].
\end{align}
Since $c(\cdot)\le \gamma(\cdot)$ and $c(t)<\gamma(t)$ then $H(t+s,X^{x}_s)\mathds{1}_{\{X^{x}_s\le c(t+s)\}}\le0$ and it is strictly negative for all $s\in[0,\tau_b)$ sufficiently small. Moreover, continuity of $c(\cdot)$ and $b(\cdot)$ imply that $\prob(X^{x}_s\le c(t+s),\,\tau_b>s)>0$ for all $s>0$ sufficiently small. Hence \eqref{3} gives a contradiction and it must be $c(t)\le b(t)$.
\hfill$\square$
\vspace{+15pt}

\noindent{\bf Acknowledgment}: This work was financially supported by Sapienza University of Rome, research project ``\emph{Polizze `Deferred Income Annuities' per la previdenza complementare: un modello stocastico di valutazione}'', grant no.~RP11615500B7B502. T.~De Angelis was also partially supported by the EPSRC grant EP/R021201/1, ``\emph{A probabilistic toolkit to study regularity of free boundaries in stochastic optimal control}''.

Finally, we would like to thank three anonymous referees, whose valuable comments helped improving the quality of our paper.

\end{document}